\newcommand\vldbdoi{XX.XX/XXX.XX}
\newcommand\vldbpages{XXX-XXX}
\newcommand\vldbvolume{14}
\newcommand\vldbissue{1}
\newcommand\vldbyear{2020}
\newcommand\vldbauthors{\authors}
\newcommand\vldbtitle{\shorttitle} 
\newcommand\vldbavailabilityurl{https://github.com/VIDA-NYU/SamplingMethodsForInnerProductSketching}
\newcommand\vldbpagestyle{plain} 
\newcommand{\algrule}[1][.2pt]{\par\vskip.2\baselineskip\hrule height #1\par\vskip.2\baselineskip}
\algrenewcommand\algorithmicrequire{\textbf{Input:}}
\algrenewcommand\algorithmicensure{\textbf{Output:}}
\newcommand{\cmark}{\ding{51}}%
\newcommand{\xmark}{\ding{55}}%
\def\withnotes{0} 
    \newcommand{\chris}[1]{\textcolor{violet}{Chris: #1}}
    \newcommand{\majid}[1]{\textcolor{brown}{Majid: #1}}
    \newcommand{\juliana}[1]{\textcolor{red}{Juliana: #1}}
    \newcommand{\jf}[1]{\textcolor{red}{Juliana: #1}}
    \newcommand{\as}[1]{\textcolor{pink}{Aecio: #1}}
    \newcommand{\haoxiang}[1]{\textcolor{olive}{Haoxiang: #1}}
  \newcommand{\chris}[1]{\xspace}
  \newcommand{\majid}[1]{\xspace}
  \newcommand{\juliana}[1]{\xspace}
  \newcommand{\jf}[1]{\xspace}
  \newcommand{\as}[1]{\xspace}
  \newcommand{\haoxiang}[1]{\xspace}
\definecolor{HighlightColor}{rgb}{0.05,0.05,0.70}
\def\submission{0}
    \newcommand{\subm}[1]{#1}
    \newcommand{\arxiv}[1]{\xspace}
    \newcommand{\revised}[1]{{\color{HighlightColor}#1\xspace}}
    \newcommand{\arxiv}[1]{#1}
    \newcommand{\subm}[1]{\xspace}
    \newcommand{\revised}[1]{{#1\xspace}}
  \newcommand{\cAAAI}[1]{AAAI\ Conference\ on\ Artificial (AAAI)}
\newcommand{\bs}[1]{\boldsymbol{#1}}
\newcommand{\bv}[1]{\mathbf{#1}}
\newcommand{\Var}{\operatorname{Var}}
\DeclareMathOperator*{\E}{\mathbb{E}}
\DeclareMathOperator*{\R}{\mathbb{R}}
\newtheorem{theorem}{Theorem}
\newtheorem{corollary}[theorem]{Corollary}
\newtheorem{lemma}[theorem]{Lemma}
\newcommand{\thresholdsampling}{Threshold Sampling\xspace}
\newcommand{\thresholdandpriority}{Threshold and Priority Sampling\xspace}
\newcommand{\TSuniform}{TS-uniform\xspace}
\newcommand{\TSweighted}{TS-weighted\xspace}
\newcommand{\prioritysampling}{Priority Sampling\xspace}
\newcommand{\PSuniform}{PS-uniform\xspace}
\newcommand{\PSweighted}{PS-weighted\xspace}
\newcommand{\dartmh}{DartMinHash\xspace}
\newcommand{\wmh}{WMH\xspace}
\newcommand{\wmhsexperiments}{MH-weighted\xspace}
\newcommand{\runtime}{run-time\xspace}
\newcommand{\myparagraph}[1]{\smallskip\noindent{\textbf{#1.}}}
\newcommand{\hide}[1]{}
\begin{document}
	\title{Sampling Methods for Inner Product Sketching}

	 \author{Majid Daliri, Juliana Freire, Christopher Musco, A\'ecio Santos, Haoxiang Zhang}
	 \affiliation{%
		 	\institution{New York University}
		 }
	 \email{{daliri.majid,juliana.freire, cmusco,aecio.santos, haoxiang.zhang}@nyu.edu}
  
	
	\begin{abstract}
    Recently, Bessa et al. (PODS 2023) showed that sketches based on coordinated weighted sampling theoretically and empirically outperform popular linear sketching methods like Johnson-Lindentrauss projection and CountSketch for the ubiquitous problem of \revised{inner product estimation. 
	We further develop this finding by introducing} and analyzing two alternative sampling-based \revised{methods}. In contrast to the computationally expensive algorithm in Bessa et al., our methods run in linear time (to compute the sketch) and perform better in practice, significantly beating linear sketching on a variety of tasks. 
    For example, they provide state-of-the-art results for estimating the correlation between columns in unjoined tables, a problem that we show how to reduce to inner product estimation in a black-box way. While based on known sampling techniques (threshold and priority sampling) we introduce significant new theoretical 
    analysis to prove approximation guarantees for our methods. 
	\end{abstract}
	
	\maketitle
	
	\pagestyle{\vldbpagestyle}
	\begingroup\small\noindent\raggedright\textbf{PVLDB Reference Format:}\\
	\vldbauthors. \vldbtitle. PVLDB, \vldbvolume(\vldbissue): \vldbpages, \vldbyear.\\
	\href{https://doi.org/\vldbdoi}{doi:\vldbdoi}
	\endgroup
	\begingroup
	\renewcommand\thefootnote{}\footnote{\noindent
		This work is licensed under the Creative Commons BY-NC-ND 4.0 International License. Visit \url{https://creativecommons.org/licenses/by-nc-nd/4.0/} to view a copy of this license. For any use beyond those covered by this license, obtain permission by emailing \href{mailto:info@vldb.org}{info@vldb.org}. Copyright is held by the owner/author(s). Publication rights licensed to the VLDB Endowment. \\
		\raggedright Proceedings of the VLDB Endowment, Vol. \vldbvolume, No. \vldbissue\ %
		ISSN 2150-8097. \\
		\href{https://doi.org/\vldbdoi}{doi:\vldbdoi} \\
	}\addtocounter{footnote}{-1}\endgroup
	
	\ifdefempty{\vldbavailabilityurl}{}{
		\vspace{.3cm}
		\begingroup\small\noindent\raggedright\textbf{PVLDB Artifact Availability:}\\
		The source code, data, and/or other artifacts have been made available at \url{\vldbavailabilityurl}.
		\endgroup
	}
	\vspace{-1.5em}
\section{Introduction}
\label{sec:intro}
We study methods for approximating the inner product $\langle \bv{a}, \bv{b}\rangle = \sum_{i=1}^n \bv{a}_i\bv{b}_i$ between two length $n$ vectors $\bv{a}$ and $\bv{b}$. \revised{We} are interested in algorithms that {independently} compute compact \emph{sketches} $\mathcal{S}(\bv{a})$ and $\mathcal{S}(\bv{b})$ of $\bv{a}$ and $\bv{b}$,
and approximate $\langle \bv{a}, \bv{b}\rangle$ using only the information in these sketches. $\mathcal{S}(\bv{a})$ and $\mathcal{S}(\bv{b})$ should take much less than $n$ space to store, allowing them to be quickly retrieved from disk or transferred over a network.
Additionally, both the sketching procedure $\bv{a} \rightarrow \mathcal{S}(\bv{a})$ and the estimation procedure that returns an approximation to $\langle \bv{a},\bv{b}\rangle$ should be computationally efficient\revised{, ideally running in linear time.}
\revised{We note that computing an inner product between two length $n$ vectors naively takes just $O(n)$ time. As such, the goal of sketching methods is not to speed up a single inner product, but rather to speed up many. For example, the methods we study can compute sketches of size $m$ for a collection of $D$, length $n$ vectors in $O(nD)$ time. We can then estimate all pairwise inner products between those vectors in $O(D^2 m)$ time, which is significantly faster than the baseline $O(D^2 n)$ time when $m \ll n$.}
%
Sketching methods for the inner product have been studied for decades and find applications throughout data science and database applications,
They can be used to quickly compute document similarity, to speed up the evaluation of machine learning models, and to estimate quantities like join size~\cite{SaltonWongYang:1975, AlonMatiasSzegedy:1999,Achlioptas:2003,RusuDobra:2008,CormodeGarofalakisHaas:2011}. Recently, inner product sketching has found applications in scalable dataset search and augmentation, where sketches can be used to estimate 
correlations between columns in unjoined tables~\cite{SantosBessaChirigati:2021}. \revised{In such applications, we have a large repository of $D$ vectors that we wish to compare against a query vector using inner products. By preprocessing the database with sketching, 
we can efficiently evaluate new queries in much less than the naive $O(Dn)$ time.
}


\subsection{Prior Work}
\noindent\textbf{Inner Product Estimation via Linear Sketching.}
Until recently, all sketching algorithms with 
strong worst-case accuracy guarantees for approximating the inner product between arbitrary inputs were based on \emph{linear sketching}. Such methods include Johnson-Lindenstrauss random projection (JL) \cite{Achlioptas:2003, DasguptaGupta:2003}, the closely related AMS sketch \cite{AlonMatiasSzegedy:1999,AlonGibbonsMatias:1999}, and the CountSketch algorithm \revised{\cite{CharikarChenFarach-Colton:2002,CormodeGarofalakis:2005}.} These methods are considered ``linear'' because the sketching operation $\bv{a} \rightarrow \mathcal{S}(\bv{a})$ is a linear map, meaning that $\mathcal{S}(\bv{a}) = \bs{\Pi}\bv{a}$ for a matrix $\bs{\Pi}\in \R^{m\times n}$. $\bs{\Pi}$ is typically chosen at random and its row count $m$ is equal to the size of the sketch $\mathcal{S}(\bv{a})$. To estimate the inner product between  $\bv{a}$ and $\bv{b}$, the standard approach is to 
simply return $\langle \mathcal{S}(\bv{a}), \mathcal{S}(\bv{b})\rangle = \langle \bs{\Pi}\bv{a}, \bs{\Pi}\bv{b}\rangle$.
For all common linear sketching methods (including those listed above), it can be shown (see e.g., \cite{ArriagaVempala:2006}) that, if we choose the sketch size $m = O\left(1/\epsilon^2\right)$, then with high probability:
\begin{align}
\label{eq:jl_guar}
\left|\langle \mathcal{S}(\bv{a}), \mathcal{S}(\bv{b})\rangle - \langle\bv{a},\bv{b}\rangle\right| \leq \epsilon \|\bv{a}\|_2 \|\bv{b}\|_2.
\end{align}
Here $\|\bv{x}\|_2 = \sqrt{\sum_{i=1}^n \bv{x}_i^2}$ denotes the Euclidean norm of a vector $\bv{x}$.

\begin{table*}[t]
\begin{center}
\begin{small}
    \def\arraystretch{1.4}%
	\begin{tabular}{|p{0.268\linewidth}|p{0.255\linewidth}|p{0.267\linewidth}|p{0.12\linewidth}|}
		\hline
		\textbf{Method}  & \textbf{High probability error guarantee for sketch of size $m = O(1/\epsilon^2)$} & \textbf{Time to compute sketch for length $n$ vector with $N$ non-zero entries} & \textbf{Strict bound on sketch size?}\\
		\hline
		JL Projection/AMS Sketch \cite{AlonMatiasSzegedy:1999,ArriagaVempala:2006} & $\epsilon \cdot\|\bv{a}\|_2 \|\bv{b}\|_2$ & $O(Nm)$ & \cmark\\ 
	       \hline
    CountSketch\revised{/Fast-AGMS \cite{CharikarChenFarach-Colton:2002,CormodeGarofalakis:2005}} & $\epsilon \cdot\|\bv{a}\|_2 \|\bv{b}\|_2$ & $O(N)$ & \cmark \\
          \hline
		Weighted MinHash (\wmh) \cite{BessaDFMMSZ:2023} & $\epsilon \cdot \max\left(\|\bv{a}_{\mathcal{I}}\|_2\|\bv{b}\|_2, \|\bv{a}\|_2\|\bv{b}_{\mathcal{I}}\|_2 \right)$ & $O(Nm\log n)$ & \cmark  \\
		  \hline
        
        \textbf{\thresholdsampling} & $\epsilon \cdot \max\left(\|\bv{a}_{\mathcal{I}}\|_2\|\bv{b}\|_2, \|\bv{a}\|_2\|\bv{b}_{\mathcal{I}}\|_2 \right)$ & $O(N)$ & \xmark\\
		\hline
        \textbf{\prioritysampling} & $\epsilon \cdot \max\left(\|\bv{a}_{\mathcal{I}}\|_2\|\bv{b}\|_2, \|\bv{a}\|_2\|\bv{b}_{\mathcal{I}}\|_2 \right)$ & $O(N\log m)$ & \cmark\\
		\hline
	\end{tabular}
         \end{small}
          \end{center}
	\caption{Comparison of error guarantees and computational cost for sketching methods when used to estimate the inner product between vectors $\bv{a}$ and $\bv{b}$. Note that $\epsilon \cdot \max\left(\|\bv{a}_{\mathcal{I}}\|_2\|\bv{b}\|_2, \|\bv{a}\|_2\|\bv{b}_{\mathcal{I}}\|_2 \right)$ is always a better  guarantee than $\epsilon \cdot\|\bv{a}\|_2 \|\bv{b}\|_2$, and often significantly so when $\bv{a}$ and $\bv{b}$ are sparse with limited overlap between non-zero entries. Our Threshold and \prioritysampling methods obtain this better bound while matching or nearly matching the fast runtime of the less accurate CountSketch method.}
	\label{tab:error_guarantees}
	
\vspace{-.3cm}
\end{table*}

\revised{\myparagraph{Better Accuracy via Weighted MinHash}
While \eqref{eq:jl_guar} is a strong guarantee, it was recently improved by Bessa et al. \cite{BessaDFMMSZ:2023}, who introduce a method based on the popular Weighted MinHash (WMH) algorithm
\cite{ChiZhu:2017,Shrivastava:2016,ManasseMcSherryTalwar:2010,GollapudiPanigrahy:2006}. Like unweighted MinHash and techniques such as conditional random sampling \cite{Broder:1997,LiChurchHastie:2006}, 
the \wmh sketch contains a subsample of entries from $\bv{a}$ and $\bv{b}$ that can be used to approximate the inner product.} Importantly, entries with higher absolute value are sampled with higher probability, since they can contribute more to the inner product sum $\langle \bv{a}, \bv{b}\rangle = \sum_{i=1}^n \bv{a}_i\bv{b}_i$.
%
Using sketches of size $O(1/\epsilon^2)$, \wmh achieves accuracy: 
\begin{align}
\label{eq:wmh_guar}
\left|\langle \mathcal{S}(\bv{a}), \mathcal{S}(\bv{b})\rangle - \langle\bv{a},\bv{b}\rangle\right| \leq \epsilon \max\left(\|\bv{a}_\mathcal{I}\|_2\|\bv{b}\|_2, \|\bv{a}\|_2\|\bv{b}_\mathcal{I}\|_2\right).
\end{align}
Here $\mathcal{I} = \{i: \bv{a}[i] \neq 0 \text{ and } \bv{b}[i] \neq 0 \}$ is the set of all indices in the \emph{intersection} of the supports of $\bv{a}$ and $\bv{b}$, and $\bv{a}_{\mathcal{I}}$ and $\bv{b}_{\mathcal{I}}$ denote the vectors restricted to the indices in $\mathcal{I}$.\footnote{Prior to the work of \cite{BessaDFMMSZ:2023}, the stronger guarantee of \eqref{eq:wmh_guar} was known to be obtainable for the special case of 
\revised{inner product of binary vectors, which corresponds to the set intersection problem \cite{PaghStockelWoodruff:2014}.
}} Since we always 
have $\|\bv{a}_{\mathcal{I}}\|_2 \leq \|\bv{a}\|_2$ and $\|\bv{b}_{\mathcal{I}}\|_2 \leq \|\bv{b}\|_2$, the error in \eqref{eq:wmh_guar} is always less or equal to the error in \eqref{eq:jl_guar} for the linear sketching methods. 

As confirmed by experiments in \cite{BessaDFMMSZ:2023}, the improvement over \revised{linear sketching} can be significant in applications where $\bv{a}$ and $\bv{b}$ are sparse and their non-zero entries only overlap at a small fraction of indices.  I.e., when $|\mathcal{I}|$ is much smaller than the number of non-zeros in $\bv{a}$ and $\bv{b}$. This is common when inner product sketches are used for data discovery, either to estimate join-sizes or correlations between unjoined tables~\cite{ZhuNargesianPu:2016, Castro-FernandezMinNava:2019, YangZhangZhang:2019}. In these applications, overlap between non-zeros in $\bv{a}$ and $\bv{b}$ corresponds to overlap between the keys of the 
tables being joined, which is often small. \revised{For example, consider a setting where we want to find additional data for use in taxi demand prediction. Given a table of 2022-2023 taxi trip data, we would like to augment it using weather information available in a table of historical weather data from the last 50 years; this leads to just a $4\%$ overlap in keys. More examples are discussed in \Cref{sec:applications}.}


\myparagraph{Limitations of \wmh sketches}
While \wmh provides better accuracy than linear sketching, it has important limitations.
Notably, the method has high computational complexity, requiring $O(Nm\log n)$ time to produce a sketch of size $m$ from a length $n$ vector $\bv{a}$ 
with $N\leq n$ non-zero entries.
While this nearly matches the $O(Nm)$ complexity of a JL projection or AMS sketch (which require multiplying $\bv{a}$ by a dense matrix), it is far slower than methods like CountSketch or the $k$-minimum values (KMV) sketch \cite{BeyerHaasReinwald:2007}, which can be applied in $O(N)$ or $O(N\log m)$ time, respectively.
It is possible to reduce the complexity of \wmh to $O(N + m\log m)$ using recent work \cite{Christiani:2020,Ertl:2018}. However, as shown in \Cref{sec:experiments}, even these theoretically faster methods are orders of magnitude slower
in practice than the simpler sketches introduced in our work. 

Beyond computational cost, another disadvantage of \wmh is that it is complex, both to implement and analyze. For example, while a high probability bound is obtained 
in \cite{BessaDFMMSZ:2023}, they are unable to \revised{analyze} the variance of the method. This makes it difficult, for example, \revised{to compute confidence intervals for estimated inner products.} 
Moreover, the \wmh requires careful discretization of the vectors being sketched\revised{, which leads} to large constant factors in the results of \cite{BessaDFMMSZ:2023}. Such factors do not impact the  Big O claim that a sketch of size $O(1/\epsilon^2)$ achieves error guarantee \eqref{eq:wmh_guar}, but matter a lot in practice. Practical accuracy of the method is also negatively impacted by the fact that it samples entries from $\bv{a}$ and $\bv{b}$ \emph{with replacement}, which can lead \revised{to redundancy} in 
the sketches $\mathcal{S}(\bv{a})$ and $\mathcal{S}(\bv{b})$.



\subsection{Our Contributions}
\noindent\textbf{Methods and Theory.} In this paper, we present and analyze two algorithms for inner product sketching that eliminate the limitations of \wmh sketches, while maintaining the same strong theoretical guarantees. Both are based on existing methods for weighted sampling of vectors \emph{without replacement}, but our choice of sampling probabilities, estimation procedure, and theoretical analysis are new, and tailored to the problem of inner product estimation.

The first method we study is based on \emph{Threshold Sampling}
\cite{Flajolet:1990,DuffieldLundThorup:2005}. 
We show that, when used to sample vector entries with probability proportional to their {squared value}, this method produces inner product sketches that yield the same accuracy guarantee as \wmh sketches.
At the same time, the method is extremely simple to implement and can be applied to a vector with $N$ non-zero entries in linear $O(N)$ time. Moreover, unlike \wmh, the analysis of the method is 
straightforward.
Its only disadvantage is that Threshold Sampling produces sketches that \emph{randomly vary} in size. The user can specify a parameter $m$ and is guaranteed that the sketch has size $m$ in \emph{expectation}, and will not exceed $m + O(\sqrt{m})$ with high-probability. 
However, there is no hard bound. 

We address this drawback with an alternative method based on Priority Sampling, which has been widely studied in the \revised{sketching and statistics} literature \cite{Ohlsson:1998,DuffieldLundThorup:2004,SzegedyThorup:2007}. Priority Sampling offers a hard sketch size bound and can \revised{construct a size $m$ sketch in near-linear $O(N\log m)$ time.} \revised{While significantly more challenging to analyze than Threshold Sampling, by introducing a new estimation procedure and building on a recent analysis of Priority Sampling for a different problem (subset sum estimation)~\cite{DaliriFreireMusco:2023}, we are able to show that it enjoys the same guarantees as \wmh.} Our analysis of Priority Sampling is the main theoretical contribution of this paper.


\myparagraph{Experimental Results} In addition to theoretical analysis, we experimentally compare Threshold and \prioritysampling with linear sketching algorithms like JL random projections and CountSketch, as well as  sampling-based sketches like $k$-minimum values (KMV)\footnote{The KMV sketch is not typically thought of as a sketch for estimating inner products between arbitrary vectors, but can be modified to do so. See \cite{BessaDFMMSZ:2023} for details.}, MinHash, and \wmh.
We evaluate these on a variety of applications, including join size estimation and correlation estimation between unjoined tables. 
We introduce an approach to perform 
\emph{join-correlation estimation} \cite{SantosBessaChirigati:2021}  using \emph{any} inner product sketching method (\Cref{sec:applications}) that we believe may be of independent interest.

\revised{Our} Threshold and \prioritysampling methods offer significantly better accuracy than the baselines, beating both linear sketches and \wmh sketches. This indicates that, despite having identical worst-case \revised{guarantees}, the hidden \revised{constants} are smaller for our methods than for \wmh. An optimized version of our sketches tailored to the application of join-correlation estimation outperforms the recently introduced Correlation Sketches method from \cite{SantosBessaChirigati:2021}, which is based on KMV.
 We also test the \runtime efficiency of Threshold and Priority Sampling for sketch construction. Even when \wmh is implemented using the efficient \dartmh algorithm~\cite{Christiani:2020},
our methods are faster by more than an order of magnitude.

\myparagraph{Our Approach} 
\revised{As in \cite{BessaDFMMSZ:2023}, sketches consist of samples from $\bv{a}$ and $\bv{b}$. We estimate the inner product $\sum_{i=1}^n \bv{a}_i\bv{b}_i$ using only a subset of terms in the sum.} Specifically, our estimators are of the form $\sum_{j\in \mathcal{T}} w_j \cdot \bv{a}_j\bv{b}_j$, where $\mathcal{T}$ is a subset of $\{1, \ldots, n\}$ and $\{w_j, j\in \mathcal{T}\}$ are appropriately chosen positive weights.
To compute this estimate, we need to store \emph{both} $\bv{a}_j$ in $\mathcal{S}(\bv{a})$ and $\bv{b}_j$  and $\mathcal{S}(\bv{b})$.
If $\bv{a}$ and $\bv{b}$ are sampled independently at random, the probability of obtaining matching indices in both sketches would be small, thus leading to a small number of usable samples, and a poor inner product estimate. Our Threshold and Priority Sampling methods avoid this issue by using shared random seeds to sample from the vectors in a \emph{coordinated way}, which ensures that if entry $\mathbf{a}_j$ is sampled from $\mathbf{a}$, it is more  likely that the corresponding $\mathbf{b}_j$ is sampled from $\mathbf{b}$. This idea is {not new}: coordinated variants of Threshold and Priority Sampling have been studied in prior work on different problems, as have coordinated variants of related methods like PPSWOR sampling \cite{CohenKaplan:2013, Cohen:2023}. What \emph{is new} is how we apply and analyze such methods 
for the problem of inner product estimation.

Besides \wmh \cite{BessaDFMMSZ:2023}, we are only aware of one prior paper that addresses the inner product estimation problem using coordinated sampling: the ``End-Biased Sampling'' algorithm of \cite{EstanNaughton:2006} can be viewed as a variant of Threshold Sampling where the $i^\text{th}$ entry of $\bv{a}$ is sampled with probability proportional to the magnitude $|\bv{a}_i|$. We instead use the squared magnitude $|\bv{a}_i|^2$. 
While variance bounds are shown in \cite{EstanNaughton:2006}, due to this choice of sampling probability, they fall short of improving on results for linear sketches, i.e., on Eq. \eqref{eq:jl_guar}. Additionally, unlike our work, \cite{EstanNaughton:2006} does not address the issue of how to obtain a fixed-size sketch.
We discuss End-Biased Sampling 
further in \cref{sec:experiments} and fully review related work in \cref{sec:related_work}.

\myparagraph{Summary and Paper Roadmap} 
Our contributions can be summarized as follows:
\begin{itemize}[leftmargin=10pt]
    \item We show how to apply two coordinated sampling methods, Threshold and Priority Sampling, to the inner product sketching problem, invoking these methods with a specific choice of sampling probabilities and estimation procedures. 
    \item We prove that these methods enjoy better theoretical accuracy guarantees than linear sketches, and match the best-known guarantees provided by \wmh~\cite{BessaDFMMSZ:2023} (\Cref{sec:thresholdsampling} and \Cref{sec:prioritysampling}).
    \item We perform an empirical evaluation, showing that Threshold and \prioritysampling outperform state-of-the-art sketches in both accuracy and \runtime on a variety of applications (\Cref{sec:experiments}). 
    \item We show a black-box reduction from one such application, join-correlation estimation, to inner product estimation (\Cref{sec:applications}).
\end{itemize}

\section{\thresholdsampling} \label{sec:thresholdsampling}
We begin by introducing an inner product sketch based on Threshold Sampling, which is a method popularized in computer science by \cite{DuffieldLundThorup:2005}, but long studied in statistics under the name ``Poisson Sampling''.\footnote{A variant of Threshold Sampling with \emph{uniform} probabilities was also studied under the name ``adaptive sampling'' by Wegman in 1984 and later by Flajolet \cite{Flajolet:1990}.} 
Our algorithm based on Threshold Sampling is straightforward to implement and analyze, but still matches the strong theoretical guarantees of \wmh sketches \cite{BessaDFMMSZ:2023}, while improving on runtime and 
performance. Its presentation serves as a warm-up for our Priority Sampling method (\Cref{sec:prioritysampling}), which is more difficult to analyze, but 
has the advantage of a deterministic sketch size.

\smallskip
\noindent\textbf{Sketching.} 
\revised{As discussed, the goal of our sketching methods (and of \wmh) is to randomly sample entries from $\bv{a}$ and $\bv{b}$, and to use those samples to estimate the inner product sum $\langle \bv{a}, \bv{b}\rangle = \sum_{i=1}^n \bv{a}_i\bv{b}_i$. To obtain strong guarantees, we need the samples to be both \emph{coordinated} and \emph{weighted}. In particular, since they contribute more to the inner product, entries with larger magnitude should be sampled with higher probability. Moreover, coordination requires that $\bv{b}_j$ is more likely to be sampled if $\bv{a}_j$ is.}
Ensuring coordination is not obvious because, in the sketching setting we consider, $\mathcal{S}(\bv{a})$ and $\mathcal{S}(\bv{b})$ need to be computed completely independently from each other: when we sample entries from $\bv{b}$ to form $\mathcal{S}(\bv{b})$, we have no knowledge about what entries were sampled from $\bv{a}$ to form $\mathcal{S}(\bv{a})$.

\begin{figure}[ht]
\vspace{-1em}
\begin{algorithm}[H]
    \caption{\thresholdsampling}
    \label{alg:threshold_sampling}
    \begin{algorithmic}[1]
        \Require Length $n$ vector $\bv{a}$, random seed $s$, target sketch size $m$.
        \Ensure Sketch $\mathcal{S}(\bv{a}) = \{K_{\bv{a}}, V_{\bv{a}}, \tau_{\bv{a}}\}$, where $K_{\bv{a}}$ is a subset of indices from $\{1, \ldots, n\}$ and $V_{\bv{a}}$ contains $\bv{a}_i$ for all $i\in K_{\bv{a}}$.
        \algrule
        \State Use random seed $s$ to select a uniformly random hash function $h: \{1,..., n\}\rightarrow [0,1]$. Initialize $K_{\bv{a}}$ and $V_{\bv{a}}$ to be empty lists.
        \For{$i$ such that $\bv{a}[i]\neq 0$}
            \State Set threshold $\tau_i =  m\cdot \frac{\bv{a}_i^2}{\|\bv{a}\|_2^2}$.
                \If{$h(i) \leq \tau_i$}
              \State Append $i$ to $K_{\bv{a}}$, append $\bv{a}_i$ to $V_{\bv{a}}$.
              \EndIf
        \EndFor
        \State \Return $\mathcal{S}(\bv{a}) = \{K_{\bv{a}}, V_{\bv{a}}, \tau_{\bv{a}}\}$ where $\tau_{\bv{a}} = m/\|\bv{a}\|_2^2$.
    \end{algorithmic}
    \vspace{-.2em}
\end{algorithm}
\vspace{-2em}
\end{figure}

Threshold Sampling achieves sampling that is both weighted and coordinated using a simple  technique. We first assume access to a hash function $h:\{1, \ldots, n\}\rightarrow [0,1]$ that maps indices to uniformly random real numbers in the interval $[0,1]$. Assuming access to such a function is standard in the literature, and we note that, in practice, $h$ can be replaced with a pseudorandom function that maps to a sufficiently large discrete set,  e.g., to $\{1/U, 2/U \ldots, 1\}$ for $U = 2^{32}$ or some other large integer \cite{BeyerHaasReinwald:2007, CormodeGarofalakisHaas:2011}. 
As shown in \Cref{alg:threshold_sampling} and illustrated in \Cref{fig:threshold_sampling_on_example}, we sketch the vector $\mathbf{a}$ by selecting a threshold, $\tau_i$ for each index (Line 3). We then hash all indices $i$ for which $\bv{a}[i]\neq 0$ to the interval $[0,1]$, and keep as a sample all entries of $\bv{a}$ for which the hash value 
$h(i)$ is below the threshold (Line 4,5).

Concretely, we choose the threshold $\tau_i = m\cdot {\bv{a}_i^2}/{\|\bv{a}\|_2^2}$. Here $m$ is a fixed parameter that controls the size of the final sketch, $\mathcal{S}(\bv{a})$, returned by \Cref{alg:threshold_sampling}. So, we see that the threshold $\tau_i$ is higher for indices $i$ where $\bv{a}_i^2$ is larger. Thus, larger entries in the vector are sampled with higher probability. Note that this is in contrast to ``End-Biased Sampling'' \cite{EstanNaughton:2006}, which sets $\tau_i = m\cdot \frac{|\bv{a}_i|}{\|\bv{a}\|_1}$, where $\|\bv{a}\|_1 = \sum_{i=1}^n |\bv{a}_i|$ is the $\ell_1$ norm. While this choice also aligns with the goal that larger entries should be sampled with higher probability, it does not lead to the same strong theoretical guarantees.

In addition to collecting a weighted sample, since the \emph{same hash function} $h$ is used when sampling from both $\bv{a}$ and $\bv{b}$, the samples are coordinated. If $h(i)$ is small, we are more likely sample \emph{both} $\bv{a}_i$ and $\bv{b}_i$. The same idea is present in common methods for unweighted coordinated sampling like MinHash or the KMV sketch \cite{Broder:1997,BeyerHaasReinwald:2007}.

Finally, we note that the sketch procedure in \Cref{alg:threshold_sampling} runs in $O(N)$ time when $\bv{a}$ has $N$ non-zero entries, at least when the vector is stored in a standard sparse-vector format (e.g., a key/value store) which allows iteration over the non-zero entries in $O(N)$ time.\footnote{\revised{One computational {disadvantage} of sampling methods like Threshold Sampling in comparison to linear sketching is that they cannot be immediately implemented in a streaming setting where entries in $\bv{a}$ and $\bv{b}$ are updated  incrementally; we need to know the magnitude of each entry in advance to perform sampling. We believe it is possible to resolve this issue using streaming $\ell_2$ sampling algorithms (see e.g., \cite{JayaramWoodruff:2018} or \cite{CohenPaghWoodruff:2020}). We leave the details of how to do so most effectively to future work.}}

\smallskip
\noindent\textbf{Estimation.} Once our sketches $\mathcal{S}(\bv{a})$ and $\mathcal{S}(\bv{b})$ are computed, to estimate the inner product between $\bv{a}$ and $\bv{b}$, we simply compute a weighted sum between entries that are sampled in both $\mathcal{S}(\bv{a})$ and $\mathcal{S}(\bv{b})$ (see \Cref{alg:inner_product_estimator}). To ensure the sum equals the true inner product $\langle \bv{a}, \bv{b}\rangle$ in expectation, the weight for index $i$ in the sum is the inverse of the probability that \emph{both} $\bv{a}_i$ and $\bv{b}_i$ were included in the sketches $\mathcal{S}(\bv{a})$ and $\mathcal{S}(\bv{b})$. We can check that this probability is equal to $\min\left(1, m\cdot \bv{a}_i^2/\|\bv{a}\|_2^2,m\cdot \bv{b}_i^2/\|\bv{b}\|_2^2\right)$. This can be computed in $O(1)$ time, so overall the estimator can be computed in time linear in the sketch size. Note that the estimator requires knowledge of the scaling parameters $m/\|\bv{a}\|_2^2$ and $m/\|\bv{b}\|_2^2$, so we include these numbers in our sketches $\mathcal{S}(\bv{a})$ and $\mathcal{S}(\bv{b})$ as $\tau_{\bv{a}}$ and $\tau_{\bv{b}}$.

\begin{figure}[t]
\vspace{-.6em}
\centering
\footnotesize
\begin{subfigure}[b]{\linewidth}
\centering
\parbox{1.0\linewidth}{
\centering
\setlength{\tabcolsep}{0.35em}
\begin{tabular}{|c|c|c|c|c|c|c|c|c|c|c|c|c|c|c|c|c|}
\multicolumn{10}{c}{}        \\
    \hline
    index & \bf 1 & \bf 2 & \bf 3 & \bf 4 & \bf 5 & \bf 6 & \bf 7 & \bf 8 & \bf 9 & \bf 10 & \bf 11 & \bf 12 & \bf 13 & \bf 14 & \bf 15 & \bf 16 \\
    \hline
    $\bv a$ & 0 & 0 & 2.5 & 0 & 0 & 2.3 & 0 & 4 & 0 & 0 & 0.5 & 0 & 3 & 0 & 0 & -3.7\\
    \hline
    $\bv b$ & 0 & 0 & -3.1 & 0 & 0 & 0 & 0.4 & -4.2 & 0 & 1.5 & 1 & 0 & -2.6 & -5.9 & 0 & 0\\
    \hline
\end{tabular}
}
\caption{
Vectors $\bv{a},\bv{b}$ to be sketched. Their
inner product is $\langle \bv{a}, \bv{b}\rangle = -31.85$. 
}
\label{fig:example_vector}
\end{subfigure}

\vspace{.5em}
\begin{subfigure}[b]{\linewidth}
\centering
\parbox[t]{.4\linewidth}{
    \centering
    \begin{tabular}[t]{|c|c|c|c|}
    \hline
    \textbf{$i$} & \textbf{$h(i)$} & \textbf{$\tau_{i}(\bv{a})$} & \textbf{$\tau_{i}(\bv{b})$} \\
    \hline
    3  & 0.11 & \cellcolor{purple!10}0.495 & \cellcolor{purple!10} 0.532 \\
    6 & 0.39 & 0.419 & \xmark \\
    7 & 0.92  & \xmark & 0.009\\
    8 & 0.14 & \cellcolor{purple!10}1.268 & \cellcolor{purple!10}0.977   \\
    10 & 0.42 & \xmark & 0.125    \\
    11 & 0.8 & 0.020 & 0.055 \\
    13 & 0.43 & \cellcolor{purple!10}0.713 & 0.374 \\
    14 & 0.07 & \xmark & \cellcolor{purple!10} 1.928 \\
    16 & 0.23 & \cellcolor{purple!10} 1.085 & \xmark \\
    \hline
    \end{tabular}
}
\hspace{1em}
\parbox[t]{.2\linewidth}{
    \centering
    \begin{tabular}[t]{|c|c|}
    \hline
    \textbf{$K_{\bv{a}}$} & \textbf{$V_{\bv{a}}$} \\
    \hline
    3	& 2.5	 \\
    8	& 4    \\
    13	& 3     \\
    16	& -3.7	     \\
    \hline
    \end{tabular}
    
    \vspace{.5em}
    $\tau_{\bv{a}} = .079$
    \vspace{-.5em}

    $\underbrace{\hspace{5em}}_{%
    \textstyle
     \begin{gathered}
      \mathcal{S}(\bv{a})
    \end{gathered}
    }$
} 
\parbox[t]{.2\linewidth}{
    \centering
    \begin{tabular}[t]{|c|c|}
    \hline
    \textbf{$K_{\bv{b}}$} & \textbf{$V_{\bv{b}}$} \\
    \hline
    3	& -3.1	 \\
    8	& -4.2	 \\
    14	& -5.9	 \\
    \hline 
    \end{tabular}
    
    \vspace{.5em}
    $\tau_{\bv{b}} = .055$
    \vspace{-.5em}

    $\underbrace{\hspace{5em}}_{%
    \textstyle
     \begin{gathered}
      \mathcal{S}(\bv{b})
    \end{gathered}
    }$
} 
\vspace{-0.1em}
\caption{
Example sketches $\mathcal{S}(\bv{a})$ and $\mathcal{S}(\bv{b})$ obtained using \Cref{alg:threshold_sampling} with target sketch size $m=4$. Since the size of the sketch returned by the method is random, $\mathcal{S}(\bv{a})$ has size $4$, but $\mathcal{S}(\bv{b})$ is slightly smaller. The columns $\tau_i(\bv{a})=m\cdot {\bv{a}_i^2}/{\|\bv{a}\|_2^2}$ and $\tau_i(\bv{b})=m\cdot {\bv{b}_i^2}/{\|\bv{b}\|_2^2}$ contain the thresholds computed in Line 3 of \Cref{alg:threshold_sampling}.
Thresholds are only computed for non-zero entries since we never sample entries with value $0$. The highlighted thresholds correspond to items that are included in the sketch, i.e., the threshold is larger than the hash value $h(i)$. If the sketches $\mathcal{S}(\bv{a})$ and  $\mathcal{S}(\bv{b})$ above are used used in our estimator from \Cref{alg:inner_product_estimator}, we obtain an approximate inner product of -32.85, which is close to the true inner product of -31.85.
}
\label{fig:example_of_sketching_procedure}
\end{subfigure}
\vspace{-2em}
\caption{
Sketching with \thresholdsampling (\cref{alg:threshold_sampling}).
}
\label{fig:threshold_sampling_on_example}
\vspace{-.5em}
\end{figure}

\begin{figure}[t]
\vspace{-1em}
\begin{algorithm}[H]
    \caption{Inner Product Estimator}
    \label{alg:inner_product_estimator}
	\begin{algorithmic}[1]
		\Require Sketches $\mathcal{S}(\bv{a}) = \{K_{\bv{a}}, V_{\bv{a}}, \tau_{\bv{a}}\}$, $\mathcal{S}(\bv{b}) = \{K_{\bv{b}}, V_{\bv{b}}, \tau_{\bv{b}}\}$ constructed  by \Cref{alg:threshold_sampling} or \Cref{alg:priority_sampling} with the same seed $s$.
		\Ensure Estimate $w$ of $\langle \bv{a}, \bv{b}\rangle$.
		\algrule
        \State Compute $\mathcal{T} = K_{\bv{a}} \cap K_{\bv{b}}$. Note that for all $i\in \mathcal{T}$, $V_{\bv{a}}$ and $V_{\bv{b}}$ contain $\bv{a}_i$ and $\bv{b}_i$.
		\State \Return 
  \vspace{-1em}
  \begin{align*}
  W = \sum_{i \in \mathcal{T}} \frac{\bv{a}_i\bv{b}_i}{\min(1, \bv{a}_i^2 \cdot \tau_{\bv{a}}, \bv{b}_i^2 \cdot \tau_{\bv{b}})}.
  \end{align*}
         \vspace{-.5em}
\label{alg:inner_product_estimator_summation}
	\end{algorithmic}
     \vspace{-.2em}
\end{algorithm}
\vspace{-2em}
\end{figure}

\myparagraph{Comparison to \wmh} 
While both \wmh and \thresholdsampling use coordinated weighted sampling, 
\wmh does so in a less efficient way. It creates a variable number of copies of every entry in $\bv{a}$ to ensure that \revised{larger entries} are selected with higher probability. \revised{Only} an integer number of copies is possible, so this step requires careful discretization of $\bv{a}$'s entries. Our method, in contrast, encodes weight information more efficiently through the threshold $\tau_i$. Furthermore, to compute a sketch with $m$ samples, \wmh requires applying $m$ independent hash functions to every index $i$ where $\bv{a}$ is non-zero. This accounts for its \runtime dependence on $O(Nm)$. \thresholdsampling uses one \revised{hash function, so} runs in $O(N)$ time. 

Another difference between \thresholdsampling and \wmh is that, when run with parameter $m$, \thresholdsampling returns a sketch whose size is at most $m$ \emph{in expectation} (see \Cref{thm:main}). However, since entries of $\bv{a}$ are sampled independently, the actual size of the sketch will vary randomly around its expectation. In contrast, \wmh allows the user to set an exact sketch size. This issue motivates our Priority Sampling method (\Cref{sec:prioritysampling}), which is similar to \thresholdsampling but has a fixed sketch size.

\myparagraph{Theoretical Guarantees} 
Our main theoretical result on \thresholdsampling is as follows:
\begin{theorem}\label{thm:main}
For vectors $\bv a,\bv b \in \R^n$ and target sketch size $m$, let $\mathcal{S}(\bv a)=\{K_{\bv{a}}, V_{\bv{a}}, \tau_{\bv{a}}\}$ and $\mathcal{S}(\bv b)=\{K_{\bv{b}}, V_{\bv{b}}, \tau_{\bv{b}}\}$ be sketches returned by \Cref{alg:threshold_sampling}. Let $W$ be the inner product estimate returned by \Cref{alg:inner_product_estimator} applied to these sketches.  We have $\E\left[W\right] = \langle\bv{a},\bv{b}\rangle$ and
\begin{align*}
\Var\left[W\right] &\leq \frac{2}{m} \max\left(\|\bv{a}_{\mathcal{I}}\|_2^2\|\bv{b}\|_2^2, \|\bv{a}\|_2^2\|\bv{b}_{\mathcal{I}}\|_2^2\right).
\end{align*}
Moreover, let $|K_\bv{a}|$ and $|K_\bv{b}|$ be the number of index/values pairs stored in $\mathcal{S}(\bv a)$ and $\mathcal{S}(\bv b)$. We have $\E\left[|K_\bv{a}|\right] \leq m$ and $\E\left[|K_\bv{b}|\right] \leq m$.
\end{theorem}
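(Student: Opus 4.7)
The plan is to reduce all three claims to a simple per-index analysis that leverages the coordination via the shared hash function $h$ and the independence of $h$'s outputs across indices. Define, for every index $i$, the ``ideal'' thresholds $\tau_i^{\bv a} = m\bv{a}_i^2/\|\bv{a}\|_2^2$ and $\tau_i^{\bv b} = m\bv{b}_i^2/\|\bv{b}\|_2^2$, and let $X_i$ be the indicator that $i \in \mathcal{T}=K_{\bv a}\cap K_{\bv b}$. Since the same $h$ is used to sketch both vectors, $i \in \mathcal{T}$ iff $\bv{a}_i\neq 0$, $\bv{b}_i\neq 0$, and $h(i) \leq \min(\tau_i^{\bv a},\tau_i^{\bv b})$. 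Hence $X_i$ is Bernoulli with $p_i \mathrel{\mathop:}= \Pr[X_i=1] = \min(1,\tau_i^{\bv a},\tau_i^{\bv b})$ for $i\in\mathcal{I}$ and $p_i=0$ otherwise. Using the stored $\tau_{\bv a}=m/\|\bv{a}\|_2^2$ and $\tau_{\bv b}=m/\|\bv{b}\|_2^2$, the weight in \Cref{alg:inner_product_estimator} is exactly $1/p_i$, so $W=\sum_{i\in\mathcal{I}} \frac{\bv{a}_i\bv{b}_i}{p_i}X_i$. Unbiasedness then follows by linearity: $\E[W]=\sum_{i\in\mathcal{I}}\bv{a}_i\bv{b}_i=\langle\bv{a},\bv{b}\rangle$.

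For the variance, I would use that the $X_i$ are mutually independent because $h$ assigns independent uniform values to distinct indices, so $\Var[W]=\sum_{i\in\mathcal{I}}\bv{a}_i^2\bv{b}_i^2\,(1-p_i)/p_i$. Indices with $p_i=1$ contribute zero, so I can restrict the sum to those with $p_i<1$, where $p_i=\min(\tau_i^{\bv a},\tau_i^{\bv b})$. The key inequality is $1/\min(\tau_i^{\bv a},\tau_i^{\bv b}) = \max(1/\tau_i^{\bv a},1/\tau_i^{\bv b}) \leq 1/\tau_i^{\bv a}+1/\tau_i^{\bv b}$. Substituting gives
\begin{align*}
\Var[W] \;\leq\; \sum_{i\in\mathcal{I}} \bv{a}_i^2\bv{b}_i^2\!\left(\tfrac{\|\bv{a}\|_2^2}{m\bv{a}_i^2}+\tfrac{\|\bv{b}\|_2^2}{m\bv{b}_i^2}\right) = \tfrac{1}{m}\!\left(\|\bv{a}\|_2^2\|\bv{b}_{\mathcal{I}}\|_2^2+\|\bv{a}_{\mathcal{I}}\|_2^2\|\bv{b}\|_2^2\right),
\end{align*}
which is bounded above by $\tfrac{2}{m}\max(\|\bv{a}_{\mathcal{I}}\|_2^2\|\bv{b}\|_2^2,\|\bv{a}\|_2^2\|\bv{b}_{\mathcal{I}}\|_2^2)$, as desired.

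Finally, the expected sketch size bound is essentially immediate: $\E[|K_{\bv a}|]=\sum_{i:\bv{a}_i\neq 0}\Pr[h(i)\leq\tau_i^{\bv a}]=\sum_{i:\bv{a}_i\neq 0}\min(1,\tau_i^{\bv a})\leq\sum_{i=1}^n m\bv{a}_i^2/\|\bv{a}\|_2^2 = m$, and symmetrically for $|K_{\bv b}|$.

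I do not anticipate a real obstacle here: the main subtlety is simply being careful that $p_i$ really equals $\min(1,\tau_i^{\bv a},\tau_i^{\bv b})$ (which requires the shared hash) and that indices with $p_i=1$ can be dropped from the variance sum, since otherwise the naive $\bv{a}_i^2\bv{b}_i^2/p_i$ upper bound would not obviously decay with $m$. Once that is handled, the proof is a short per-coordinate calculation followed by independence.
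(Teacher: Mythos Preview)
Your proposal is correct and follows essentially the same approach as the paper: identify $p_i=\min(1,\tau_i^{\bv a},\tau_i^{\bv b})$ via the shared hash, use independence across indices to write $\Var[W]=\sum_{i\in\mathcal{I}}\bv{a}_i^2\bv{b}_i^2(1-p_i)/p_i$, drop the $p_i=1$ terms, and apply $1/\min(x,y)=\max(1/x,1/y)\le 1/x+1/y$ followed by $x+y\le 2\max(x,y)$. The paper's proof is essentially line-for-line the same, including the sketch-size bound.
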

Above, $\E\left[\cdot\right]$ denotes expected value and $\Var\left[\cdot\right]$ denotes variance. Recall that $\mathcal{I} = \{i: \bv{a}[i] \neq 0 \text{ and } \bv{b}[i] \neq 0 \}$ and $\bv{a}_{\mathcal{I}}$ and $\bv{b}_{\mathcal{I}}$ denote the vectors restricted to the indices in $\mathcal{I}$.
\Cref{thm:main} shows that the inner product estimate obtained using Threshold Sampling is correct in expectation and has bounded variance. Moreover, if the sketches are constructed with parameter $m$, the expected number of samples collected is always $\leq m$. Since the sketch needs to store 2 numbers \revised{for each sample (an index and a value),} as well as the scalar value $\tau_{\bv{a}}$, the expected storage size is thus $O(m)$. 

Given the expectation and variance bound in \Cref{thm:main}, we can apply Chebyshev's Inequality to obtain the following corollary:
\begin{corollary}\label{crl:main}
For any given values of $\epsilon, \delta \in (0,1)$ and vectors $\bv a,\bv b \in \R^n$, when run with target sketch $m$, \thresholdsampling returns an inner product estimate $W$ satisfying, with probability $1-\delta$,
\begin{align*}
\left|W - \langle\bv{a},\bv{b}\rangle\right| \leq \sqrt{\frac{2/\delta}{m}} \max\left(\|\bv{a}_\mathcal{I}\|_2\|\bv{b}\|_2, \|\bv{a}\|_2\|\bv{b}_\mathcal{I}\|_2\right).
\end{align*}
Setting $m = \frac{2/\delta}{\epsilon^2}$, the error is $\epsilon \cdot \max\left(\|\bv{a}_\mathcal{I}\|_2\|\bv{b}\|_2, \|\bv{a}\|_2\|\bv{b}_\mathcal{I}\|_2\right)$.
\end{corollary}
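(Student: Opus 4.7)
The plan is to derive the corollary as a direct consequence of \Cref{thm:main} via Chebyshev's inequality, so the work is essentially routine bookkeeping rather than a new probabilistic argument. The main observation is that \Cref{thm:main} gives us an unbiased estimator $W$ with $\E[W] = \langle \bv{a}, \bv{b}\rangle$ together with a variance bound $\Var[W] \le \frac{2}{m}\max(\|\bv{a}_{\mathcal{I}}\|_2^2\|\bv{b}\|_2^2, \|\bv{a}\|_2^2\|\bv{b}_{\mathcal{I}}\|_2^2)$, which is precisely the form needed to invoke a second-moment tail bound.

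First I would recall Chebyshev's inequality in the form $\Pr[|X - \E[X]| \geq t] \le \Var[X]/t^2$ for any $t > 0$. I would then set
\[
t = \sqrt{\frac{2/\delta}{m}}\,\max\!\left(\|\bv{a}_{\mathcal{I}}\|_2\|\bv{b}\|_2,\ \|\bv{a}\|_2\|\bv{b}_{\mathcal{I}}\|_2\right),
\]
so that $t^2 = \frac{2/\delta}{m}\,\max\!\left(\|\bv{a}_{\mathcal{I}}\|_2^2\|\bv{b}\|_2^2,\ \|\bv{a}\|_2^2\|\bv{b}_{\mathcal{I}}\|_2^2\right)$. Combining with the variance bound from \Cref{thm:main}, the ratio $\Var[W]/t^2$ simplifies exactly to $\delta$, so $\Pr[|W - \langle \bv{a}, \bv{b}\rangle| \ge t] \le \delta$. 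Taking complements yields the claimed high-probability bound.

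For the second sentence of the corollary, I would simply substitute $m = (2/\delta)/\epsilon^2$ into the expression $\sqrt{(2/\delta)/m}$, which immediately collapses to $\epsilon$, giving error $\epsilon\cdot\max\!\left(\|\bv{a}_{\mathcal{I}}\|_2\|\bv{b}\|_2,\ \|\bv{a}\|_2\|\bv{b}_{\mathcal{I}}\|_2\right)$ as required. There is no real obstacle here: the corollary is designed as a clean rewriting of the variance bound in terms of an $(\epsilon, \delta)$ guarantee, and the only thing to be careful about is algebraic matching of the constants between the variance bound (which carries the factor $2$) and the failure probability parameter $\delta$ in Chebyshev's inequality. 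If a sharper dependence on $\delta$ were desired (logarithmic rather than polynomial), one could alternatively invoke a median-of-means boosting argument, but that is outside the scope of what this corollary states.
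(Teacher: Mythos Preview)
Your proposal is correct and matches the paper's approach exactly: the paper simply states that the corollary follows from the expectation and variance bounds in \Cref{thm:main} via Chebyshev's inequality, and your write-up fills in precisely that routine computation with the right constants.
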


This corollary matches the asymptotic guarantee of \wmh \cite{BessaDFMMSZ:2023}, improving on the bounds known for linear sketches like JL and CountSketch \cite{ArriagaVempala:2006}. At the same time, as we show in \cref{sec:experiments}, \thresholdsampling tends to perform better than \wmh in practice. We believe there are a number of reasons for this, including the fact that \thresholdsampling selects vector entries without replacement, and the fact that the variance bound in \Cref{thm:main} has an small constant factor of $2$. 
\revised{We prove \Cref{thm:main} below:}

\begin{proof}[Proof of \cref{thm:main}] Let $\mathcal{I}$ denote the set of all indices $i$ for which $\bv{a}_i \neq 0$ and $\bv{b}i \neq 0$. For any $i\in \mathcal{I}$, let $\mathbbm{1}_i$ denote the indicator random variable for the event that $i$ is included in \emph{both} $K_{\bv{a}}$ and $K_{\bv{b}}$. $\mathbbm{1}_i = 1$ if this event occurs and $0$ if it does not. Note that, for $i\neq j$, $\mathbbm{1}_i$ is independent from $\mathbbm{1}_j$, since the hash values $h(i)$ and $h(j)$ are drawn uniformly and independently from $[0,1]$.
Moreover, we claim that $\mathbbm{1}_i$ is equal to $1$ with probability:
\begin{align}
\label{eq:prob_claim}
    p_i = \min\left(1, \frac{m\cdot \bv{a}_i^2}{\|\bv{a}\|_2^2}, \frac{m\cdot \bv{b}_i^2}{\|\bv{b}\|_2^2}\right) = \min(1, \tau_{\bv{a}}\cdot \bv{a}_i^2, \tau_{\bv{b}}\cdot \bv{b}_i^2).
\end{align}
To see why this is the case, assume without loss of generality that $\bv{a}_i^2 \leq \bv{b}_i^2$. Then, by examining Line 3 of \Cref{alg:threshold_sampling}, we can see that $i$ is included in $K_{\bv{a}}$ with probability $\min\left(1, {m\cdot \bv{a}_i^2}/{\|\bv{a}\|_2^2}\right).$ Moreover, if $i$ is included in $K_{\bv{a}}$, it is \emph{guaranteed} to be included in $K_{\bv{b}}$ since the threshold ${m\cdot \bv{b}_i^2}/{\|\bv{b}\|_2^2}$ is at least as large as ${m\cdot \bv{a}_i^2}/{\|\bv{a}\|_2^2}$. 
It follows that, when $\bv{a}_i^2 \leq \bv{b}_i^2$, we have that $p_i = \min\left(1, {m\cdot \bv{a}_i^2}/{\|\bv{a}\|_2^2}\right)$. The analysis is identical for the case $\bv{b}_i^2 < \bv{a}_i^2$, in which case $p_i = \min\left(1, {m\cdot \bv{b}_i^2}/{\|\bv{b}\|_2^2}\right)$. Combining the two cases establishes \eqref{eq:prob_claim}.


Let $W$ be the estimate returned by \Cref{alg:inner_product_estimator}. We can write 
    $W = \sum_{i\in \mathcal{I}} \mathbbm{1}_i \cdot \frac{\bv{a}_i\bv{b}_i}{p_i}$, and 
applying linearity of expectation, we have:  
\begin{align}
\label{eq:expectation_bound}
    \E[W] = \sum_{i\in \mathcal{I}} p_i \cdot \frac{\bv{a}_i\bv{b}_i}{p_i} = \sum_{i\in \mathcal{I}}\bv{a}_i\bv{b}_i = \langle \bv{a}, \bv{b}\rangle.
\end{align}
Next, since each term in the sum $W = \sum_{i\in \mathcal{I}} \mathbbm{1}_i \cdot \frac{\bv{a}_i\bv{b}_i}{p_i}$ is independent,
\vspace{-0.3cm}
\begin{align*}
    \Var[W] = \sum_{i\in \mathcal{I}}\Var\left[\mathbbm{1}_i \cdot \frac{\bv{a}_i\bv{b}_i}{p_i}\right] = \sum_{i\in \mathcal{I}} \frac{(\bv{a}_i\bv{b}_i)^2}{p_i^2} \Var[\mathbbm{1}_i].
\end{align*}
$\Var[\mathbbm{1}_i] = p_i - p_i^2$, which is $0$ when $p_i$ equals~$1$. If $p_i \neq 1$, then $\Var[\mathbbm{1}_i]\leq p_i = m\cdot \min\left({\bv{a}_i^2}/{\|\bv{a}\|_2^2}, {\bv{b}_i^2}/{\|\bv{b}\|_2^2}\right)$. 
%
\subm{\revised{So we have:
\begin{align*}
    \Var[W] &\leq \sum_{i\in \mathcal{I}, p_i\neq 1} \|\bv{a}\|_2^2\|\bv{b}\|_2^2\frac{(\bv{a}_i^2/\|\bv{a}\|_2^2)(\bv{b}_i^2/\|\bv{b}\|_2^2)}{m\cdot \min({\bv{a}_i^2}/{\|\bv{a}\|_2^2}, {\bv{b}_i^2}/{\|\bv{b}\|_2^2})}\\
    &= \sum_{i\in \mathcal{I}, p_i\neq 1} \|\bv{a}\|_2^2\|\bv{b}\|_2^2\frac{\max(\bv{a}_i^2/\|\bv{a}\|_2^2,\bv{b}_i^2/\|\bv{b}\|_2^2)}{m}\\
    &\leq \frac{\|\bv{a}\|_2^2\|\bv{b}\|_2^2}{m}\sum_{i\in \mathcal{I}} \frac{\bv{a}_i^2}{\|\bv{a}\|_2^2} + \frac{\bv{b}_i^2}{\|\bv{b}\|_2^2} \\
    & = \frac{1}{m}\left(\|\bv{b}\|_2^2\|\bv{a}_{\mathcal{I}}\|_2^2 + \|\bv{a}\|_2^2\|\bv{b}_{\mathcal{I}}\|_2^2 \right).
\end{align*}
We obtain our final variance bound by upper bounding the sum by $2$x the maximum.
}}
\arxiv{Plugging in, we are able to prove our desired variance bound:
\vspace{-0.2cm}
\begin{flalign*}
    \Var[W] &\leq \sum_{i\in \mathcal{I}, p_i\neq 1} \frac{(\bv{a}_i\bv{b}_i)^2}{p_i}\\
    &= \sum_{i\in \mathcal{I}, p_i\neq 1} \|\bv{a}\|_2^2\|\bv{b}\|_2^2\frac{(\bv{a}_i^2/\|\bv{a}\|_2^2)(\bv{b}_i^2/\|\bv{b}\|_2^2)}{m\cdot \min({\bv{a}_i^2}/{\|\bv{a}\|_2^2}, {\bv{b}_i^2}/{\|\bv{b}\|_2^2})}\\
    &= \sum_{i\in \mathcal{I}, p_i\neq 1} \|\bv{a}\|_2^2\|\bv{b}\|_2^2\frac{\max(\bv{a}_i^2/\|\bv{a}\|_2^2,\bv{b}_i^2/\|\bv{b}\|_2^2)}{m}\\
    &\leq \frac{\|\bv{a}\|_2^2\|\bv{b}\|_2^2}{m}\sum_{i\in \mathcal{I}} \frac{\bv{a}_i^2}{\|\bv{a}\|_2^2} + \frac{\bv{b}_i^2}{\|\bv{b}\|_2^2} \\
    & = \frac{\|\bv{a}\|_2^2\|\bv{b}\|_2^2}{m}\left(\frac{\|\bv{a}_{\mathcal{I}}\|_2^2}{\|\bv{a}\|_2^2} + \frac{\|\bv{b}_{\mathcal{I}}\|_2^2}{\|\bv{b}\|_2^2} \right) \\
    &= \frac{1}{m}\left(\|\bv{a}_{\mathcal{I}}\|_2^2\|\bv{b}\|_2^2 + \|\bv{a}\|_2^2\|\bv{b}_{\mathcal{I}}\|_2^2\right) 
\end{flalign*}
Given the non-negativity of the norms, we have:
\begin{align*}
    \Var[W] &\leq \frac{1}{m}\left(\|\bv{a}_{\mathcal{I}}\|_2^2\|\bv{b}\|_2^2 + \|\bv{a}\|_2^2\|\bv{b}_{\mathcal{I}}\|_2^2\right) \\
    &\leq \frac{2}{m} \max(\|\bv{a}_{\mathcal{I}}\|_2^2\|\bv{b}\|_2^2, \|\bv{a}\|_2^2\|\bv{b}_{\mathcal{I}}\|_2^2).
\end{align*}

This holds because the sum of two non-negative numbers is always less than or equal to twice the larger of the two.
}

Finally, we prove the claimed bound on the expected sketch size. We have that
$
    \left|K_{\bv{a}}\right| = \sum_{i=1}^n \mathbbm{1}\left[i\in K_{\bv{a}}\right],
$
where $\mathbbm{1}\left[i\in K_{\bv{a}}\right]$ is an indicator random variable that is $1$ if $i$ is included in $K_{\bv{a}}$ and zero otherwise. 
By linearity of expectation, we have that:
\begin{align}
\label{eq:expected_size}
    \E\left[\left|K_{\bv{a}}\right|\right] = \sum_{i=1}^n \E\left[\mathbbm{1}\left[i\in K_{\bv{a}}\right] \right] &= \sum_{i=1}^n \min(1, {m\cdot \bv{a}_i^2}/{\|\bv{a}\|_2^2}) \leq m.
\end{align}
An identical analysis shows that $\E\left[\left|K_{\bv{b}}\right|\right] \leq m$, which completes the proof. \subm{In the extended version of this paper \cite{DaliriFreireMusco:2023b}}\arxiv{In \Cref{{app:adaptive_and_high_prob}}}, we further prove that $\left|K_{\bv{a}}\right|$ and $\left|K_{\bv{b}}\right|$ are less than $m + O(\sqrt{m})$ with high probability.
\end{proof}

\noindent\textbf{Practical Implementation.} In \Cref{thm:main}, we show that the expected sketch size is \emph{upper bounded} by $m$. As apparent from \eqref{eq:expected_size}, it will be less than $m$ whenever there are entries in the vector for which $\bv{a}_i^2/\|\bv{a}\|_2^2 > 1/m$. This is not ideal: we would like a sketch whose size is as close to our budget $m$ as possible. Fortunately, Threshold Sampling \revised{can be} modified so that the {expected} sketch size is \emph{exactly} $m$. We simply use binary search to compute $m'$ such that $\sum_{i=1}^n \min\left(1, {m'\cdot \bv{a}_i^2}/{\|\bv{a}\|_2^2}\right) = m$. Then, we replace $m$ in Lines 3 and 6 of \cref{alg:threshold_sampling} with $m'$. \revised{Doing so does not increase our estimator's variance.}
\revised{Further details are provided in \subm{the extended version of this paper \cite{DaliriFreireMusco:2023b}}\arxiv{\Cref{app:adaptive_and_high_prob}}, and} we implement this variant of Threshold Sampling for our experiments in \Cref{sec:experiments}.

\section{\prioritysampling}
\label{sec:prioritysampling}
While a simple and effective method for inner product sketching, one limitation of Threshold Sampling is that the user cannot exactly control the size of the sketch $\mathcal{S}(\bv{a})$. We address this issue by analyzing an alternative algorithm based on Priority Sampling, a technique introduced in computer science by \cite{DuffieldLundThorup:2004}, and studied in  statistics under the name ``Sequential Poisson Sampling'' \cite{Ohlsson:1998}. 

\smallskip
\noindent\textbf{Sketching.} To motivate the method, observe from rearranging Lines 3 and 4 in \Cref{alg:threshold_sampling}, that Threshold Sampling selects all entries from $\bv{a}$ for which ${h(i)}/{a_i^2}$ falls below a fixed ``global threshold'', $\tau_{\bv{a}} = {m}/{\|\bv{a}\|_2^2}$. There will be at most $m$ such values in expectation, but there could be more or less depending on the randomness in $h$. Priority Sampling (\Cref{alg:priority_sampling}) removes this variability by simply selecting the $m$ \emph{smallest} values of ${h(i)}/{a_i^2}$. It then treats the $(m+1)^\text{st}$ smallest value as the global threshold $\tau_{\bv{a}}$. 

\smallskip\noindent\textbf{Estimation.} \revised{Given sketches} 
$\mathcal{S}(\bv{a})$ and $\mathcal{S}(\bv{b})$ computed using Priority Sampling, we can actually use the exact same estimator for $\langle \bv{a}, \bv{b}\rangle$ as Threshold Sampling (\Cref{alg:inner_product_estimator}). In particular,
\begin{align}
\label{eq:est_sum}
W = \sum_{i \in K_{\bv{a}} \cap K_{\bv{b}}} \frac{\bv{a}_i\bv{b}_i}{\min(1, \bv{a}_i^2 \cdot \tau_{\bv{a}}, \bv{b}_i^2 \cdot \tau_{\bv{b}})}
\end{align}
(computed on Line 2 of \Cref{alg:inner_product_estimator}) remains an unbiased estimate for the inner product. However, analyzing the variance of the estimator is a lot trickier. Notably, we no longer have that the summation terms in \eqref{eq:est_sum} are independent; they all depend on the \emph{same} {random} numbers $\tau_\bv{a}$ and $\tau_\bv{b}$, which were previously fixed quantities for Threshold Sampling. Moreover, bounding the variance of each term in the sum is complicated by the presence of random variables in the denominator. These issues arise in earlier applications of Priority Sampling, like subset-sum estimation \cite{DuffieldLundThorup:2004}. For this problem, an optimal variance analysis proved elusive, until finally 
being given in a tour de force result by Szegedy \cite{AlonDuffieldLund:2005,Szegedy:2006}.

\begin{figure}[t]
\vspace{-1em}
\begin{algorithm}[H]
    \caption{Priority Sampling}
    \label{alg:priority_sampling}
    \begin{algorithmic}[1]
                \Require Length $n$ vector $\bv{a}$, random seed $s$, target sketch size $m$.
        \Ensure Sketch $\mathcal{S}(\bv{a}) = \{K_{\bv{a}}, V_{\bv{a}}, \tau_{\bv{a}}\}$, where $K_{\bv{a}}$ is a subset of indices from $\{1, \ldots, n\}$ and $V_{\bv{a}}$ contains $\bv{a}_i$ for all $i\in K_{\bv{a}}$.
        \algrule
        \State Use random seed $s$ to select a uniformly random hash function $h: \{1,..., n\}\rightarrow [0,1]$. Initialize $K_{\bv{a}}$ and $V_{\bv{a}}$ to be empty lists.
        \State Compute rank $R_i = {h(i)}/{\bv{a}_i^2}$ for all $i$ such that $\bv{a}_i\neq 0$.
        \State Set $\tau_{\bv{a}}$ equal to the $(m+1)^{\text{st}}$ smallest value $R_{i}$, or set $\tau_{\bv{a}} = \infty$ if $\bv{a}$ has less than $m+1$ non-zero values.
        \For{$i$ such that $\bv{a}_i\neq 0$}
                \If{$R_i < \tau_{\bv{a}}$}
              \State Append $i$ to $K_{\bv{a}}$, append $\bv{a}_i$ to $V_{\bv{a}}$.
              \EndIf
        \EndFor
        \State \Return $\mathcal{S}(\bv{a}) = \{K_{\bv{a}}, V_{\bv{a}},\tau_{\bv{a}}\}$
    \end{algorithmic}
    \vspace{-.2em}
\end{algorithm}
\vspace{-2em}
\end{figure}

\smallskip
\noindent\textbf{Theoretical Analysis.} Drawing inspiration from a new analysis of Priority Sampling for the subset sum problem \cite{DaliriFreireMusco:2023}, we are able to overcome these obstacles for inner product estimation as well.
Our main theoretical result on Priority Sampling is as follows:
\begin{theorem}\label{thm:main_priority}
For vectors $\bv a,\bv b \in \R^n$ and sketch size $m$, let $\mathcal{S}(\bv a)=\{K_{\bv{a}}, V_{\bv{a}}, \tau_{\bv{a}}\}$ and $\mathcal{S}(\bv b)=\{K_{\bv{b}}, V_{\bv{b}}, \tau_{\bv{b}}\}$ be sketches returned by \Cref{alg:priority_sampling}. Let $W$ be the inner product estimate returned by \Cref{alg:inner_product_estimator} applied to these sketches.  We have that $\E\left[W\right] = \langle\bv{a},\bv{b}\rangle$ and
\begin{align*}
\Var\left[W\right] &\leq \frac{2}{m-1} \max\left(\|\bv{a}_{\mathcal{I}}\|_2^2\|\bv{b}\|_2^2, \|\bv{a}\|_2^2\|\bv{b}_{\mathcal{I}}\|_2^2\right)
\end{align*}
Moreover, let $|K_\bv{a}|$ and $|K_\bv{b}|$ be the number of index/values pairs stored in $\mathcal{S}(\bv a)$ and $\mathcal{S}(\bv b)$. We have $|K_\bv{a}|\leq m$ and $|K_\bv{b}|\leq m$, with equality in the typical case when $\bv{a}$ and $\bv{b}$ have at least $m$ non-zero entries. 
\end{theorem}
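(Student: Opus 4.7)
The sketch size bound is immediate from the construction: $K_{\bv{a}}$ contains exactly those indices $i$ with $R_i < \tau_{\bv{a}}$, and since $\tau_{\bv{a}}$ is defined as the $(m+1)^{\text{st}}$ smallest rank, at most $m$ indices can satisfy this inequality, with equality whenever $\bv{a}$ has at least $m$ non-zero entries; the argument for $K_{\bv{b}}$ is identical.

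For unbiasedness, I would invoke the standard ``leave-one-out'' identity for Priority Sampling from \cite{DuffieldLundThorup:2004}. For each $i$ with $\bv{a}_i \neq 0$, let $\tau^{(i)}_{\bv{a}}$ be the $m^{\text{th}}$ smallest rank among $\{R_j : j \neq i, \bv{a}_j \neq 0\}$, and define $\tau^{(i)}_{\bv{b}}$ analogously. Crucially, $\tau^{(i)}_{\bv{a}}$ and $\tau^{(i)}_{\bv{b}}$ are functions only of $\{h(j) : j \neq i\}$ and thus independent of $h(i)$; moreover, $i \in K_{\bv{a}}$ iff $R_i < \tau^{(i)}_{\bv{a}}$, in which case $\tau_{\bv{a}}$ exactly equals $\tau^{(i)}_{\bv{a}}$ (and symmetrically for $\bv{b}$). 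Since both sketches share the same hash function $h$, the joint event $i \in K_{\bv{a}} \cap K_{\bv{b}}$ occurs iff $h(i) < \min(\bv{a}_i^2 \tau^{(i)}_{\bv{a}}, \bv{b}_i^2 \tau^{(i)}_{\bv{b}})$. Conditioning on $(\tau^{(i)}_{\bv{a}}, \tau^{(i)}_{\bv{b}})$ and integrating out $h(i)$ uniform on $[0,1]$, the inclusion probability cancels exactly against the denominator in \eqref{eq:est_sum}, so each term of $W$ contributes $\bv{a}_i \bv{b}_i$ in expectation whenever $i \in \mathcal{I}$, and zero otherwise. Linearity then yields $\E[W] = \langle \bv{a}, \bv{b}\rangle$.

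The variance bound is the main challenge. Unlike in \Cref{thm:main}, the summands $X_i = \mathbbm{1}[i \in K_{\bv{a}} \cap K_{\bv{b}}] \cdot \bv{a}_i \bv{b}_i / \min(1, \bv{a}_i^2 \tau_{\bv{a}}, \bv{b}_i^2 \tau_{\bv{b}})$ are \emph{not} independent: they all depend on the same random thresholds $\tau_{\bv{a}}, \tau_{\bv{b}}$, which also appear in the denominators. My plan is to adapt the Priority-Sampling analysis developed in \cite{DaliriFreireMusco:2023} for scalar subset sum. The key ingredient there is a coupling argument showing that the pairwise covariances $\mathrm{Cov}(X_i, X_j)$ are non-positive for $i \neq j$, so that $\Var[W] \leq \sum_i \Var[X_i]$. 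I would then bound each $\Var[X_i] \leq \E[X_i^2]$ and use the leave-one-out identity above to rewrite $\E[X_i^2]$ as a conditional expectation in terms of $\tau^{(i)}_{\bv{a}}, \tau^{(i)}_{\bv{b}}$. From here, the summation mirrors the Threshold Sampling calculation from the proof of \Cref{thm:main}, with the $\tfrac{1}{m-1}$ factor arising because the $(m+1)^{\text{st}}$-smallest rank behaves like a threshold produced by a Threshold Sampling procedure with $m-1$ ``effective'' samples.

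The technical heart of the proof, and the main obstacle, will be extending the non-positive-covariance argument from \cite{DaliriFreireMusco:2023} to the coupled-sketch setting. In the scalar case, the argument proceeds by swapping the hash values $h(i)$ and $h(j)$ and comparing the resulting estimators, exploiting a monotonicity property of the single sketch threshold. Here, we must track \emph{two} thresholds $\tau_{\bv{a}}, \tau_{\bv{b}}$ driven by the same underlying hashes but with different index-dependent weights $\bv{a}_i^2$ and $\bv{b}_i^2$, and verify that the three-term minimum in the denominator of $X_i$ preserves the monotonicity needed for the swap to yield a signed inequality. I expect the argument to go through, but the bookkeeping for the two-threshold case is where the bulk of the technical work will lie.
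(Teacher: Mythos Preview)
Your sketch-size and unbiasedness arguments are essentially identical to the paper's (it uses exactly the leave-one-out thresholds $\tau^{(i)}_{\bv{a}},\tau^{(i)}_{\bv{b}}$ and the same conditioning).

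For the variance, your plan diverges from the paper in an interesting way. You propose to bound $\Var[W]\le\sum_i\Var[X_i]$ by proving \emph{non-positive covariances} via a hash-swap/coupling argument, and you correctly flag that extending the swap monotonicity to two coupled thresholds with different weights is the hard part. The paper sidesteps this obstacle entirely: it shows the $\hat w_i$ are \emph{exactly pairwise uncorrelated}, $\E[\hat w_i\hat w_j]=\bv{a}_i\bv{b}_i\bv{a}_j\bv{b}_j$, by the same conditioning trick used for unbiasedness, just with a \emph{leave-two-out} threshold $\tau^{i,j}_{\bv{a}}$ (the $(m{-}1)^{\text{st}}$ smallest rank over $\mathcal{A}\setminus\{i,j\}$). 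Conditioned on $\tau^{i,j}_{\bv{a}},\tau^{i,j}_{\bv{b}}$, inclusion of $i$ and $j$ factorizes and the denominators in $\hat w_i,\hat w_j$ become exactly these leave-two-out thresholds, so everything cancels. This is strictly cleaner than a swap argument and eliminates precisely the two-threshold bookkeeping you were worried about.

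Two smaller points. First, you bound $\Var[X_i]\le\E[X_i^2]$; the paper keeps the $-\bv{a}_i^2\bv{b}_i^2$ term, which turns $\max(1,\cdot,\cdot)$ into $\max(0,\cdot,\cdot)$ and is needed to avoid an extra additive $\sum_{i\in\mathcal I}\bv{a}_i^2\bv{b}_i^2$ that does not obviously fit the stated bound. Second, the ingredient actually imported from \cite{DaliriFreireMusco:2023} is not a covariance-sign argument but the scalar inequality $\E[1/\tau^{(i)}_{\bv{a}}]\le\|\bv{a}\|_2^2/(m-1)$ (their Claim~5), which is precisely where the $\tfrac{1}{m-1}$ factor comes from after writing $\Var[\hat w_i]\le\bv{b}_i^2\,\E[1/\tau^{(i)}_{\bv{a}}]+\bv{a}_i^2\,\E[1/\tau^{(i)}_{\bv{b}}]$.
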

\Cref{thm:main_priority} almost exactly matches our \Cref{thm:main} for Threshold Sampling, except that the leading constant on the variance is $\frac{2}{m-1}$ instead of $\frac{2}{m}$. Again, we can apply Chebyshev's inequality to conclude that if we set $m = \frac{2/\delta}{\epsilon^2} + 1$, then $|W - \langle \bv{a}, \bv{b}\rangle|$ is bounded by $\epsilon \cdot \max\left(\|\bv{a}_\mathcal{I}\|_2\|\bv{b}\|_2, \|\bv{a}\|_2\|\bv{b}_\mathcal{I}\|_2\right)$ with probability $\geq 1-\delta$. The matching theoretical results align with experiments: as seen in \cref{sec:experiments},  Priority Sampling performs almost identically to \thresholdsampling, albeit with the added benefit of a fixed sketch size bound.

\begin{proof}[Proof of \cref{thm:main_priority}]
We start by introducing additional notation. Let $\mathcal{A} = \{i : \bv{a}_i \neq 0\}$ denote the set of indices where $\bv{a}$ is non-zero and let $\mathcal{B} = \{i : \bv{b}_i \neq 0\}$ denote the set of indices where $\bv{b}$ is non-zero.  Recall that $\tau_{\bv{a}}$ as computed in \Cref{alg:priority_sampling} is the $(m+1)^\text{st}$ smallest value of $h(i)/a_i^2$ over all $i\in \mathcal{A}$. For any $i\in \mathcal{A}$, let $\tau_{\bv{a}}^i$ denote the $m^\text{th}$ smallest of $h(j)/a_j^2$ over all $j\in \mathcal{A}\setminus\{i\}$. If $\mathcal{A}\setminus\{i\}$ has fewer than $m$ values, define $\tau_{\bv{a}}^i = \infty$. Define $\tau_{\bv{b}}^i$ analogously for all $i \in \mathcal{B}$. Let $\mathcal{T} = K_{\bv{a}} \cap K_{\bv{b}}$ be as in \Cref{alg:inner_product_estimator}. Later on we will use the easily checked fact that, for all $i \in \mathcal{T}$, $\tau_{\bv{a}}^i = \tau_{\bv{a}}$ and $\tau_{\bv{b}}^i = \tau_{\bv{b}}$.

\revised{
	The estimate $W$ returned by \Cref{alg:inner_product_estimator} can be rewritten as:
\begin{align}
	\label{eq:w_sum_of_var}
	W &=\hspace{-.7em} \sum_{i \in \mathcal{A}\cap\mathcal{B}} \hat{w}_{i} &
	\text{where \quad} \hat{w}_{i} = \left\{\begin{matrix}\frac{\bv{a}_i\bv{b}_i}{\min(1,\bv{a}_i^2 \tau_{\bv{a}},\bv{b}_i^2\tau_{\bv{b}})}  &i\in \mathcal{T}
 \\0  &i\notin \mathcal{T}.
\end{matrix}\right.
\end{align}
}
From \eqref{eq:w_sum_of_var}, we can see that, to prove $\E[W] = \langle \bv{a}, \bv{b}\rangle = \sum_{i \in \mathcal{A}\cap\mathcal{B}} \bv{a}_i\bv{b}_i$, it suffices to prove that, for all $i \in \mathcal{A}\cap\mathcal{B}$, $\E[\hat{w}_{i}] = \bv{a}_i \bv{b}_i$.
To establish this equality, first observe that for $i$ to be in $\mathcal{T}$, it must be that both ${h(i)}/{\bv{a}_i^2}$ and ${h(i)}/{\bv{b}_i^2}$ are among the $m^\text{th}$ smallest ranks computed when sketching $\bv{a}$ and $\bv{b}$, respectively. 
In other words, it must be that ${h(i)}/{\bv{a}_i^2}< \tau_{\bv{a}}^i$ and ${h(i)}/{\bv{b}_i^2}< \tau_{\bv{b}}^i$. So, conditioning on $\tau_{\bv{a}}^i$ and $\tau_{\bv{b}}^i$,
    {\begin{align*}
		\Pr\left[i \in \mathcal{T}\mid \tau_{\bv{a}}^i,\tau_{\bv{b}}^i\right] &= \Pr\left[{h(i)}/{\bv{a}_i^2}< \tau_{\bv{a}}^i \cap {h(i)}/{\bv{b}_i^2}< \tau_{\bv{b}}^i\right]\\ &= \min(1,\bv{a}_i^2\tau_{\bv{a}}^i,\bv{a}_i^2\tau_{\bv{b}}^i).
    \end{align*}}
	\revised{
    Combined with the fact discussed earlier that, conditioned on $i\in \mathcal{T}$, $\tau_{\bv{a}} = \tau_{\bv{a}}^i$ and $\tau_{\bv{b}} = \tau_{\bv{b}}^i$, we have:
    \begin{align*}
        \E[\hat{w}_{i}]
        &= \E_{\tau_{\bv{a}}^i,\tau_{\bv{b}}^i}\big[\frac{\bv{a}_i \bv{b}_i}{\min(1,\bv{a}_i^2\tau_{\bv{a}},\bv{a}_i^2\tau_{\bv{b}})}  \min(1,\bv{a}_i^2\tau_{\bv{a}}^i,\bv{a}_i^2\tau_{\bv{b}}^i)\big] = \bv{a}_i \bv{b}_i.
    \end{align*}
	}
    As desired, $\E[W] = \langle \bv{a}, \bv{b}\rangle$  follows by linearity of expectation.

    Next, we turn our attention to bounding the variance of $W$. As discussed, this is complicated by the fact that $\hat{w}_i$ and $\hat{w}_j$ are non-independent. However, it is possible to show that the random variables are \revised{\emph{pairwise uncorrelated}, which will allow us to apply linearity of variance to the sum in \eqref{eq:w_sum_of_var}. I.e., we want to show that, for all $i,j$, $\E[\hat{w}_{i}\hat{w}_{j}] = \E[\hat{w}_{i}]\E[\hat{w}_{j}]$.}
For any $i,j \in \mathcal{A}$ define $\tau_{\bv{a}}^{i,j}$ to equal the $(m-1)^\text{st}$ smallest of $h(k)/a_k^2$ over all $k\in \mathcal{A}\setminus\{i,j\}$, or $\infty$ if there are not $m-1$ values in $\mathcal{A}\setminus\{i,j\}$. Define $\tau_{\bv{b}}^{i,j}$ analogously for $i,j \in \mathcal{B}$. 
{As in our expression for $\Pr\left[i\in \mathcal{T}\right]$, it can be seen} that $\Pr[i,j \in \mathcal{T}\mid \tau^{i,j}_{\bv{a}}, \tau^{i,j}_{\bv{b}}] = \min(1,\bv{a}_i^2\tau_{\bv{a}}^{i,j},\bv{b}_i^2\tau_{\bv{b}}^{i,j})\cdot \min(1,\bv{a}_j^2\tau_{\bv{a}}^{i,j},\bv{b}_j^2\tau_{\bv{b}}^{i,j})$. {Furthermore, conditioned on $i,j \in \mathcal{T}$, $\tau_{\bv{a}}^{i,j} = \tau_{\bv{a}}$ and $\tau_{\bv{b}}^{i,j} = \tau_{\bv{b}}$.
So,}

    \begin{align*}
        \E[\hat{w}_{i}\hat{w}_{j}]
        &= \E_{\tau^{i,j}_{\bv{a}}, \tau^{i,j}_{\bv{b}}}\Bigg[\frac{\bv{a}_i \bv{b}_i}{\min(1,\bv{a}_i^2\tau^{i,j}_{\bv{a}},\bv{b}_i^2\tau^{i,j}_{\bv{b}})} \frac{\bv{a}_j \bv{b}_j}{\min(1,\bv{a}_j^2\tau^{i,j}_{\bv{a}},\bv{b}_j^2\tau^{i,j}_{\bv{b}})} \cdot \ldots  \\
        &\ldots \Pr\left[i,j \in \mathcal{T}\mid \tau^{i,j}_{\bv{a}}, \tau^{i,j}_{\bv{b}}\right]\Bigg]
        = \bv{a}_i \bv{b}_i \bv{a}_j \bv{b}_j 
        =  \E[\hat{w}_{i}]\E[\hat{w}_{j}],
    \end{align*}
\revised{as desired.}
\revised{
Since $\E[\hat{w}_{i}\hat{w}_{j}] = \E[\hat{w}_{i}]\E[\hat{w}_{j}]$ for all $i,j$ we can apply linearity of variance to conclude that $\Var[W] = \sum_{i \in \mathcal{A}\cap\mathcal{B}} \Var[\hat{w}_{i}]$.}

So, it suffices to establish individual bounds on $\Var[\hat{w}_{i}]$ for $i \in \mathcal{A}\cap \mathcal{B}$. To do so, first observe that,
conditioned on $\tau_{\bv{a}}^i$ and  $\tau_{\bv{b}}^i$,
    \begin{align*}
        \E\left[\hat{w}_{i}^2\mid \tau_{\bv{a}}^i, \tau_{\bv{b}}^i\right] &= \left(\frac{\bv{a}_i \bv{b}_i}{\min(1,\bv{a}_i^2\tau_{\bv{a}}^i,\bv{b}_i^2\tau_{\bv{b}}^i)}\right)^2 \cdot \Pr\left[i \in \mathcal{T}\mid \tau_{\bv{a}}^i,\tau_{\bv{b}}^i\right]\\ 
        &= \frac{\bv{a}_i^2 \bv{b}_i^2}{\min(1,\bv{a}_i^2\tau_{\bv{a}}^i,\bv{b}_i^2\tau_{\bv{b}}^i)} = \bv{a}_i^2 \bv{b}_i^2 \max\left(1,\frac{1}{\bv{a}_i^2\tau_{\bv{a}}^i},\frac{1}{\bv{b}_i^2\tau_{\bv{b}}^i}\right).
    \end{align*}
	\revised{
    We can thus write $\Var\left[\hat{w}_i\right] = \E\left[\hat{w}_i^2\right] - \E\left[\hat{w}_i\right]^2 = \E\left[\hat{w}_i^2\right] - \bv{a}_i^2\bv{b}_i^2 $ as:
        \begin{align*}
            \Var\left[\hat{w}_i\right] &= \bv{a}_i^2\bv{b}_i^2 \E\left[\max\hspace{-.2em}\left(1,\frac{1}{\bv{a}_i^2\tau_{\bv{a}}^i},\frac{1}{\bv{b}_i^2\tau_{\bv{b}}^i}\right)\right] - \bv{a}_i^2\bv{b}_i^2 \\ 
            &= \bv{a}_i^2\bv{b}_i^2 \E\left[\max\hspace{-.2em}\left(0,\frac{1}{\bv{a}_i^2\tau_{\bv{a}}^i}-1,\frac{1}{\bv{b}_i^2\tau_{\bv{b}}^i}-1\right)\right]\\
			&\leq \bv{a}_i^2\bv{b}_i^2 \E\left[\frac{1}{\bv{a}_i^2\tau_{\bv{a}}^i} + \frac{1}{\bv{b}_i^2\tau_{\bv{b}}^i} \right]
            =
            \bv{a}_i^2 \E\left[\frac{1}{\tau_{\bv{b}}^i}\right] + 
            \bv{b}_i^2 \E\left[\frac{1}{\tau_{\bv{a}}^i}\right].
        \end{align*}
	}
    So, we have reduced the problem to bounding the expected inverse of $\tau_{\bv{a}}^i$ and $\tau_{\bv{b}}^i$. Doing so is not straightforward: these are complex random variables that depend on all entries in $\bv{a}$ and $\bv{b}$, respectively. However, it was recently shown in \cite{DaliriFreireMusco:2023} (Claim 5) that $\E[{1}/{\tau_{\bv{a}}^i}] \leq \|\bv{a}\|_2^2/{(m-1)}$ and  $\E[{1}/{\tau_{\bv{b}}^i}] \leq {\|\bv{b}\|_2^2}/{(m-1)}$.
    Finally, we have:
    \begin{align*}
        \Var[W] = \sum_{i \in A\cap B} \Var[\hat{w}_{i}] &\leq \sum_{i \in A\cap B}\bv{a}_i^2 \E\left[{1}/{\tau_{\bv{b}}^i}\right] + \bv{b}_i^2 \E\left[{1}/{\tau_{\bv{a}}^i}\right]\\
        &\leq \sum_{i \in A\cap B} \bv{a}_i^2 \frac{\|\bv{b}\|^2}{m-1} + \bv{b}_i^2 \frac{\|\bv{a}\|^2}{m-1}\\
         &= \frac{1}{m-1} \left(\|\bv{a}_{\mathcal{I}}\|_2^2\|\bv{b}\|_2^2 +  \|\bv{a}\|_2^2\|\bv{b}_{\mathcal{I}}\|_2^2\right).
    \end{align*}
    Noting that for any $c, d$, $c+d \leq 2\max(c,d)$ completes the proof. 
\end{proof}

\section{Join-Correlation Estimation}
\label{sec:applications}
In addition to our theoretical results, we perform an empirical evaluation of Threshold and Priority Sampling for inner product sketching. One of our main motivating applications is \emph{join-correlation estimation}~\cite{SantosBessaChirigati:2021, EsmailoghliQuiane-RuizAbedjan:2021}. This problem has previously been addressed using (unweighted) consistent sampling methods, like the KMV sketch \cite{SantosBessaChirigati:2021, SantosBessaMusco:2022}. In this section, we show how it can be solved using \emph{any} inner product sketching algorithm in a black-box way, expanding the toolkit of methods that can be applied to the task.

\smallskip\noindent\textbf{Problem Statement.} The join-correlation problem consists of computing the Pearson's correlation coefficient between two data columns that originally reside in different data tables. Specifically, we are interested in the correlation between values that would appear in the columns \emph{after} performing an (inner) join on the tables, i.e., values for which the same key appears in both tables. We call this quantity the \emph{post-join correlation}, or simply the \emph{join-correlation}. As a concrete illustration, consider the example tables in Figure~\ref{fig:example-tables-join-correlation}(a). The goal of join-correlation estimation is to approximate the correlation $\rho_{\bv{x}, \bv{y}}$ between the vectors $\bv x$ and $\bv y$ from $\mathcal{T}_{A\bowtie B}$. 

The join-correlation problem arises in dataset search applications, where the goal is to  discover new data to augment  a query dataset, e.g., to improve predictive models \cite{LiuChaiLuo:2022, IonescuHaiFragkoulis:2022,ChepurkoMarcusZgraggen:2020}. 
In such applications, we typically want to estimate join-correlation for columns in a query table and those in a large collection of other data tables. Accordingly, the brute-force approach that explicitly joins tables and computes the correlation between attributes is infeasible. 

Prior work proposes to use sketching as an efficient alternative. The idea is to pre-process (i.e., sketch) the collection of tables in advance, so that join-correlation between columns in any two tables $\mathcal{T}_{A}$ and $\mathcal{T}_B$ can be evaluated \emph{without explicitly materializing the join ${A\bowtie B}$}. 
Specifically, Santos et al.~\cite{SantosBessaChirigati:2021} propose an extension of KMV sketches that uniformly samples entries from each table, and then uses the join between the sketches to estimate correlation. Unfortunately, just like inner product estimation, this approach can suffer when $\mathcal{T}_{A}$ and $\mathcal{T}_{B}$ contain entries with widely varying magnitude: larger entries often contribute more to the correlation, but are not selected with higher probability by the KMV sketch.


\begin{figure}[t]
\centering
\footnotesize
\begin{subfigure}[b]{\linewidth}
\parbox[t]{.25\linewidth}{
  \centering
  \begin{tabular}[t]{|c|c|}
    \multicolumn{2}{c}{$\mathcal{T}_A$}        \\
    \hline
    \textbf{$\bv{k_a}$} & \textbf{$\bv{a}$} \\
    \hline
    \rowcolor{gray!30} 3  & 2.5  \\
    6  & 2.3  \\
    \rowcolor{gray!30} 8 & 4   \\
    \rowcolor{gray!30} 11 & 0.5  \\
    \rowcolor{gray!30} 13 & 3    \\
    16 & -3.7 \\
    \hline
  \end{tabular}
}
\parbox[t]{.25\linewidth}{
    \centering
    \begin{tabular}[t]{|c|c|}
    \multicolumn{2}{c}{$\mathcal{T}_B$}        \\
    \hline
    \textbf{$\bv{k_b}$} & $\bv{b}$ \\
    \hline
    \rowcolor{gray!30} 3  & -3.1  \\
    7  & 0.4  \\
    \rowcolor{gray!30} 8 & -4.2 \\
    10 & 1.5  \\
    \rowcolor{gray!30} 11 & 1    \\
    \rowcolor{gray!30} 13 & -2.6    \\
    14 & -5.9 \\
    \hline
    \end{tabular}
}
\parbox[t]{.4\linewidth}{
    \centering
    \begin{tabular}[t]{|c|c|c|}
    \multicolumn{3}{c}{$\mathcal{T}_{A\bowtie B}$}        \\
    \hline
    \textbf{$\bv k_{a \bowtie b}$} & \textbf{$\bv x$} & {$\bv y$} \\
    \hline
    3	& 2.5	& -3.1 \\
    8	& 4	& -4.2 \\
    11	& 0.5	& 1 \\
    13	& 3    & -2.6 \\
    \hline
    \end{tabular}
    
} 
\vspace{-.25em}
\caption{The table $\mathcal{T}_{A \bowtie B}$ is the output of a join between tables $\mathcal{T}_A$ and $\mathcal{T}_B$. 
The goal of join-correlation estimation is to approximate the Pearson's correlation between the second two columns in $\mathcal{T}_{A \bowtie B}$.}
\label{fig:table-join-example}
\end{subfigure}

\vspace{-.5em}
\begin{subfigure}[b]{\linewidth}
\centering
\parbox{1.0\linewidth}{
\centering
\setlength{\tabcolsep}{0.35em}
\begin{tabular}{|c|c|c|c|c|c|c|c|c|c|c|c|c|c|c|c|c|}
\multicolumn{10}{c}{}        \\
    \hline
    index & \bf 1 & \bf 2 & \bf 3 & \bf 4 & \bf 5 & \bf 6 & \bf 7 & \bf 8 & \bf 9 & \bf 10 & \bf 11 & \bf 12 & \bf 13 & \bf 14 & \bf 15 & \bf 16 \\
    \hline \hline
    $\bv{a}$ & 0 & 0 & 2.5 & 0 & 0 & 2.3 & 0 & 4 & 0 & 0 & 0.5 & 0 & 3 & 0 & 0 & -3.7\\
    \hline
    $\bv{a}^2$ & 0 & 0 & 6.25 & 0 & 0 & 5.29 & 0 & 16 & 0 & 0 & .25 & 0 & 9 & 0 & 0 & 13.69\\
    \hline
    $\bv{1_a}$ & 0 & 0 & 1 & 0 & 0 & 1 & 0 & 1 & 0 & 0 & 1 & 0 & 1 & 0 & 0 & 1\\
    \hline
    \hline
    $\bv{b}$ & 0 & 0 & -3.1 & 0 & 0 & 0 & 0.4 & -4.2 & 0 & 1.5 & 1 & 0 & -2.6 & -5.9 & 0 & 0\\
    \hline
    $\bv{b}^2$ & 0 & 0 & 9.61 & 0 & 0 & 0 & .16 & 17.64 & 0 & 2.25 & 1 & 0 & 6.76 & 34.81 & 0 & 0\\
    \hline
    $\bv{1_b}$ & 0 & 0 & 1 & 0 & 0 & 0 & 1 & 1 & 0 & 1 & 1 & 0 & 1 & 1 & 0 & 0\\
    \hline
\end{tabular}
}
\vspace{-.25em}
\caption{
We define six sparse vectors $\bv{a}$, $\bv{a}^2$, $\bv{1_a}$, $\bv{b}$, $\bv{b}^2$, and $\bv{1_b}$ that encode the information in $\mathcal{T}_A$ with $\mathcal{T}_B$. In \cref{eq:final_jc_est}, we show how to express the join-correlation as a combination of inner products involving these vectors, which can be estimated with a sketching method.
}
\label{fig:table-join-vector}
\end{subfigure}

\vspace{-.3cm}
\caption{
Join-Correlation via inner product sketching. 
}
\vspace{-.5em}
\label{fig:example-tables-join-correlation}
\end{figure}

\myparagraph{Join-Correlation via Inner Product Sketching}
We show an alternative approach for attacking the join-correlation problem by reducing it to inner product estimation. The reduction allows us to take advantage of sketches like \revised{\wmh, Threshold Sampling, and Priority Sampling,} which naturally make use of weighted sampling. 

Referring again to Figure~\ref{fig:example-tables-join-correlation}(a), consider the vectors $\bv x$ and $\bv y$ from $\mathcal{T}_{A\bowtie B}$. Let $\overline{x}$ (resp. $\overline{y}$) denote the mean of $\bv x$ (resp. $\bv y$), $n$ denote the length of the vectors (number of rows in $\mathcal{T}_{A\bowtie B}$), $\Sigma_{\bv{x}}$ (resp. $\Sigma_{\bv{y}}$) denote the summation of all values in $\bv{x}$ (resp. $\bv{y}$), and $\Sigma_{\bv{x}^2}$ (resp. $\Sigma_{\bv{y}^2}$) denote the summation of all squared values of $\bv{x}$ (resp.~$\bv{y}$).
It can be verified that correlation coefficient between $\bv x$ and $\bv y$ can be rewritten as: \vspace{-.1cm}
\begin{align}\label{eq:correlation}
\rho_{\bv{x}, \bv{y}} 
= 
\frac{
    \langle \bv{x} - \overline{x}, \bv{y} - \overline{y} \rangle
}{
    \|\bv{x} - \overline{x}\|_2 \|\bv{y} - \overline{y}\|_2
}
= 
\frac{
    n \langle \bv{x}, \bv{y} \rangle - \Sigma_{\bv{x}} \Sigma_{\bv{y}}
}{
    \sqrt{
        n \Sigma_{\bv{x}^2} - \Sigma_{\bv{x}}^2
    }
    \sqrt{
        n \Sigma_{\bv{y}^2} - \Sigma_{\bv{y}}^2
    }
}.
\end{align}
Our observation is that all of the values in \cref{eq:correlation} can be computed using only inner product operations over vectors derived from tables $\mathcal{T}_{A}$ and $\mathcal{T}_{B}$ independently. The vectors are shown in Figure~\ref{fig:example-tables-join-correlation}(b): 
vectors $\bv{a}$ and $\bv{b}$ contain the values, with $\bv{a}_i$ (resp. $\bv{b}_i$) set to zero if key $i$ was not present in table $\mathcal{T}_{A}$ (resp. table $\mathcal{T}_{B}$). Vectors $\bv{1_a}$ and $\bv{1_b}$ are indicator vectors for the corresponding join keys in each table. Finally, $\bv{a}^2$ and $\bv{b}^2$ are  equal to $\bv{a}$ and $\bv{b}$ with an entrywise square applied. 
Using these vectors, we can compute all components of the correlation formula as inner products:
\begin{align*}
n &= \langle \bv{1_a}, \bv{1_b} \rangle, &
\Sigma_{\bv{x}} 
&= \langle \bv{a}, \bv{1_b} \rangle, &
\Sigma_{\bv{y}} 
&= \langle \bv{1_a}, \bv{b} \rangle, \\
\langle \bv{x}, \bv{y} \rangle 
&= \langle \bv{a}, \bv{b} \rangle,  &
\Sigma_{\bv{x}^2}  
&= \langle \bv{a}^2, \bv{1_b} \rangle, &
\Sigma_{\bv{y}^2} 
&= \langle \bv{1_a}, \bv{b}^2 \rangle.
\end{align*}

\noindent
In particular, we can rewrite $\rho_{\bv{x,y}}$ equivalently as:
\begin{equation}
\label{eq:final_jc_est}
\frac{
    \langle \bv{a}, \bv{b} \rangle \langle \bv{1_a}, \bv{1_b} \rangle - \langle \bv{a}, \bv{1_b} \rangle \langle \bv{1_a}, \bv{b} \rangle
}{
    \sqrt{
        \left(
            \langle \bv{1_a}, \bv{1_b} \rangle \langle \bv{a}^2, \bv{1_b} \rangle - \langle \bv{a}, \bv{1_b} \rangle^2
        \right)
        \left( 
            \langle \bv{1_a}, \bv{1_b} \rangle \langle \bv{b}^2, \bv{1_a} \rangle - \langle \bv{b}, \bv{1_a} \rangle^2
        \right)
    }
}.
\end{equation}
Given this formula, we can use any inner product sketching method to approximate join-correlation. In particular, given $\mathcal{T}_A$, we compute three separate sketches, one for each of $\bv{a}$, $\bv{a^2}$, $\bv{1_a}$. When combined with sketches for $\bv{b}$, $\bv{b^2}$, $\bv{1_b}$, we can estimate all of the inner products in \eqref{eq:final_jc_est} separately, and combine them to obtain an estimate for $\rho_{\bv{x}, \bv{y}}$. 

\revised{For data discovery, the vectors described above are often extremely sparse with limited overlap between non-zero entries. Therefore, they are amenable to the sampling-based sketches studied in this paper, and benefit from our improvements over \eqref{eq:jl_guar}. In particular, the length of $\bv{a}$, $\bv{a^2}$, and $\bv{1_a}$ equals the total universe of possible keys, while the number of non-zeros in these vectors equals the number of keys in $\mathcal{T}_{A}$. The overlap between the non-zeros in $\bv{a}$, $\bv{a^2}$, and $\bv{1_a}$,  and those in $\bv{b}$, $\bv{b^2}$, and $\bv{1_b}$ is equal to the number of keys in common between $\mathcal{T}_{A}$ and $\mathcal{T}_{B}$, which can be very small. As an example, consider a data augmentation task where were wish to join a query data table, $\mathcal{T}_{A}$, with keys that are addresses in a single neighborhood to a statewide database of addresses in $\mathcal{T}_{B}$.}

\smallskip
\noindent\textbf{Optimization for Sampling-Based Sketches.}
In \Cref{sec:experiments}, we use the approach above to estimate correlation using linear sketching methods like CountSketch and JL. Given sketch size budget  $m$, we allocate $m/3$ space to sketching each of the three vectors $\bv{a}$, $\bv{a^2}$, and $\bv{1_a}$. Our final join-correlation sketch is then the concatenation of the equally sized sketches $\mathcal{S}(\bv{a})$, $\mathcal{S}(\bv{a}^2)$, and $\mathcal{S}(\bv{1_a})$. We take roughly the same approach \revised{for Threshold and Priority Sampling.} However, in a sampling-based sketch, if we select index $i$ when sketching \emph{any} of the three vectors $\bv{1_a}$, $\bv{a}$, and $\bv{a^2}$, then we might as well use the index in estimating inner products involving \emph{all} three. In particular, by storing the single key/value pair $(i,\bv{a}_i)$, we can compute the information $(i,1)$, $(i,\bv{a}_i)$, and $(i,\bv{a}_i^2)$ needed to estimate all inner products. We take advantage of this fact to squeeze additional information out of our sketches.  Details of the resulting optimized approach are included in \subm{the extended version of this paper \cite{DaliriFreireMusco:2023b}}\arxiv{\Cref{app:correlation}}. 

\vspace{-.2cm}
\section{Experiments}
\label{sec:experiments}

\noindent\textbf{Baselines.} We assess the performance of our methods by comparing them to representative baselines\revised{, all of which were implemented in Python.} We include both sampling and linear sketching methods for inner product estimation:
\begin{itemize}[leftmargin=12pt]
    \item \textbf{\textit{Johnson-Lindenstrauss Projection (JL):}} For this \emph{linear sketch}, we use a dense random matrix  $\bs{\Pi}$ with scaled $\pm 1$ entries, which is equivalent to the AMS sketch~\cite{AlonMatiasSzegedy:1999,Achlioptas:2003}.
    \item \revised{\textbf{\textit{CountSketch (CS):}} The classic \textit{linear sketch} introduced in~\cite{CharikarChenFarach-Colton:2002}, and also studied under the name Fast-AGMS sketch in \cite{CormodeGarofalakis:2005}. We use one repetition of the sketch.\footnote{\revised{While prior work suggests partitioning the sketch budget and taking the median of multiple independent estimators \cite{LarsenPaghTetek:2021}, we found that doing so slightly decreased accuracy in our experiments.}}
    }
    \item \textbf{\textit{Weighted MinHash Sampling (\wmhsexperiments):}} The method described in \cite{BessaDFMMSZ:2023}, which is the first sketch with tighter theoretical bounds than linear sketching for inner product estimation.
    \item \textbf{\textit{MinHash Sampling (MH):}} Also described in \cite{BessaDFMMSZ:2023}, MH is similar to Weighted MinHash, but indices are sampled uniformly at random from $\bv{a}$, not with probability proportional to $\bv{a}_i^2$.  
    \item \textbf{\textit{Uniform Priority Sampling (\PSuniform):}} The same as our Priority Sampling method, but the rank of index $i$ in \Cref{alg:priority_sampling} is chosen without taking the squared magnitude $\bv{a}_i^2$ into account, so indices are sampled uniformly. This method is equivalent to the KMV-based inner product sketch implemented in \cite{BessaDFMMSZ:2023}.
    \item \textbf{\textit{Uniform Threshold Sampling (\TSuniform):}} The same as our Threshold Sampling method, but $\bv{a}_i^2$ is not taken into account when computing $\tau_i$, so indices are sampled uniformly. 
\end{itemize}
To distinguish from the uniform sampling versions, our proposed Threshold and Priority Sampling methods are called \textbf{\TSweighted} and \textbf{\PSweighted} in the remainder of the section. In addition to the baselines above, we implemented and performed initial experiments using the End-Biased sampling method from \cite{EstanNaughton:2006}, which is equivalent to Threshold Sampling (\Cref{alg:threshold_sampling}), but with probability proportional to ${|\bv{a}_i|}/{\|\bv{a}\|_1}$. More details on how to implement this method, as well as \TSuniform and \PSuniform are included in \subm{the extended version of this paper \cite{DaliriFreireMusco:2023b}}\arxiv{\Cref{app:alternative_sampling_prob}}. As shown in \Cref{exp:synth_data}, End-Biased sampling performed slightly worse than our version of Threshold Sampling, which also enjoys stronger theoretical guarantees. So, we excluded End-Biased sampling from the majority of our experiments for conciseness and plot clarity.
\revised{We also note that there are other versions of linear sketching designed to speed up computation time in comparison to the classic JL/AMS sketch \cite{Achlioptas:2003, RusuDobra:2008}. We focus on CountSketch/Fast-AGMS because it is one of the most widely studied of these methods, and runs in $O(n)$ time with a small constant factor. As such, it offers a challenging baseline for our sampling methods in terms of computational efficiency.}

\myparagraph{Storage Size}
For linear sketches, we store the output of the matrix multiplication $\bs{\Pi}\bv{a}$ as 64-bit doubles. For sampling-based sketches, both samples (64-bit doubles) and hash values (32-bit ints) need to be stored.
As a result, a sampling sketch with $m$ samples takes $1.5x$ as much space as a linear sketch with $m$ entries.
In our experiments, \emph{storage size} denotes the total number of bits in the sketch divided by 64, i.e., the total number of doubles that the sketch equates to.
Storage size is fixed for all methods except \thresholdsampling, for which we report the expected storage size. 
We note that there are variants of linear sketching that further compress $\bs{\Pi}\bv{a}$ by thresholding or rounding its entries, e.g., SimHash~\cite{Charikar:2002} and quantized JL methods \cite{Jacques:2015}. While an interesting topic for future study, we do not evaluate these methods because quantization can be used to reduce the sketch size of \emph{all methods}. For instance, for sampling-based sketches, we do not need to store full 64-bit doubles. Evaluating optimal quantization strategies is beyond the scope of this work. 

\myparagraph{Estimation Error}
To make it easier to compare across different datasets, when estimating inner products, we define the following error measure: the absolute difference between ground truth inner product $\langle \bv{a}, \bv{b}\rangle$ and the estimate, scaled by $1/\|\bv{a}\|_2\|\bv{b}\|_2$. Given that most methods tested (except the uniform sampling methods) achieve an error guarantee at least as good as \cref{eq:jl_guar}, this scaling roughly ensures that reported errors lie between $0$ and $1$.

\vspace{-.2cm}
\subsection{Estimation Accuracy for Synthetic Data}
\label{exp:synth_data}
\noindent\textbf{Synthetic Data.} We ran experiments on synthetic data to validate the performance of 
our methods in a controlled setting.
To contrast the behavior of linear sketching and weighted sampling methods like \wmhsexperiments, \TSweighted, and \PSweighted, we generate vector pairs $\bv{a},\bv{b}$ with varying amounts of overlap, $\mathcal{I}$, between their non-zero entries ($1\%$ to $100\%$). 
This allows us to verify our theoretical results: when $|\mathcal{I}|$ is large, we expect linear sketching and sampling to perform similarly since the linear sketching error bound of $\epsilon \|\bv{a}\|_2\|\bv{b}\|_2$ is closer to our bound of $\epsilon \cdot \max(\|\bv{a}\|_2\|\bv{b}_{\mathcal{I}}\|_2, \|\bv{a}\|_2\|\bv{b}_{\mathcal{I}}\|_2)$. When $|\mathcal{I}|$ is small, we expect a bigger difference.
 
We generate 100 pairs of synthetic vectors, \revised{each with 100,000 entries, 20,000 of which are non-zero.}
The locations of non-zero entries are randomly selected with a specific overlap $\mathcal{I}$, and their values are uniformly drawn from the interval $[-1,1]$.
Then, \revised{$2\%$ of entries are chosen randomly as outliers.} We include outliers to differentiate the performance of weighted sampling methods from \revised{their uniform counterparts (MH, \TSuniform and \PSuniform). If all entries have similar magnitude, weighted and uniform sampling are essentially the same.} \revised{Outliers are chosen to be uniform random numbers between $0$ and $10$}, which are fairly moderate values. For datasets with even larger outliers, we expect an even more pronounced difference between weighted and unweighted sampling.

\begin{figure}[t]
	\centering
        \begin{subfigure}{.9\columnwidth} 
		\centering
		\includegraphics[width=1.0\linewidth]{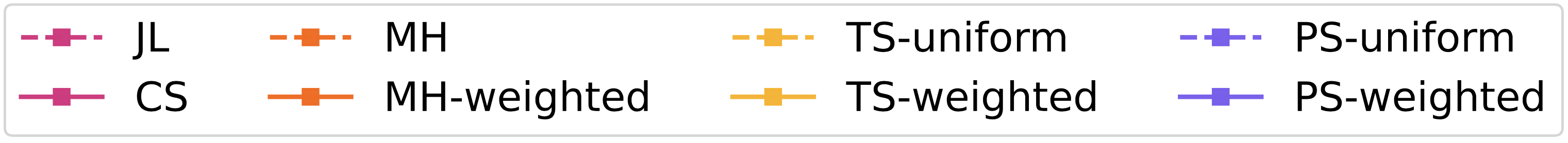}
	\end{subfigure}
 
	\begin{subfigure}{.49\columnwidth} 
		\centering
		\includegraphics[width=1.0\linewidth]{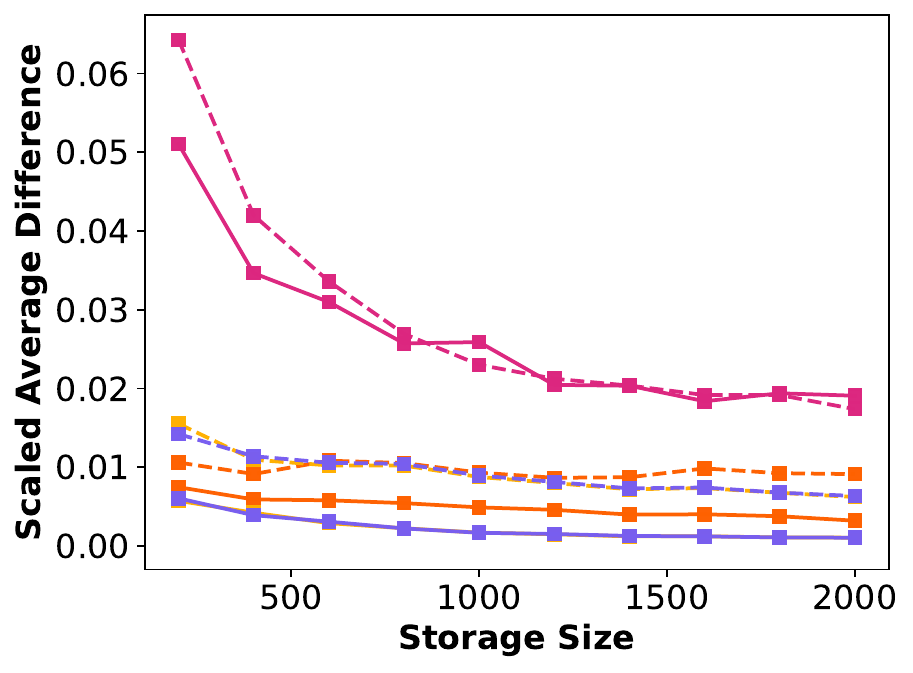}
		\vspace{-1.8em}
		\caption{1\% overlap}
	\end{subfigure}
	\begin{subfigure}{.49\columnwidth} 
		\centering
		\includegraphics[width=1.0\linewidth]{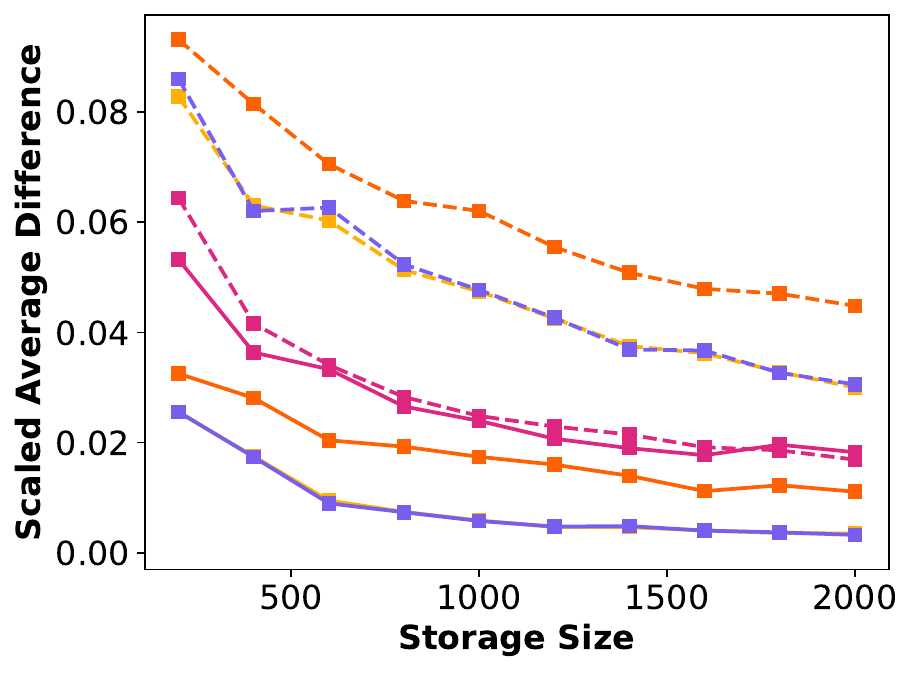}
		\vspace{-1.8em}
		\caption{10\% overlap}
	\end{subfigure}%
 
	\begin{subfigure}{.49\columnwidth} 
		\centering
		\includegraphics[width=1.0\linewidth]{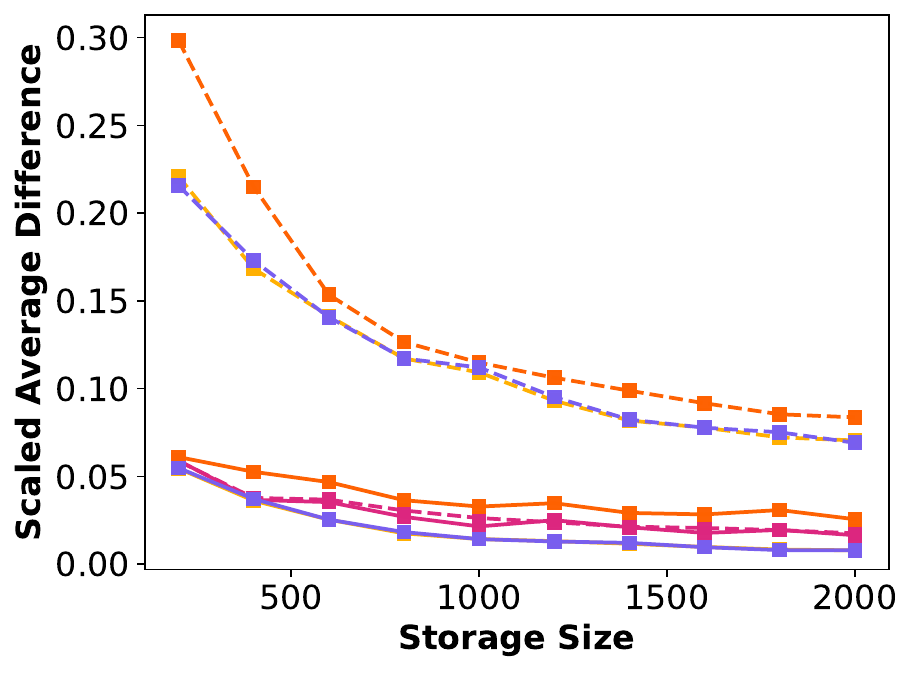}
				\vspace{-1.8em}
		\caption{50\% overlap}
	\end{subfigure}
	\begin{subfigure}{.49\columnwidth} 
		\centering
		\includegraphics[width=1.0\linewidth]{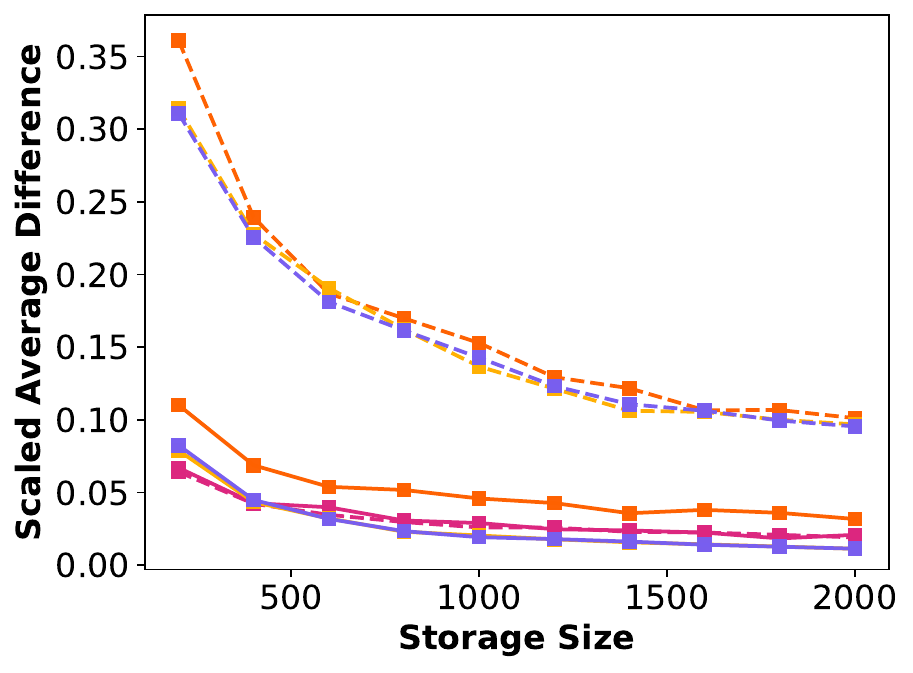}
				\vspace{-1.8em}
		\caption{100\% overlap}
	\end{subfigure}
\vspace{-.75em}
	\caption{\revised{
 Inner product estimation for real-valued synthetic data. The lines for \PSuniform and \TSuniform overlap, as do the lines for our \PSweighted and \TSweighted methods. As predicted by our theoretical results, \PSweighted and \TSweighted consistently outperform all other baselines.
 }
 \vspace{-.2cm}
 }
	\label{fig:InnerProductEst_Synthetic}
\vspace{-.4em}
\end{figure}

 
 

\vspace{-.15cm}
\subsubsection{Inner Product Estimation}
\Cref{fig:InnerProductEst_Synthetic} shows the scaled average difference between the actual and estimated inner product for the different techniques. 
The plot is consistent with our theoretical findings: \TSweighted and \PSweighted are more accurate than all other methods for all levels of overlap. They consistently outperform the prior state-of-the-art sampling sketch, \wmhsexperiments.
For very low overlap, even unweighted sampling methods  (MH, \TSuniform, and \PSuniform) outperform linear sketches (JL, CS), but this advantage decreases as overlap increases. 
Note that when overlap  is above $50\%$, the performance of linear sketching is comparable to \wmhsexperiments. However, our proposed methods, \TSweighted and \PSweighted, continue to outperform linear sketching, even in this regime.

\begin{figure}[t]
	\vspace{.59em}
	\centering
	\begin{subfigure}{.9\columnwidth} 
		\centering
		\includegraphics[width=1.0\linewidth]{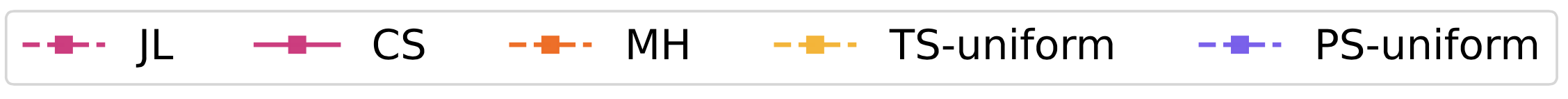}
	\end{subfigure}
	
	\begin{subfigure}{.49\columnwidth} 
		\centering
		\includegraphics[width=1.0\linewidth]{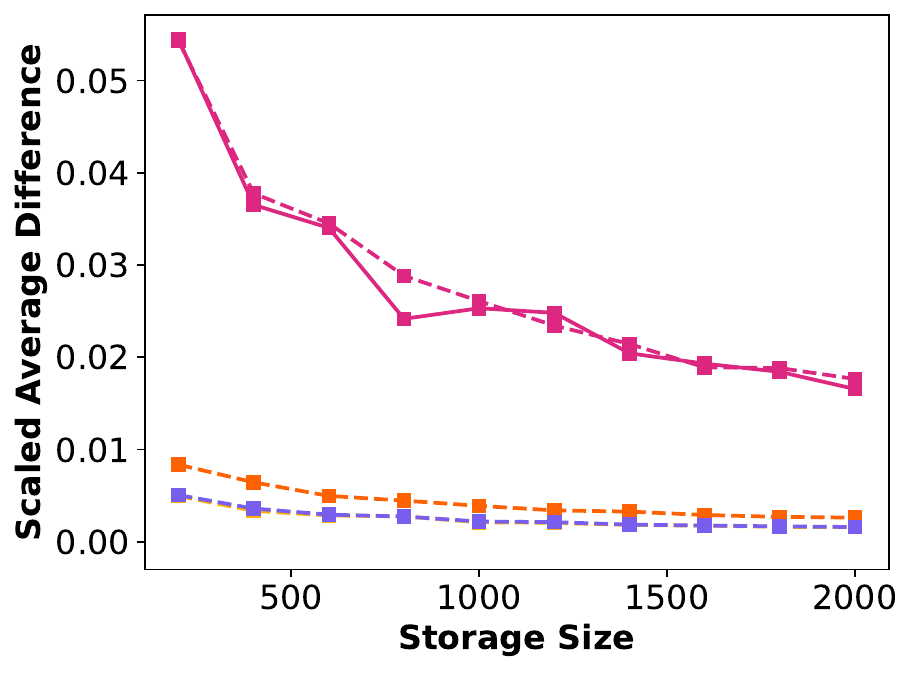}
		\vspace{-1.8em}
		\caption{1\% overlap}
	\end{subfigure}
	\begin{subfigure}{.49\columnwidth} 
		\centering
		\includegraphics[width=1.0\linewidth]{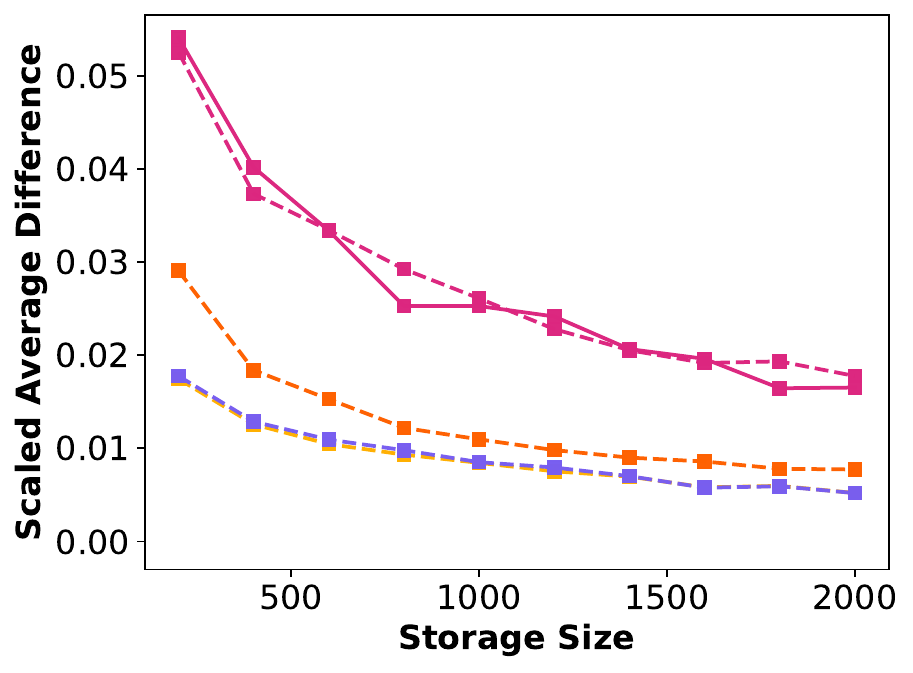}
		\vspace{-1.8em}
		\caption{10\% overlap}
	\end{subfigure}%
	
	\begin{subfigure}{.49\columnwidth} 
		\centering
		\includegraphics[width=1.0\linewidth]{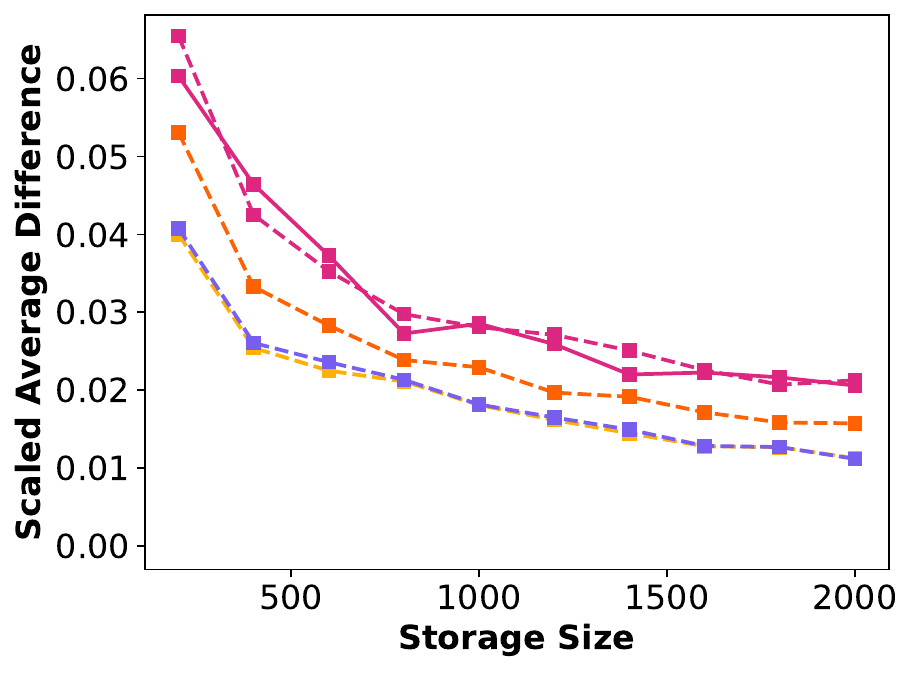}
				\vspace{-1.8em}
		\caption{50\% overlap}
	\end{subfigure}
	\begin{subfigure}{.49\columnwidth} 
		\centering
		\includegraphics[width=1.0\linewidth]{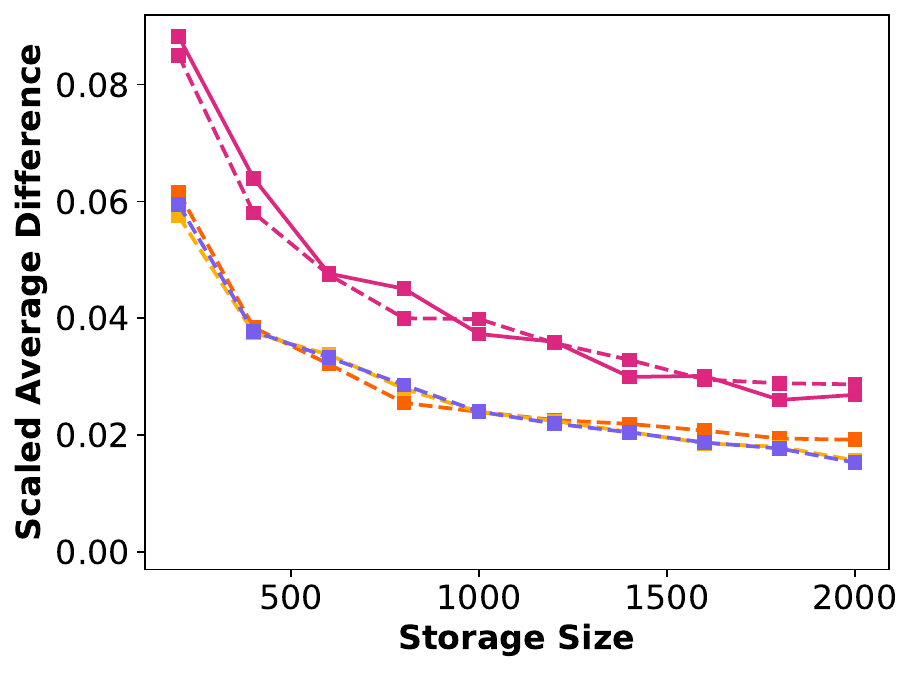}
				\vspace{-1.8em}
		\caption{100\% overlap}
	\end{subfigure}
\vspace{-.75em}
	\caption{\revised{
 Inner product estimation for synthetic binary data. Weighted sampling methods are excluded since they are equivalent to their unweighted counterparts for binary vectors. Our \PSuniform and \TSuniform methods outperform both linear sketches and MH for computing inner products.
 }
 }
	\label{fig:InnerProductEst_Synthetic_Binary}
\vspace{-1em}
\end{figure}

	
	

\vspace{.2em}
\subsubsection{Binary Inner Product Estimation} 

\revised{
We also evaluate inner product estimation for binary $\{0,1\}$ vectors, which can be applied to problems like join size estimation for tables with unique keys~\cite{CormodeGarofalakis:2016} and set intersection estimation. Set intersection has been studied e.g., for applications like document similarity estimation \cite{LiKonig:2011,PaghStockelWoodruff:2014,Broder:1997}.}
We use the same synthetic data as before, except that all non-zero entries are set to 1. 
Results are presented in \Cref{fig:InnerProductEst_Synthetic_Binary}. Note that weighted sampling methods (\wmh, \TSweighted, and \PSweighted) are not included because they are exactly equivalent to the unweighted methods for binary vectors.
All of the sampling methods clearly outperform linear sketching, and the gap is most significant when the overlap is small, as predicted by our theoretical results.

\vspace{-.15cm}
\subsubsection{Join-Correlation Estimation} 
As discussed in \Cref{sec:applications}, post-join correlation estimation can be cast as an inner product estimation problem involving three vectors derived from a data column, which we denote $\bv{a}$, $\bv{a}^2$, and $\bv{1_{a}}$. We do not explicitly construct synthetic database columns but instead generate vectors $\bv{a}$ and $\bv{b}$ as before, and derive $\bv{a}^2$, $\bv{1_{a}}$, $\bv{b}^2$, and $\bv{1_{b}}$ based on them. \revised{We set the overlap between vector pairs to $10\%$ and} control the correlation between the vectors (which are generated randomly) using a standard regression-based method for adjusting correlation~\cite{Howell:2018}.
For the linear sketching methods, we split the storage size evenly among the sketches for all three vectors and estimate correlation as discussed in \Cref{sec:applications}. For the uniform sampling methods (MH, \TSuniform, and \PSuniform), we instead follow the approach from \cite{SantosBessaChirigati:2021}, computing a single sketch for each of $\bv{a}$ and $\bv{b}$ and then estimating the empirical correlation of the sampled entries. For \TSweighted and \PSweighted, we use our new method described in \Cref{sec:applications}.

As \Cref{fig:JoinCorrEst_Synthetic} shows, MH, \TSuniform, and \PSuniform perform well despite the lack of weighted sampling. This is consistent with observations in prior work on the \PSuniform (KMV) method~\cite{SantosBessaChirigati:2021}. 
Even without weighting, these sketches are able to benefit from the advantage of data sparsity, unlike linear sketches.
Nonetheless, our \TSweighted and \PSweighted outperform all other approaches in terms of accuracy vs. sketch size. We note that we use the optimized variants of these methods discussed in \Cref{sec:applications}.

\vspace{-.15cm}
\subsubsection{Comparison to End-Biased Sampling}
\begin{figure}[t]
	\centering
	\centering
	\begin{subfigure}{.99\columnwidth} 
		\centering
		\includegraphics[width=1.0\linewidth]{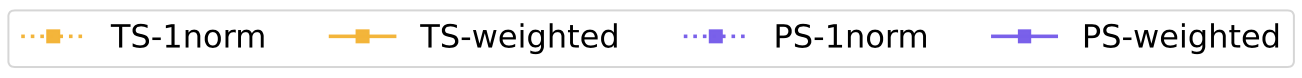}
	\end{subfigure}
	
	\begin{subfigure}{.49\columnwidth} 
		\centering
		\includegraphics[width=1.0\linewidth]{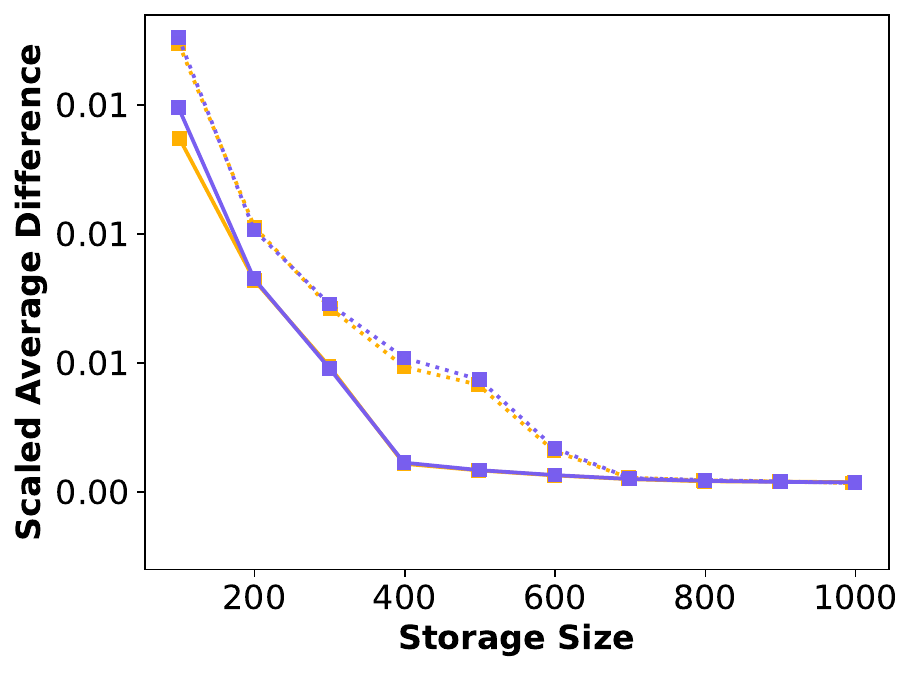}
		\vspace{-1.8em}
		\caption{1\% overlap}
	\end{subfigure}
	\begin{subfigure}{.49\columnwidth} 
		\centering
		\includegraphics[width=1.0\linewidth]{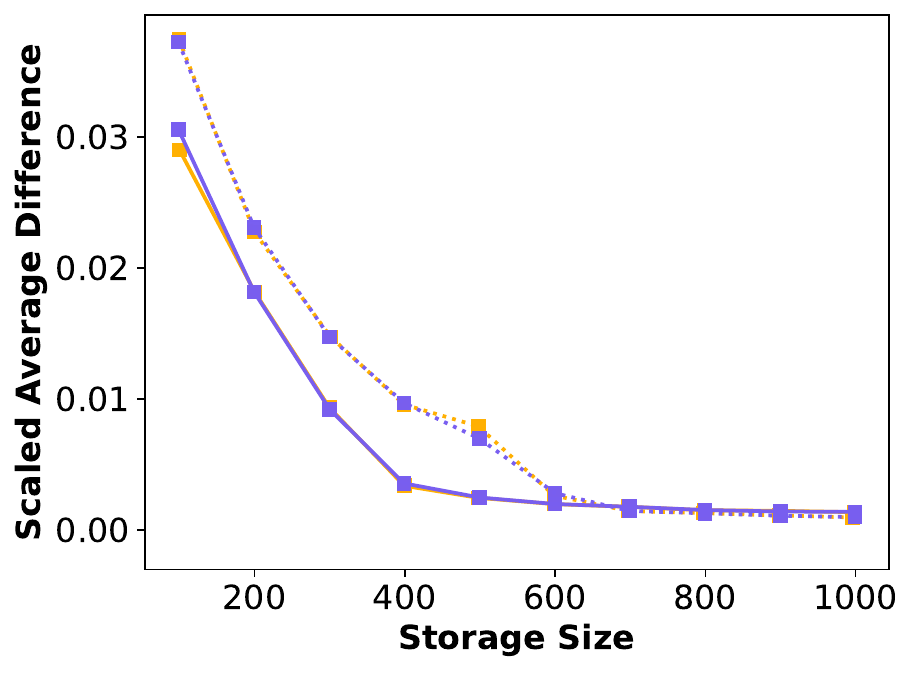}
		\vspace{-1.8em}
		\caption{10\% overlap}
	\end{subfigure}%
	
	\begin{subfigure}{.49\columnwidth} 
		\centering
		\includegraphics[width=1.0\linewidth]{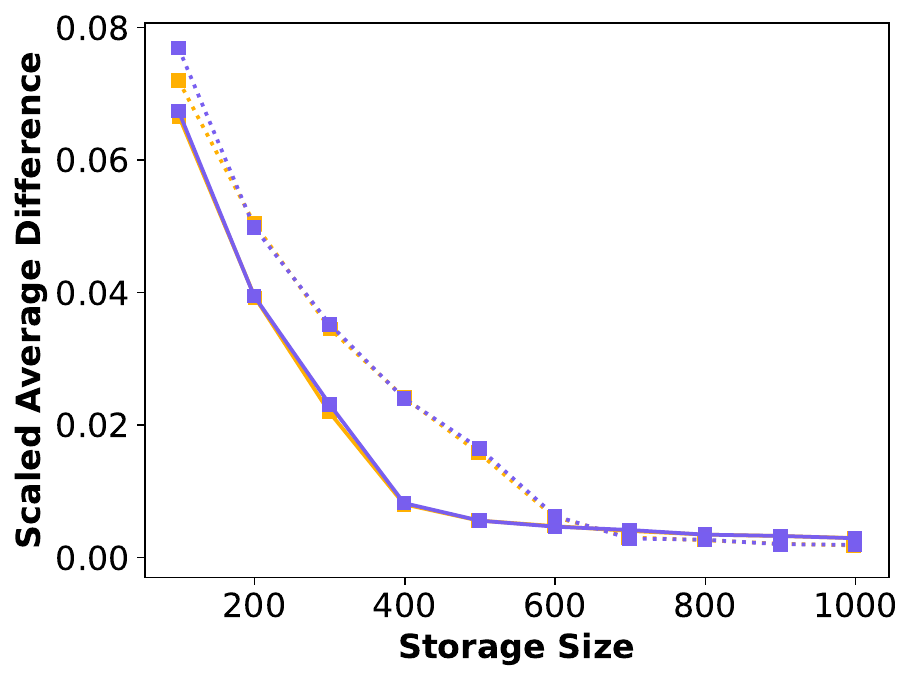}
		\vspace{-1.8em}
		\caption{50\% overlap}
	\end{subfigure}
	\begin{subfigure}{.49\columnwidth} 
		\centering
		\includegraphics[width=1.0\linewidth]{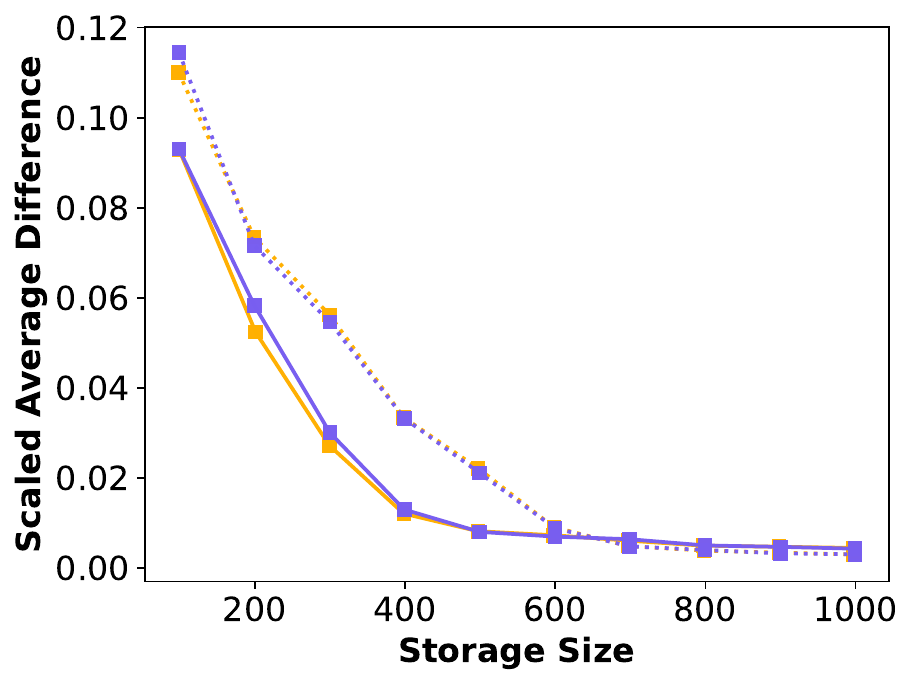}
		\vspace{-1.8em}
		\caption{100\% overlap}
	\end{subfigure}
	\vspace{-.75em}
	\caption{Comparison of End-Biased Sampling (TS-1norm) and its Priority Sampling counterpart (PS-1norm) against our  \TSweighted and \PSweighted methods.  
	}
	\vspace{-.75em}
	\label{fig:InnerProductEst_Synthetic_Comparel1l2}
\end{figure}

\revised{As mentioned, we also considered adding End-Biased Sampling \cite{EstanNaughton:2006} as a baseline. This method is equivalent to \thresholdsampling, but samples vector entries with probability proportional to their magnitude, normalized by the vector $\ell_1$ norm. We refer to this as $\ell_1$ sampling to highlight the difference between our methods, which sample based on \emph{squared magnitude} normalized by the $\ell_2$ norm. A variant of \prioritysampling can also be implemented using  $\ell_1$ sampling. We found that End-Biased Sampling performed similarly, but never significantly better than, \thresholdsampling. This is shown in \Cref{fig:InnerProductEst_Synthetic_Comparel1l2}, which uses the same experimental setting as \Cref{fig:InnerProductEst_Synthetic}.}

\begin{figure}[t]
	\centering
	\begin{subfigure}{.99\columnwidth} 
		\centering
		\includegraphics[width=1.0\linewidth]{figs/legend_ip.png}
	\end{subfigure}
	
	\begin{subfigure}{.49\columnwidth} 
		\centering
		\includegraphics[width=1.0\linewidth]{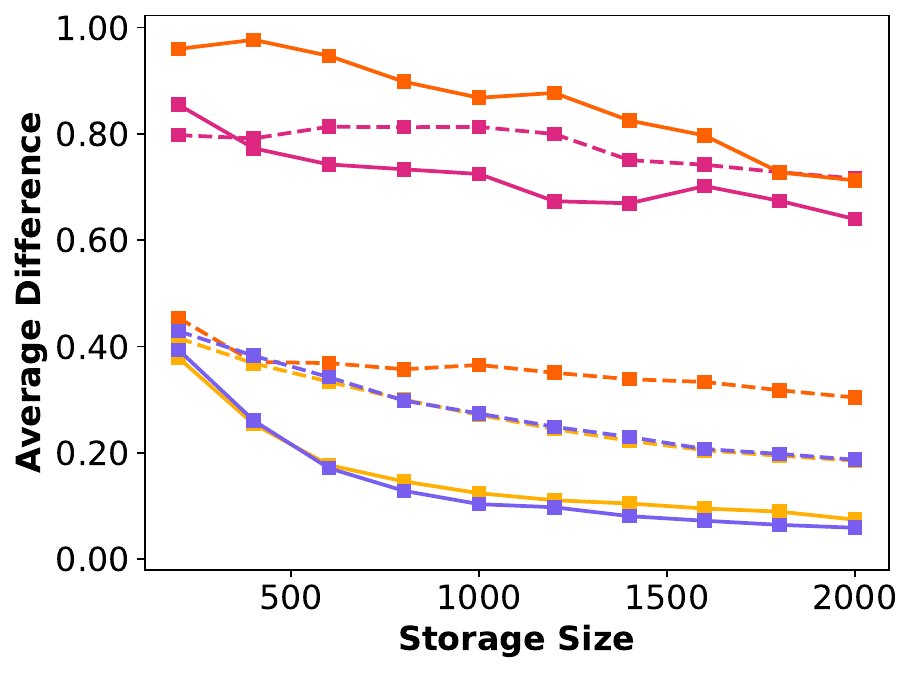}
		\vspace{-1.8em}
		\caption{Correlation -0.2}
	\end{subfigure}
	\begin{subfigure}{.49\columnwidth} 
		\centering
		\includegraphics[width=1.0\linewidth]{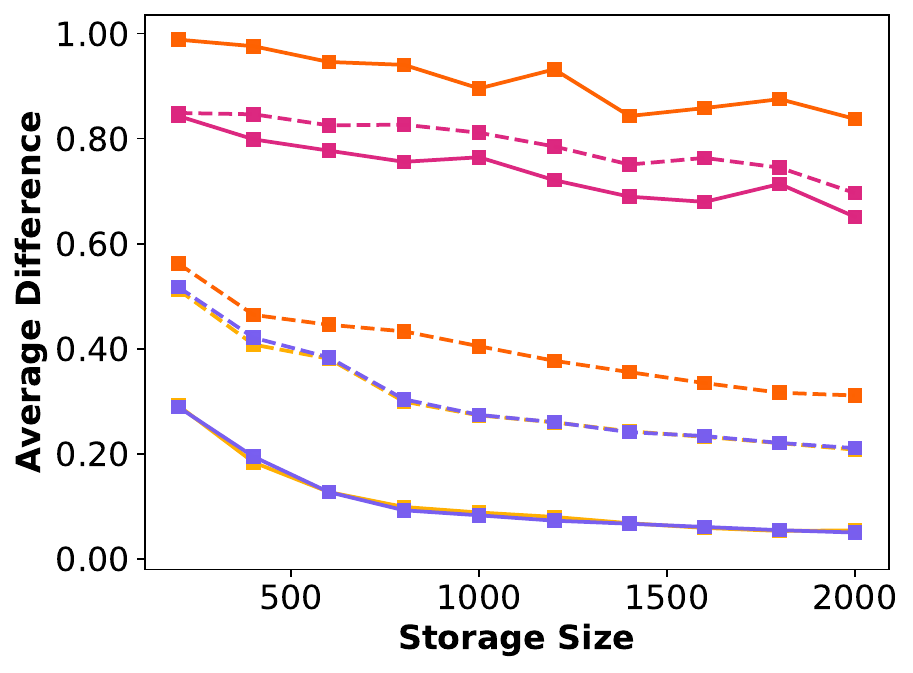}
		\vspace{-1.8em}
		\caption{Correlation 0.4}
	\end{subfigure}%
	
	\begin{subfigure}{.49\columnwidth} 
		\centering
		\includegraphics[width=1.0\linewidth]{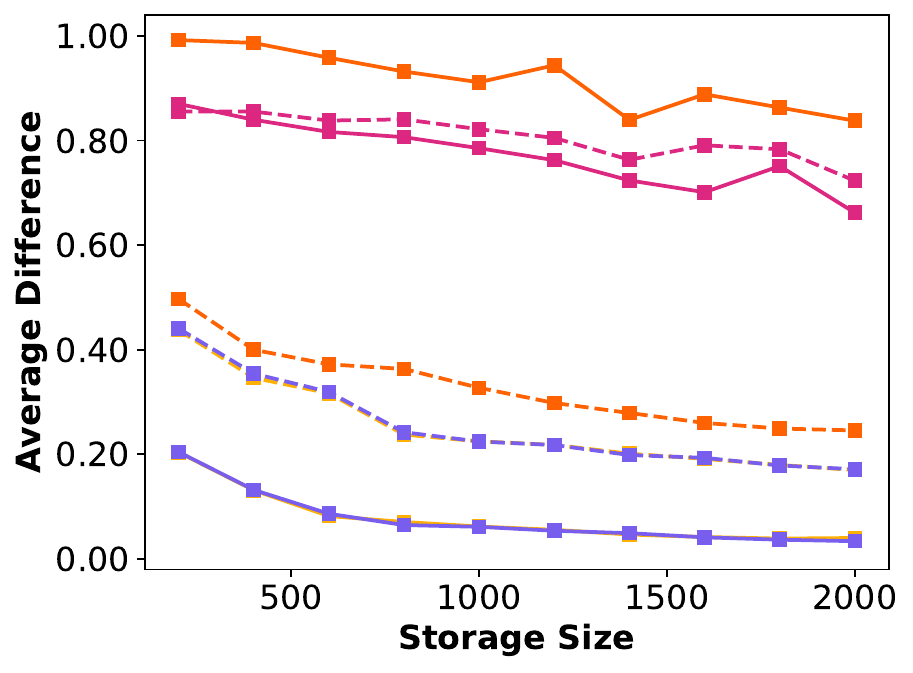}
				\vspace{-1.8em}
		\caption{Correlation -0.6}
	\end{subfigure}
	\begin{subfigure}{.49\columnwidth} 
		\centering
		\includegraphics[width=1.0\linewidth]{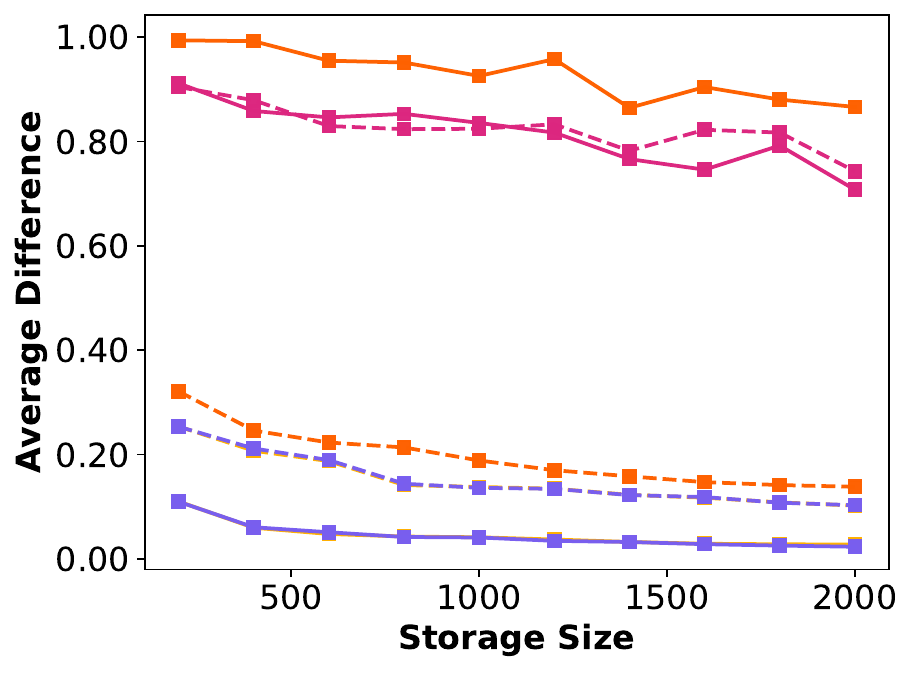}
				\vspace{-1.8em}
		\caption{Correlation 0.8}
	\end{subfigure}
\vspace{-.75em}
	\caption{\revised{
 Join-Correlation Estimation for synthetic data. The lines for \PSweighted and \TSweighted overlap, as do the lines for our \PSuniform and \TSuniform methods, which outperform all other baselines.
 }
 }
	\label{fig:JoinCorrEst_Synthetic}
		\vspace{-.6cm}
\end{figure}

\vspace{-.25cm}
\subsection{Runtime Performance}
As discussed in~\Cref{sec:intro}, it is also important to consider the time required to compute inner product sketches.
Threshold and \prioritysampling compute a sketch of size $m$ in time $O(N)$ and ${O}(N\log m)$, respectively, for a vector with $N$ non-zero entries, matching 
the complexity of the fastest methods like CountSketch, and improving on the $O(Nm)$ complexity of \wmh \cite{BessaDFMMSZ:2023}.
To see how this theoretical improvement translates to practice, we assess the \runtime efficiency of these methods using high-dimensional synthetic vectors with 250,000 entries, 50,000 of which are non-zero.
As above, non-zero entries are random values in $[-1,1]$, except 10\%  are chosen as outliers. However, for all methods considered, the precise values of entries should have little to no impact on \runtime.
%

\hide{
    \begin{subfigure}{0.5\textwidth}
        \centering
        \begin{small}
        \begin{tabular}{c r r r r r}
           \toprule
            \textbf{Sketch Size} & \textbf{1000} & \textbf{2000} & \textbf{3000} & \textbf{4000} & \textbf{5000} \\
            \midrule
            \textbf{JL} & 2.650s & 5.295s & 7.939s & 10.55s & 13.19s \\
           \textbf{CS} & 0.858s & 0.899s & 0.926s & 0.927s & 0.931s \\
            \textbf{MH} & 1.646s & 3.213s & 4.902s & 6.394s & 7.906s \\
            \textbf{\wmh} & 42.92s & 85.28s & 127.8s & 170.2s & 212.9s\\
            \textbf{DartMH} & 1.303s & 2.232s & 2.308s & 2.858s & 3.662s \\
            \textbf{\TSuniform} & 0.201s & 0.202s & 0.202s & 0.201s & 0.202s \\
            \textbf{\TSweighted} & 0.209s & 0.211s & 0.211s & 0.209s & 0.208s \\
            \textbf{\PSuniform} & 0.062s & 0.061s & 0.062s & 0.061s & 0.061s \\
            \textbf{\PSweighted} & \textbf{\textit{0.062s}} & \textbf{\textit{0.061s}} & \textbf{\textit{0.062s}} & \textbf{\textit{0.061s}} & \textbf{\textit{0.061s}}\\
           \bottomrule
      \end{tabular}
      \end{small}
    \end{subfigure}
} 

\begin{figure}
\centering
    \begin{subfigure}{0.99\columnwidth} 
		\centering
		\includegraphics[width=0.9\linewidth]{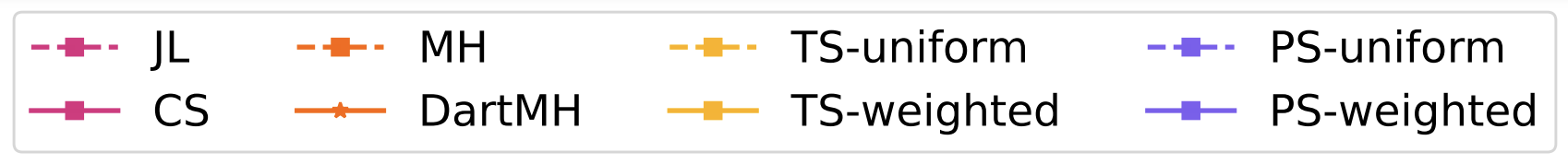}
    \end{subfigure}
 
    \begin{subfigure}{.95\columnwidth}
        \centering
        \includegraphics[width=.66\linewidth]{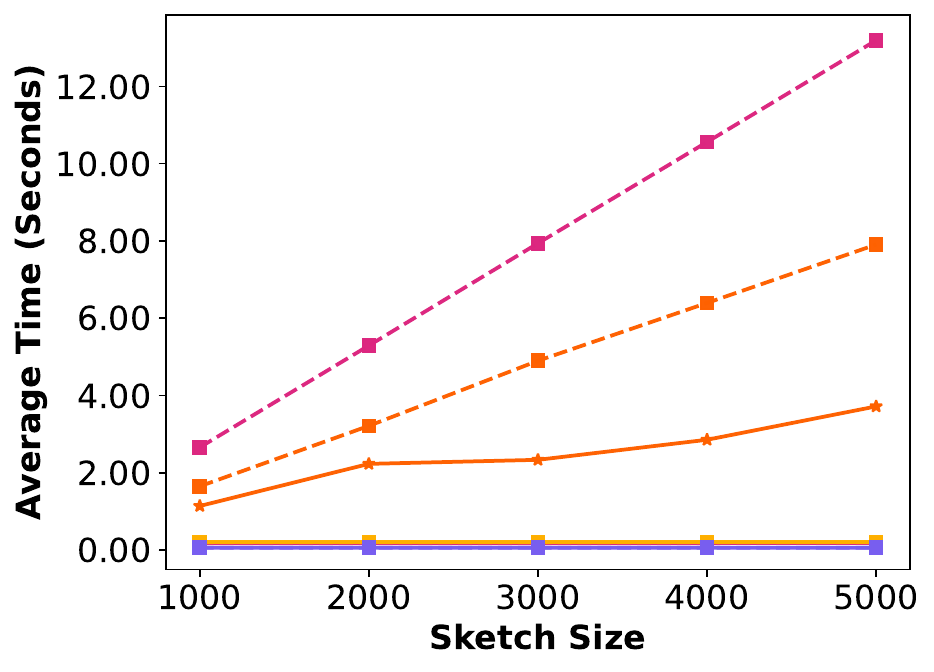}
        \vspace{-.3em}
    \end{subfigure}
    \vspace{-.75em}
    \caption{Sketch construction time. We omit \wmhsexperiments since its slow time would make it difficult to see the other lines. We see a clear linear dependence on the sketch size for JL and MH, and a milder dependence for DartMH. The \runtime of CountSketch,  \thresholdsampling, and \prioritysampling does not noticeably scale with the sketch size.}
    \label{fig:sketch_construction_time}
	\vspace{-.5cm}
\end{figure}

\hide{
\begin{table}
    \small
    \centering
    \begin{tabular}{c r r r r r}
    \toprule
    \textbf{Sketch Size} & \textbf{1000} & \textbf{2000} & \textbf{3000} & \textbf{4000} & \textbf{5000} \\
    \midrule
    \textbf{JL} & 2.650s & 5.295s & 7.939s & 10.55s & 13.19s \\
    \textbf{CS} & 0.180s & 0.184s & 0.181s & 0.180s & 0.191s  \\
    \textbf{MH} & 1.646s & 3.213s & 4.902s & 6.394s & 7.906s \\
    \textbf{\wmhs} & 42.92s & 85.28s & 127.8s & 170.2s & 212.9s\\
    \textbf{DartMH} & 1.303s & 2.232s & 2.308s & 2.858s & 3.662s \\
    \textbf{\TSuniform} & 0.201s & 0.202s & 0.202s & 0.201s & 0.202s \\
    \textbf{\TSweighted} & 0.209s & 0.211s & 0.211s & 0.209s & 0.208s \\
    \textbf{\PSuniform} & 0.062s & 0.061s & 0.062s & 0.061s & 0.061s \\
    \textbf{\PSweighted} & \textbf{\textit{0.062s}} & \textbf{\textit{0.061s}} & \textbf{\textit{0.062s}} & \textbf{\textit{0.061s}} & \textbf{\textit{0.061s}}\\
    \bottomrule
    \end{tabular}
    \caption{Sketch time (seconds) of different methods for different choices of sketch size. Although the basic \thresholdsampling method exhibits a slightly faster asymptotic runtime compared to the \prioritysampling method, our practical implementation of \thresholdsampling is actually slower. This discrepancy arises because, in order to achieve robust accuracy guarantees, we utilize \Cref{alg:threshold_sampling_adapt_computation} to dynamically select an optimal threshold value $\tau$.
    } 
    \label{table:sktech_time}
    \vspace{-.7cm}
\end{table}
}

In addition to our standard baselines, to evaluate runtime, we considered more efficient implementations of the \wmh algorithm from \cite{BessaDFMMSZ:2023}. That paper uses a sampling method studied in \cite{ManasseMcSherryTalwar:2010} and \cite{Ioffe:2010} that 1) requires $O(Nm)$ hash evaluations, and 2) requires an expensive discretization step. Several papers attempt to eliminate these limitations \cite{Shrivastava:2016, Ertl:2018}. 
We implement a recent, improved method called DartMinHash (DartMH) from \cite{Christiani:2020}, which runs in $O(N+m \log m)$. Details on the method are discussed in \subm{the extended version \cite{DaliriFreireMusco:2023b}}\arxiv{\Cref{subsec:fast_wmhs}}.

The times required by different methods to create sketches of varying sizes are shown in \cref{fig:sketch_construction_time}.
As expected, both our weighted and unweighted Threshold and \prioritysampling methods are significantly faster than the $O(Nm)$ time methods like \wmh, unweighted MinHash~(MH) and Johnson-Lindenstrauss (JL). With an average runtime of $.06$ seconds across all sketch sizes, \prioritysampling is competitive with  the less accurate CountSketch, whose average runtime is $.05$ seconds. \thresholdsampling was slightly slower, with an average time of $.21$ seconds. While this method has better asymptotic complexity than \prioritysampling (since there is no need to sort ranks), its slower empirical performance is due to the algorithm used to adaptively adjust the expected sketch size to exactly equal $m$ (discussed in \Cref{sec:thresholdsampling}). \revised{However, we emphasize that our results are primarily meant to illustrate coarse differences in runtime. Evaluating small differences between CountSketch, \prioritysampling, and \thresholdsampling would require more careful implementation in a low-level language, an effort we leave to future work. In any case, all algorithms offer extremely good performance, with no dependence on the size of the sketch.} 

The \wmh method from \cite{BessaDFMMSZ:2023} is not competitive with any of the other methods, requiring $43$ seconds to produce a sketch of size $1000$, and  $213$ seconds to produce a sketch of size $5000$. As such, it was omitted from \cref{fig:sketch_construction_time}. DartMH succeeds in speeding up the method, but even this optimized algorithm is between 20x and 60x more expensive than our \prioritysampling method. 

Finally, for completeness, we evaluated the estimation time for all sketches. As expected there are no significant differences, since the estimation procedure for both sampling and linear sketches amounts to a simple sum over the entries in the sketch. For sketches of size $5000$, estimation times ranged between $0.014$ms and $0.052$ms. 

\hide{
\begin{table}
    \small
    \centering
    \begin{tabular}{ c r r r r r }
    \toprule
    \textbf{Sketch Size} & \textbf{1000} & \textbf{2000} & \textbf{3000} & \textbf{4000} & \textbf{5000} \\
    \midrule
    \textbf{JL} & 0.014ms & 0.016ms & 0.017ms & 0.018ms & 0.021ms   \\
    \textbf{CS} & 0.046ms & 0.047ms & 0.049ms & 0.049ms & 0.052ms  \\
    \textbf{MH} & 0.013ms & 0.016ms & 0.018ms & 0.020ms & 0.023ms  \\
    \textbf{\wmhs} & 0.019ms & 0.024ms & 0.030ms & 0.036ms & 0.041ms  \\
    \textbf{DartMH} & 0.017ms & 0.019ms & 0.023ms & 0.026ms & 0.030ms  \\
    \textbf{\TSuniform} & 0.013ms & 0.018ms & 0.024ms & 0.029ms & 0.034ms \\
    \textbf{\TSweighted} & 0.017ms & 0.026ms & 0.035ms & 0.042ms & 0.050ms \\
    \textbf{\PSuniform} & 0.015ms & 0.021ms & 0.029ms & 0.035ms & 0.042ms \\
    \textbf{\PSweighted} & 0.017ms & 0.025ms & 0.035ms & 0.043ms & 0.051ms \\
    \bottomrule
    \end{tabular}
    \caption{Estimation time (millisecond) of different methods. 
    }
    \label{table:estimation_time}
    \vspace{-.6cm}
\end{table}
}

\vspace{-.25cm}
\subsection{Estimation Accuracy for Real-World Data }
In addition to synthetic data, we carry out experiments on real-world datasets for practical applications. We use World Bank Group Finances data \cite{WBF_2022} to assess sketching methods for inner product and join-correlation estimation. We also evaluate the performance of \thresholdandpriority for text similarity estimation on the 20 Newsgroups dataset~\cite{20newsgroups}, and for join-size estimation on the \revised{World Bank, Twitter~\cite{KwakLeeParkMoon2010}, and TPC-H datasets~\cite{TPCHSkew}}.

\vspace{-.15cm}
\subsubsection{World Bank Finances Data}
This collection consists of 56 tables~\cite{WBF_2022}, from which we randomly sampled 3,000 column pairs using the following approach (adapted from  ~\cite{SantosBessaChirigati:2021}).
A column pair is represented as ($\langle K_A, V_A \rangle$, $\langle K_B, V_B \rangle$), where $K_A$ and $K_B$ are join keys with temporal data, and $V_A$ and $V_B$ are columns with numerical values from tables $A$ and $B$.
Since there can be repeated keys in $K_A$ and $K_B$, we pre-aggregate the values in $V_A$ and $V_B$ associated with repeated keys into a single value by summing them. This ensures that each key is associated with a single vector index

\begin{table*}
    \centering
    \begin{small}
    \begin{tabular}{ c c c | c c c | c c c}
    \toprule
     & \textbf{Inner Product} & & & \textbf{Join-Correlation} & & & \textbf{Join Size} & \\
    \toprule
    \textbf{Method} & \textbf{Average Error} & $R^2$ \textbf{Score} &\textbf{Method} & \textbf{Average Error} & $R^2$ \textbf{Score}  &\textbf{Method} & \textbf{Average Error} & $R^2$ \textbf{Score} \\
    \midrule
    \textbf{\underline{\TSweighted}} & 0.009 & 0.998 & \textbf{\underline{\PSweighted}} & 0.066 & 0.863 & \textbf{\underline{\TSweighted}} & 0.018 & 0.919 \\
    \textbf{\underline{\PSweighted}} & 0.010 & 0.998 & \textbf{\underline{\TSweighted}} & 0.080 & 0.772 & \textbf{\underline{\PSweighted}} & 0.023 & 0.839 \\
    \textbf{CS} & 0.027 & 0.992 & \textbf{\PSuniform} & 0.104 & 0.697 & \textbf{TS-uniform} & 0.025 & 0.842 \\
    \textbf{\wmh} & 0.032 & 0.985 & \textbf{\TSuniform} & 0.107 & 0.704 & \textbf{PS-uniform} & 0.027 & 0.826 \\
    \textbf{JL} & 0.037 & 0.986 & \textbf{MH} & 0.124 & 0.534 & \textbf{MH} & 0.033 & -0.033 \\
    \textbf{TS-uniform} & 0.096 & 0.217 & \textbf{JL} & 0.203 & 0.347 & \textbf{\wmh} & 0.033 & 0.784 \\
    \textbf{PS-uniform} & 0.115 & 0.233 & \textbf{CS} & 0.210 & 0.417 & \textbf{CS} & 0.044 & 0.729 \\
    \textbf{MH} & 0.119 & -0.065 & \textbf{\wmh} & 0.298 & -0.212 & \textbf{JL} & 0.047 & 0.688 \\
    \bottomrule
    \end{tabular}
    \end{small}
    \caption{\revised{
    Inner product, correlation, and join size estimations for the World Bank data, ranked by average error. Our new \TSweighted and \PSweighted methods (underlined) have both the least average error and the best $R^2$ score for all three problems, although differences are more pronounced for inner product and correlation estimation.
    }
    }
    \label{table:InnerProduct_Corr_JoinSize_est_world_bank}
    \vspace{-.7cm}
\end{table*}

\myparagraph{Inner Product Estimation} 
We first evaluate  \thresholdandpriority on the basic task of computing inner products between the data columns. We normalize all columns to have unit Euclidean norm, which ensures the inner products have a consistent scale (and are upper bounded by $1$). Then we construct sketches of size 400 for all methods, which are used  to estimate inner products.
\revised{\Cref{table:InnerProduct_Corr_JoinSize_est_world_bank} shows the inner product estimation results ranked by the average error over all pairs of columns (a single trial each).} We also include the $R^2$ score, which measures the goodness of fit of the estimated inner products to the actual inner products. 
The best methods are our \TSweighted and \PSweighted, followed by \wmh and JL, which have average error roughly 3x larger.
These results underscore the effectiveness of the weighted sampling methods.
%

\begin{figure}[t]
	\centering
	\begin{subfigure}{.49\columnwidth} 
		\centering
		\includegraphics[width=1.0\linewidth]{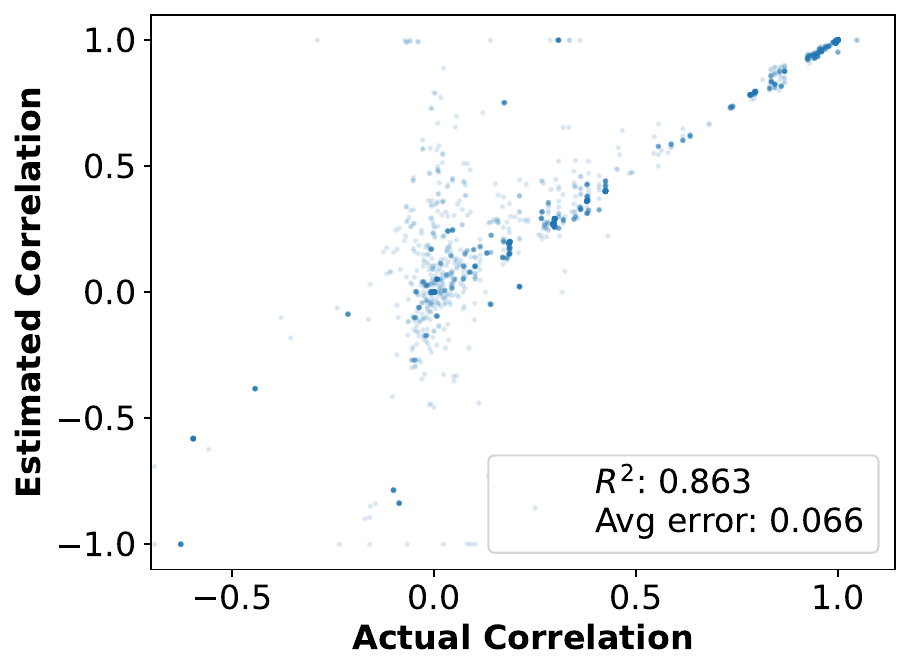}
		\vspace{-1em}
		\caption{\PSweighted}
	\end{subfigure}
	\begin{subfigure}{.49\columnwidth} 
		\centering
		\includegraphics[width=1.0\linewidth]{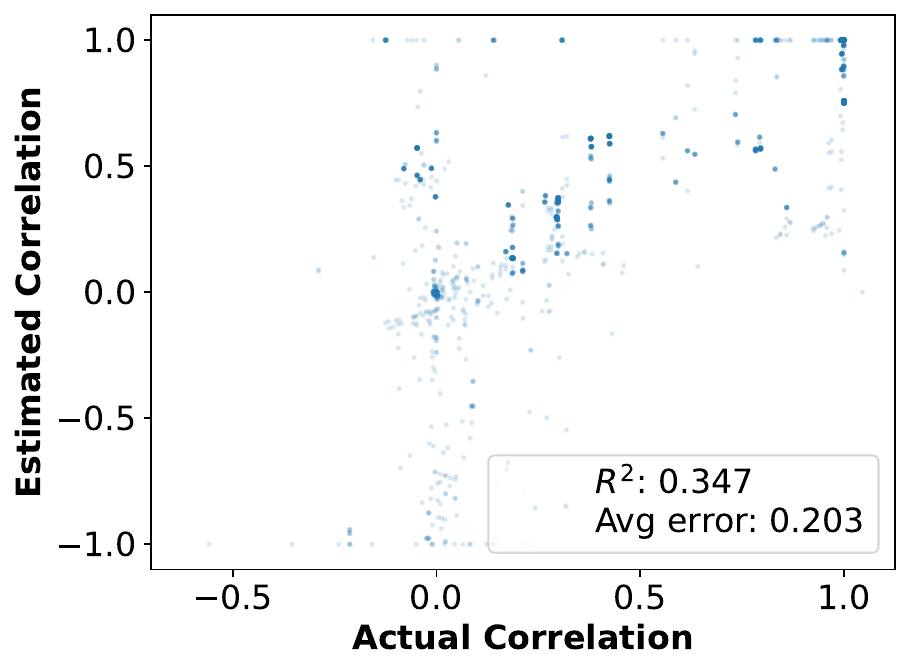}
		\vspace{-1em}
		\caption{Johnson-Lindenstrauss (JL)}
	\end{subfigure}
\vspace{-.5em}
	\caption{\revised{Join-correlation Estimation on World Bank data. The best sampling-based sketch (our \PSweighted method) captures correlations significantly more accurately than the best linear sketching method we tested (JL).}}
	\label{fig:JoinCorrEst_world_bank}
	\vspace{-1.5em}
\end{figure}

\myparagraph{Join-Correlation Estimation} 
We also evaluate the accuracy of \thresholdandpriority for join-correlation estimation using  
the  estimators described in \Cref{sec:applications}.
We consider the same pairs of columns used when estimating inner products above, and again use sketches of size $400$.
\revised{\Cref{table:InnerProduct_Corr_JoinSize_est_world_bank} shows the average error and $R^2$ score for all methods. \PSweighted and \TSweighted have the lowest average errors and the highest $R^2$ scores. They outperform the current state-of-the-art sketching method for join-correlation estimation, which is the 
 KMV-based sketch from \cite{SantosBessaChirigati:2021}. In the table we refer to this method as \PSuniform since it is identical to \prioritysampling with uniform weights.}
\Cref{fig:JoinCorrEst_world_bank} shows scatter plots of correlation estimates for our \PSweighted method (the best sampling-based method) and 
\revised{JL (the best linear sketching method).}
We note that there are a large number of points around zero; this is expected, since many of the datasets are not correlated. 

\revised{
\myparagraph{Join Size Estimation}
Finally, we evaluate our methods on the task of join size estimation using the same World Bank data, but without aggregating keys. We use the standard reduction between join size estimation and inner product computation with vectors containing key frequencies \cite{cormode2011sketch}.
Results are presented in \Cref{table:InnerProduct_Corr_JoinSize_est_world_bank}.
%
%
%
Since key frequencies vary, our weighted sampling methods, \TSweighted and \PSweighted, produce more accurate results. Linear sketching methods like CountSketch and JL perform worst.
}

\begin{figure}[t]
	\centering
	\begin{subfigure}{\columnwidth} 
		\centering
		\includegraphics[width=.99\linewidth]{figs/legend_ip.png} 
	\end{subfigure}
	\begin{subfigure}{0.49\columnwidth} 
		\centering
		\includegraphics[width=1.0\linewidth]{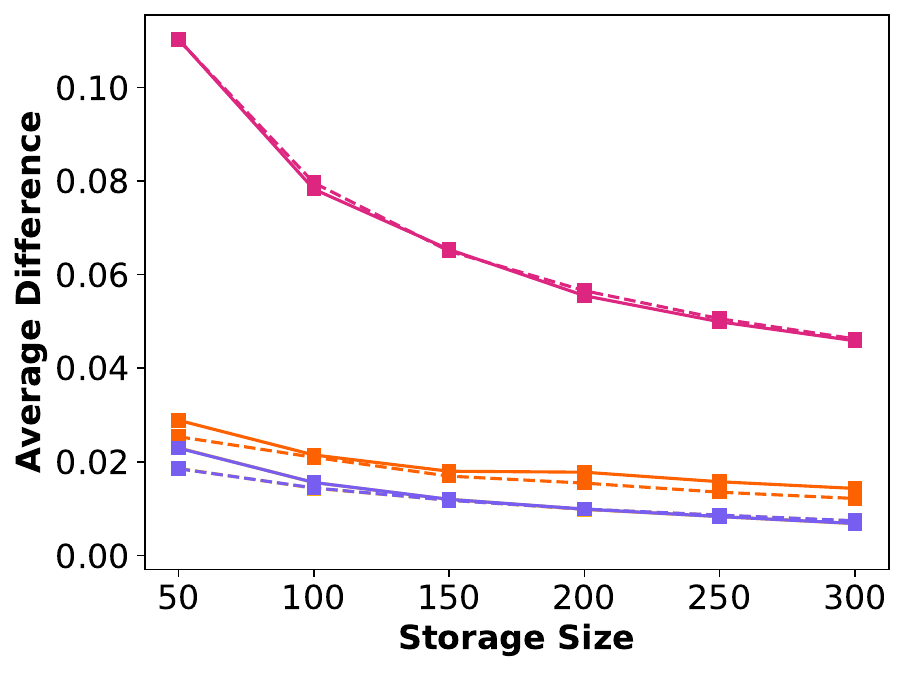}
		\vspace{-1em}
		\caption{All Documents}
        \label{fig:DocSim_TFIDF_all}
	\end{subfigure}
        \begin{subfigure}{0.49\columnwidth} 
		\centering
		\includegraphics[width=1.0\linewidth]{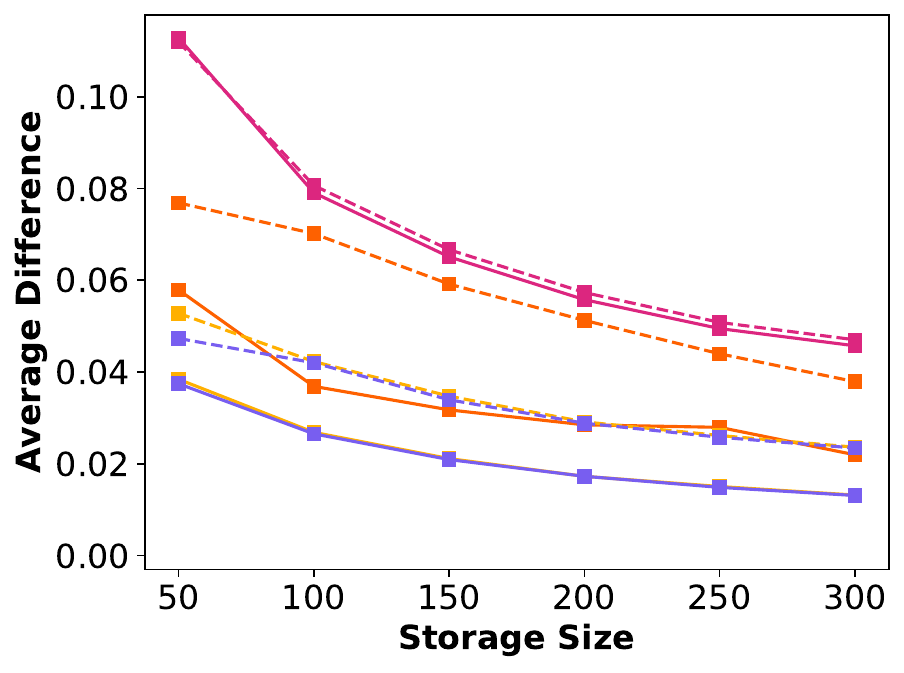}
		\vspace{-1em}
		\caption{Documents > 500 words}
        \label{fig:DocSim_TFIDF_500}
	\end{subfigure}
	\vspace{-.5em}
	\caption{\revised{
Average estimation error for text similarity estimation using the 20 Newsgroups dataset. Note that the lines for \PSweighted, \PSuniform, \TSweighted, and \TSuniform overlap in (a), as do the
the lines for \PSweighted and \TSweighted in (b).  \PSweighted and \TSweighted  outperform all baselines for documents more than 500 words.
}}
	\label{fig:DocSim_TFIDF}
		\vspace{-1.5em}
\end{figure}

\vspace{-.15cm}
\subsubsection{20 Newsgroups Dataset}
\revised{
We also assess the effectiveness of \thresholdandpriority for estimating \emph{document similarity} using the 20 Newsgroups Dataset~\cite{20newsgroups}. We generate a feature vector for each document that includes both unigrams (single words) and bigrams (pairs of words). We use standard TF-IDF weights to scale the entries of the vector \cite{SaltonWongYang:1975} and then measure similarity using the cosine similarity metric, which is equivalent to the inner product when the vectors are normalized to have unit norm.
}

We sample 200,000 document pairs from the dataset and plot average error. As \Cref{fig:DocSim_TFIDF_all} shows, the  linear sketching methods (JL and CountSketch) \revised{perform worst. \thresholdandpriority\xspace}obtain the best accuracy for all sketch sizes, although the difference between the unweighted and weighted methods is negligible. \revised{As shown in \Cref{fig:DocSim_TFIDF_500}, this difference becomes more pronounced when only considering documents with more than 500 words. For longer documents, our \TSweighted and \PSweighted perform notably better than their uniform-sampling counterparts.}
The larger performance gap could be due to more variability in  TF-IDF weights in longer documents (which benefits the weighted sampling methods) or simply to the fact that estimating cosine similarity is more challenging for longer documents, so differences in the methods become more pronounced as estimation error increases. 

\vspace{-.15cm}
\subsubsection{TPC-H Benchmark and Twitter Data}
\begin{figure}[t]
	\centering
	\begin{subfigure}{0.99\columnwidth} 
		\centering
		\includegraphics[width=.99\linewidth]{figs/legend_ip.png}
	\end{subfigure}
  \begin{subfigure}{.49\columnwidth} 
		\centering
		\includegraphics[width=1\linewidth]{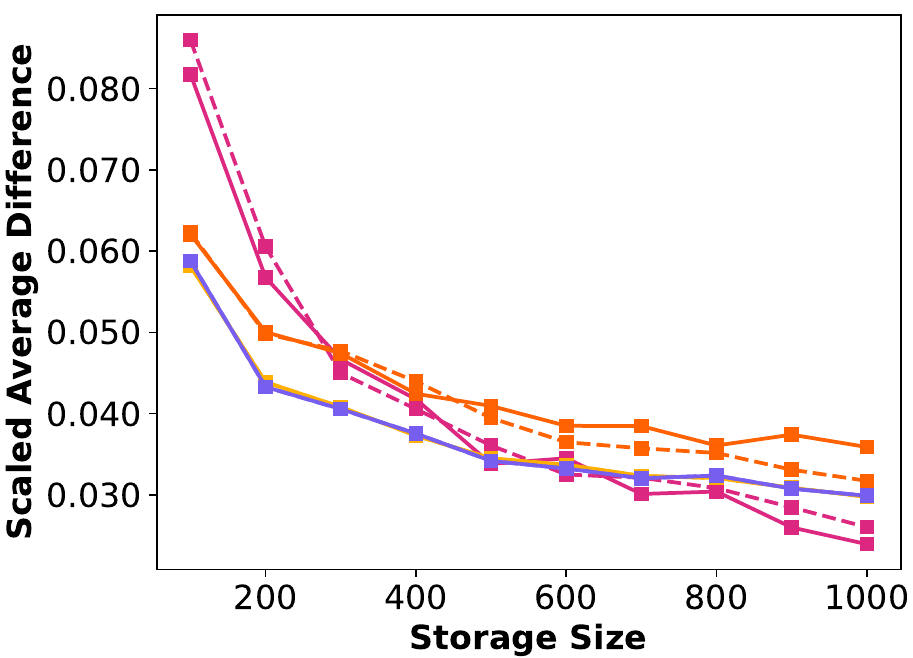}
		\vspace{-.2cm}
  		\caption{TPC-H Benchmark}
    	\label{fig:JoinSize_TPCH}
	\end{subfigure}
    \begin{subfigure}{.49\columnwidth} 
		\centering
		\includegraphics[width=1\linewidth]{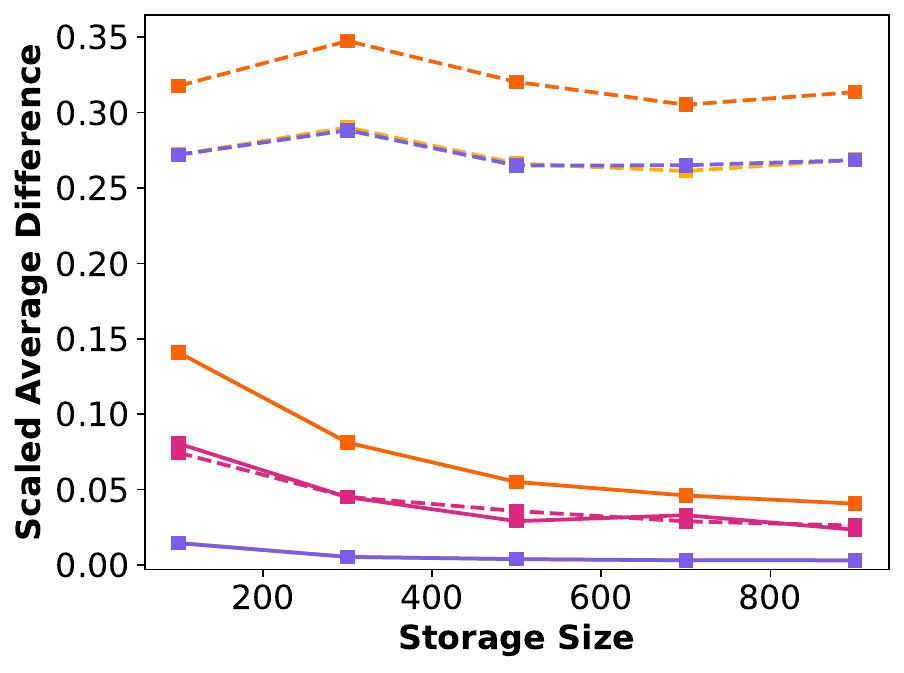}
		\caption{Twitter Self-Join}
  	\label{fig:JoinSize_Twitter}
	\end{subfigure}
 \vspace{-.3cm}
	\caption{\revised{Join size estimation for the
    Twitter and TPC-H datasets. 
	The lines for \PSweighted and \TSweighted overlap, as do the lines for
    \PSuniform and \TSuniform. Our \PSweighted and \TSweighted methods are most reliable, performing well in both experiments, the second of which involves two tables with highly non-uniform key distributions.
    }}
\end{figure}
\revised{
Finally, we evaluate \thresholdandpriority on two join size estimation tasks. The first is the standard TPC-H benchmark~\cite{TPCHSkew}. TPC-H data was generated with a scale factor of 1 and skew parameter $z=2$. The join was performed between \textit{LINEITEM} and \textit{PARTSUPP} tables on the key \textit{SUPPKEY}.
Average relative error for $200$ trials are presented in \Cref{fig:JoinSize_TPCH}. For moderate sketch sizes (up to $\sim 600$) our sampling based methods outperform  linear sketching, and they always outperform MH and WMH. However, there is little difference between the weighted and unweighted sampling versions of our methods. We believe this is due to the fact that, even with skew,  the TPC-H benchmark only has a non-uniform key distribution in the LINEITEM table. The key distribution of the larger \textit{PARTSUPP} table remains uniform. 
\revised{Difference between the methods is much more pronounced in our second experiment on estimating join sizes using the Twitter dataset from ~\cite{KwakLeeParkMoon2010}. This data consists of a list of tuples (user, follower), representing the follower-followee relationship.} We sampled 14,000,000 (user, follower) tuples from the dataset, which include approximately 420,000 users.
Following the example in \cite{ChenYi2017},  we perform a self-join of the table to identify all the 2-hop "follows" relationships.
Results are shown in \Cref{fig:JoinSize_Twitter}. For all sketch sizes, \TSweighted and \PSweighted have the smallest errors, followed by the linear sketching methods, and then by \wmh. The unweighted sampling methods 
(MH, \TSuniform, \PSuniform) perform poorly, since in this dataset there is a lot of variability in key frequencies; \revised{weighted} sampling is needed to accurately estimate join size.

}

\vspace{-.2cm}
\section{Additional Related Work}
\label{sec:related_work}
As discussed in \cref{sec:intro}, we are only aware of two previous papers that directly address the inner product estimation problem using sampling-based sketches: the \wmh work of \cite{BessaDFMMSZ:2023} and the End-Biased Sampling work of \cite{EstanNaughton:2006}. Some follow-up work on End-Biased Sampling, such as Correlated Sampling~\cite{vengerov2015join} and Two-level Sampling \cite{ChenYi2017}, can also be used to estimate inner products. However, the goal of these works is to handle the more general problem of approximating data operations (such as SUM, COUNT, MEAN) with SQL predicates (WHERE clauses). In our setting, the methods from \cite{vengerov2015join} and \cite{ChenYi2017} degenerate to uniform sampling methods (i.e., KMV or \revised{Threshold Sampling}
with uniform weights), as they do not take into account the vector entries (i.e., $\bv{a}_i$ and $\bv{b}_i$) when selecting samples.

We also note that inner product estimation can be seen as a special case of the predicate aggregation problem studied in \cite{CohenKaplanSen:2009}. While that work gives unbiased estimators based on Threshold and Priority Sampling, inner product estimation is not considered specifically, so there is no guidance on how probabilities should be chosen or variance analyzed. Follow-up work in \cite{Cohen:2015} can be used to analyze variance given a choice of probabilities. However, in our setting, the work leads to loose bounds that depend on $\max_i {|\bv{a}_i\bv{b}_i|}/{\min(\bv{a}_i^2, \bv{b}_i^2)}$. This value can be arbitrarily large in comparison to $\|\bv{a}\|_2\|\bv{b}\|_2$, so unlike our analysis, this prior work cannot be used to beat the linear sketching guarantee of \eqref{eq:jl_guar} for inner product estimation.

Beyond the problem of inner product estimation, our work is more broadly related to the large body of work on coordinated random sampling methods, which use shared randomness (e.g., a shared hash function or random permutation) to collect samples of two vectors $\bv{a}$ and $\bv{b}$. \revised{Threshold and Priority Sampling are both examples of coordinated sampling, as is MinHash and the $k$-minimum values (KMV) sketch.} However, there are other methods, including the coordinated random sampling method \cite{CohenKaplan:2007}, conditional random sampling \cite{LiChurchHastie:2006}, and coordinated variants of PPSWOR sampling \cite{Cohen:2023}.  

\vspace{-.25cm}
\section{Conclusion}
\label{sec:conclusion}

We propose two simple and efficient sampling-based sketches for
inner product estimation. We prove  theoretical accuracy guarantees for both methods that are stronger than the guarantees of popular linear sketching methods, and that match the best-known guarantees of the state-of-the-art hashing-based \wmh sketch~\cite{BessaDFMMSZ:2023}.
At the same time, our methods run in near-linear time, so are much faster than \wmh. They also perform better in our empirical evaluation. In particular, our fixed-size Priority Sampling method provides a new state-of-the-art for inner product estimation and related applications, including join-correlation estimation.

\vspace{-.1cm}
\myparagraph{Acknowledments}
We thank Otmar Ertl, Jonathan Weare, and Xiaoou Cheng for helpful conversations.
This research was supported by NSF award CCF-2046235 and the DARPA D3M program. Opinions, findings, conclusions, or recommendations expressed in this material
are those of the authors and do not 
reflect the views of NSF or DARPA. 

\balance
\bibliographystyle{apalike}
\bibliography{paper}

\begin{thebibliography}{}

\bibitem[Achlioptas, 2003]{Achlioptas:2003}
Achlioptas, D. (2003).
\newblock Database-friendly random projections: {J}ohnson-{L}indenstrauss with
  binary coins.
\newblock {\em J. Comput. Syst. Sci.}, 66(4).
\newblock \pPODS{2001}.

\bibitem[Alon et~al., 2005]{AlonDuffieldLund:2005}
Alon, N., Duffield, N., Lund, C., and Thorup, M. (2005).
\newblock Estimating arbitrary subset sums with few probes.
\newblock In {\em \PODS{2005}}.

\bibitem[Alon et~al., 1999a]{AlonGibbonsMatias:1999}
Alon, N., Gibbons, P.~B., Matias, Y., and Szegedy, M. (1999a).
\newblock Tracking join and self-join sizes in limited storage.
\newblock In {\em \PODS{1999}}.

\bibitem[Alon et~al., 1999b]{AlonMatiasSzegedy:1999}
Alon, N., Matias, Y., and Szegedy, M. (1999b).
\newblock The space complexity of approximating the frequency moments.
\newblock {\em Journal of Computer and System Sciences}, 58(1).

\bibitem[Arriaga and Vempala, 2006]{ArriagaVempala:2006}
Arriaga, R.~I. and Vempala, S. (2006).
\newblock An algorithmic theory of learning: Robust concepts and random
  projection.
\newblock {\em Machine Learning}, 63(2).

\bibitem[Bar-Yossef et~al., 2002]{Bar-YossefJayramKumar:2002}
Bar-Yossef, Z., Jayram, T.~S., Kumar, R., Sivakumar, D., and Trevisan, L.
  (2002).
\newblock Counting distinct elements in a data stream.
\newblock In {\em \RANDOM{2002}}.

\bibitem[Bessa et~al., 2023]{BessaDFMMSZ:2023}
Bessa, A., Daliri, M., Freire, J., Musco, C., Musco, C., Santos, A., and Zhang,
  H. (2023).
\newblock Weighted minwise hashing beats linear sketching for inner product
  estimation.
\newblock In {\em \PODS{2023}}.

\bibitem[Beyer et~al., 2007]{BeyerHaasReinwald:2007}
Beyer, K., Haas, P.~J., Reinwald, B., Sismanis, Y., and Gemulla, R. (2007).
\newblock On synopses for distinct-value estimation under multiset operations.
\newblock In {\em \SIGMOD{2007}}.

\bibitem[Broder, 1997]{Broder:1997}
Broder, A. (1997).
\newblock On the resemblance and containment of documents.
\newblock In {\em Proceedings. Compression and Complexity of SEQUENCES 1997}.

\bibitem[{Castro Fernandez} et~al., 2019]{Castro-FernandezMinNava:2019}
{Castro Fernandez}, R., {Min}, J., {Nava}, D., and {Madden}, S. (2019).
\newblock Lazo: A cardinality-based method for coupled estimation of jaccard
  similarity and containment.
\newblock In {\em \ICDE{2019}}.

\bibitem[Charikar, 2002]{Charikar:2002}
Charikar, M. (2002).
\newblock Similarity estimation techniques from rounding algorithms.
\newblock In {\em \STOC{2002}}.

\bibitem[Charikar et~al., 2002]{CharikarChenFarach-Colton:2002}
Charikar, M., Chen, K., and Farach-Colton, M. (2002).
\newblock Finding frequent items in data streams.
\newblock In {\em \ICALP{2002}}.

\bibitem[Chen and Yi, 2017]{ChenYi2017}
Chen, Y. and Yi, K. (2017).
\newblock Two-level sampling for join size estimation.
\newblock In {\em Proceedings of the 2017 ACM International Conference on
  Management of Data}.

\bibitem[Chepurko et~al., 2020]{ChepurkoMarcusZgraggen:2020}
Chepurko, N., Marcus, R., Zgraggen, E., Fernandez, R.~C., Kraska, T., and
  Karger, D. (2020).
\newblock Arda: automatic relational data augmentation for machine learning.
\newblock {\em Proc. VLDB Endow.}, 13(9).

\bibitem[Chi and Zhu, 2017]{ChiZhu:2017}
Chi, L. and Zhu, X. (2017).
\newblock Hashing techniques: A survey and taxonomy.
\newblock {\em ACM Comput. Surv.}, 50(1).

\bibitem[Christiani, 2020]{Christiani:2020}
Christiani, T. (2020).
\newblock Dartminhash: Fast sketching for weighted sets.
\newblock {\em \arXiv{2005.11547}}.

\bibitem[Cohen, 2015]{Cohen:2015}
Cohen, E. (2015).
\newblock Multi-objective weighted sampling.
\newblock In {\em 2015 Third IEEE Workshop on Hot Topics in Web Systems and
  Technologies (HotWeb)}.

\bibitem[Cohen, 2023]{Cohen:2023}
Cohen, E. (2023).
\newblock Sampling big ideas in query optimization.
\newblock In {\em \PODS{2023}}.

\bibitem[Cohen and Kaplan, 2007]{CohenKaplan:2007}
Cohen, E. and Kaplan, H. (2007).
\newblock Summarizing data using bottom-k sketches.
\newblock In {\em \PODC{2007}}.

\bibitem[Cohen and Kaplan, 2013]{CohenKaplan:2013}
Cohen, E. and Kaplan, H. (2013).
\newblock What you can do with coordinated samples.
\newblock In {\em \APPROX{2013}}.

\bibitem[Cohen et~al., 2009]{CohenKaplanSen:2009}
Cohen, E., Kaplan, H., and Sen, S. (2009).
\newblock Coordinated weighted sampling for estimating aggregates over multiple
  weight assignments.
\newblock {\em Proc. VLDB Endow.}, 2(1).

\bibitem[Cohen et~al., 2020]{CohenPaghWoodruff:2020}
Cohen, E., Pagh, R., and Woodruff, D. (2020).
\newblock Wor and p\textquotesingle s: Sketches for \textbackslash
  ell\_p-sampling without replacement.
\newblock In Larochelle, H., Ranzato, M., Hadsell, R., Balcan, M., and Lin, H.,
  editors, {\em Advances in Neural Information Processing Systems}, volume~33,
  pages 21092--21104. Curran Associates, Inc.

\bibitem[Cormode, 2011]{cormode2011sketch}
Cormode, G. (2011).
\newblock Sketch techniques for approximate query processing.
\newblock {\em Foundations and Trends in Databases. NOW publishers}.

\bibitem[Cormode and Garofalakis, 2005]{CormodeGarofalakis:2005}
Cormode, G. and Garofalakis, M. (2005).
\newblock Sketching streams through the net: Distributed approximate query
  tracking.
\newblock {\em Proc. VLDB Endow.}

\bibitem[Cormode and Garofalakis, 2016]{CormodeGarofalakis:2016}
Cormode, G. and Garofalakis, M. (2016).
\newblock {\em Join Sizes, Frequency Moments, and Applications}.
\newblock Springer Berlin Heidelberg.

\bibitem[Cormode et~al., 2011]{CormodeGarofalakisHaas:2011}
Cormode, G., Garofalakis, M., Haas, P., and Jermaine, C. (2011).
\newblock {\em Synopses for Massive Data: Samples, Histograms, Wavelets,
  Sketches}.
\newblock Foundations and Trends in Databases. NOW publishers.

\bibitem[Daliri et~al., 2023]{DaliriFreireMusco:2023}
Daliri, M., Freire, J., Musco, C., Santos, A., and Zhang, H. (2023).
\newblock Simple analysis of priority sampling.
\newblock {\em SIAM Symposium on Simplicity in Algorithms (SOSA24)}.

\bibitem[Dasgupta and Gupta, 2003]{DasguptaGupta:2003}
Dasgupta, S. and Gupta, A. (2003).
\newblock An elementary proof of a theorem of johnson and lindenstrauss.
\newblock {\em Random Structures \& Algorithms}, 22(1).

\bibitem[Duffield et~al., 2004]{DuffieldLundThorup:2004}
Duffield, N., Lund, C., and Thorup, M. (2004).
\newblock Flow sampling under hard resource constraints.
\newblock In {\em Proceedings of the Joint International Conference on
  Measurement and Modeling of Computer Systems (SIGMETRICS 2004)}.

\bibitem[Duffield et~al., 2005]{DuffieldLundThorup:2005}
Duffield, N., Lund, C., and Thorup, M. (2005).
\newblock Learn more, sample less: control of volume and variance in network
  measurement.
\newblock {\em IEEE Transactions on Information Theory}, 51(5).

\bibitem[Ertl, 2018]{Ertl:2018}
Ertl, O. (2018).
\newblock Bagminhash - minwise hashing algorithm for weighted sets.
\newblock In {\em \KDD{2018}}.

\bibitem[Esmailoghli et~al., 2021]{EsmailoghliQuiane-RuizAbedjan:2021}
Esmailoghli, M., Quian{\'e}-Ruiz, J.-A., and Abedjan, Z. (2021).
\newblock Cocoa: Correlation coefficient-aware data augmentation.
\newblock In {\em 24th International Conference on Extending Database
  Technology (EDBT)}.

\bibitem[Estan and Naughton, 2006]{EstanNaughton:2006}
Estan, C. and Naughton, J. (2006).
\newblock End-biased samples for join cardinality estimation.
\newblock In {\em \ICDE{2006}}.

\bibitem[Flajolet, 1990]{Flajolet:1990}
Flajolet, P. (1990).
\newblock On adaptive sampling.
\newblock {\em Computing}, 43(4).

\bibitem[Gollapudi and Panigrahy, 2006]{GollapudiPanigrahy:2006}
Gollapudi, S. and Panigrahy, R. (2006).
\newblock Exploiting asymmetry in hierarchical topic extraction.
\newblock In {\em \CIKM{2006}}.

\bibitem[Howell, 2018]{Howell:2018}
Howell, D. (2018).
\newblock Generating correlated data.
\newblock {\em Outline of the Statistical Pages Folder}.

\bibitem[Ioffe, 2010]{Ioffe:2010}
Ioffe, S. (2010).
\newblock Improved consistent sampling, weighted minhash and l1 sketching.
\newblock In {\em \ICDM{2010}}.

\bibitem[Ionescu et~al., 2022]{IonescuHaiFragkoulis:2022}
Ionescu, A., Hai, R., Fragkoulis, M., and Katsifodimos, A. (2022).
\newblock Join path-based data augmentation for decision trees.
\newblock In {\em 2022 IEEE 38th International Conference on Data Engineering
  Workshops (ICDEW)}. IEEE.

\bibitem[Jacques, 2015]{Jacques:2015}
Jacques, L. (2015).
\newblock A quantized johnson--lindenstrauss lemma: The finding of buffon's
  needle.
\newblock {\em IEEE Transactions on Information Theory}, 61(9).

\bibitem[Jayaram and Woodruff, 2018]{JayaramWoodruff:2018}
Jayaram, R. and Woodruff, D.~P. (2018).
\newblock Perfect $l_p$ sampling in a data stream.
\newblock In {\em \FOCS{2018}}, pages 544--555.

\bibitem[Kwak et~al., 2010]{KwakLeeParkMoon2010}
Kwak, H., Lee, C., Park, H., and Moon, S. (2010).
\newblock {W}hat is {T}witter, a social network or a news media?
\newblock In {\em \WWW{2010}}, New York, NY, USA. ACM.

\bibitem[Larsen et~al., 2021]{LarsenPaghTetek:2021}
Larsen, K.~G., Pagh, R., and T{\v{e}}tek, J. (2021).
\newblock Countsketches, feature hashing and the median of three.
\newblock In {\em \ICML{2021}}. PMLR.

\bibitem[Li et~al., 2006]{LiChurchHastie:2006}
Li, P., Church, K., and Hastie, T. (2006).
\newblock Conditional random sampling: A sketch-based sampling technique for
  sparse data.
\newblock In {\em \NIPS{2006}}, volume~19.

\bibitem[Li and K\"{o}nig, 2011]{LiKonig:2011}
Li, P. and K\"{o}nig, A.~C. (2011).
\newblock Theory and applications of b-bit minwise hashing.
\newblock {\em Commun. ACM}, 54(8).

\bibitem[Liu et~al., 2022]{LiuChaiLuo:2022}
Liu, J., Chai, C., Luo, Y., Lou, Y., Feng, J., and Tang, N. (2022).
\newblock Feature augmentation with reinforcement learning.
\newblock In {\em \ICDE{2022}}.

\bibitem[Manasse et~al., 2010]{ManasseMcSherryTalwar:2010}
Manasse, M., McSherry, F., and Talwar, K. (2010).
\newblock Consistent weighted sampling.
\newblock Technical Report MSR-TR-2010-73, Microsoft Research.

\bibitem[Mitchell, 1997]{20newsgroups}
Mitchell, T. (1997).
\newblock 20 newsgroups dataset.
\newblock
  \url{https://scikit-learn.org/stable/modules/generated/sklearn.datasets.fetch_20newsgroups.html}.

\bibitem[Ohlsson, 1998]{Ohlsson:1998}
Ohlsson, E. (1998).
\newblock Sequential poisson sampling.
\newblock {\em Journal of Official Statistics}, 14(2).

\bibitem[Pagh et~al., 2014]{PaghStockelWoodruff:2014}
Pagh, R., St\"{o}ckel, M., and Woodruff, D.~P. (2014).
\newblock Is min-wise hashing optimal for summarizing set intersection?
\newblock In {\em \PODS{2014}}.

\bibitem[Rusu and Dobra, 2008]{RusuDobra:2008}
Rusu, F. and Dobra, A. (2008).
\newblock Sketches for size of join estimation.
\newblock {\em ACM Transactions on Database Systems (TODS)}, 33(3).

\bibitem[Salton et~al., 1975]{SaltonWongYang:1975}
Salton, G., Wong, A., and Yang, C.-S. (1975).
\newblock A vector space model for automatic indexing.
\newblock {\em Communications of the ACM}, 18(11).

\bibitem[Santos et~al., 2021]{SantosBessaChirigati:2021}
Santos, A., Bessa, A., Chirigati, F., Musco, C., and Freire, J. (2021).
\newblock Correlation sketches for approximate join-correlation queries.
\newblock In {\em \SIGMOD{2021}}.

\bibitem[Santos et~al., 2022]{SantosBessaMusco:2022}
Santos, A., Bessa, A., Musco, C., and Freire, J. (2022).
\newblock A sketch-based index for correlated dataset search.
\newblock In {\em \ICDE{2022}}.

\bibitem[Shrivastava, 2016]{Shrivastava:2016}
Shrivastava, A. (2016).
\newblock Simple and efficient weighted minwise hashing.
\newblock In {\em \NIPS{2016}}.

\bibitem[Szegedy, 2006]{Szegedy:2006}
Szegedy, M. (2006).
\newblock The {DLT} priority sampling is essentially optimal.
\newblock In {\em \STOC{2006}}.

\bibitem[Szegedy and Thorup, 2007]{SzegedyThorup:2007}
Szegedy, M. and Thorup, M. (2007).
\newblock On the variance of subset sum estimation.
\newblock In {\em \ESA{2007}}. Springer Berlin Heidelberg.

\bibitem[Vengerov et~al., 2015]{vengerov2015join}
Vengerov, D., Menck, A.~C., Zait, M., and Chakkappen, S.~P. (2015).
\newblock Join size estimation subject to filter conditions.
\newblock {\em Proc. VLDB Endow.}, 8(12).

\bibitem[{World Bank}, 2023]{WBF_2022}
{World Bank} (2023).
\newblock World bank group finances.
\newblock https://finances.worldbank.org/.

\bibitem[{Yang} et~al., 2019]{YangZhangZhang:2019}
{Yang}, Y., {Zhang}, Y., {Zhang}, W., and {Huang}, Z. (2019).
\newblock Gb-kmv: An augmented kmv sketch for approximate containment
  similarity search.
\newblock In {\em \ICDE{2019}}.

\bibitem[Yu, 2022]{TPCHSkew}
Yu, F. (2022).
\newblock Tpch skew.
\newblock \url{https://github.com/YSU-Data-Lab/TPC-H-Skew}.

\bibitem[Zhu et~al., 2016]{ZhuNargesianPu:2016}
Zhu, E., Nargesian, F., Pu, K.~Q., and Miller, R.~J. (2016).
\newblock {LSH} ensemble: Internet-scale domain search.
\newblock {\em Proc. VLDB Endow.}, 9(12).

\end{thebibliography}

\newpage
\appendix
\section{Implementation Details}

\subsection{Adaptive Threshold Selection and High Probability Bounds for Threshold Sampling}
\label{app:adaptive_and_high_prob}
In this section, we briefly discuss two points related to the sketch size of our Threshold Sampling method (\cref{alg:threshold_sampling}). 
First, as shown in \Cref{thm:main}, when \cref{alg:threshold_sampling} is run with target sketch size $m$ on a vector $\bv{a}$, it returns a sketch $\mathcal{S}(\bv{a})$ that contains $\leq m$ key/value pairs from $\bv{a}$ in expectation. 
Ideally, we would like the expected size of the sketch to be \emph{exactly} proportional to $m$. 

It is not difficult to adjust the sketching method to achieve this property. 
To do so, instead of setting each threshold $\tau_i$ equal to $m\cdot {\bv{a}_i^2}/{\|\bv{a}\|_2^2}$ on Line 3 of \cref{alg:threshold_sampling}, we can simply increase the threshold slightly to $m'\cdot {\bv{a}_i^2}/{\|\bv{a}\|_2^2}$ for some value of $m' \geq m$. Doing so ensures a higher probability of sampling every entry of $\bv{a}$, so will not increase the variance of the estimator given in \cref{alg:inner_product_estimator}.

To ensure an expected sketch size of exactly $m$, we can check from the proof of \cref{alg:threshold_sampling} that it suffices to set $m'$ so that:
\begin{align}
\label{eq:adaptive_thresh_requirement}
\sum_{i=1}^n \min\left(1, m'\cdot {\bv{a}_i^2}/{\|\bv{a}\|_2^2}\right) = m.
\end{align}
We can find $m'$ that satisfies this equation using a straightforward iterative method. In particular, observe that the expression $\sum_{i=1}^n \min\left(1, m'\cdot {\bv{a}_i^2}/{\|\bv{a}\|_2^2}\right)$
is an increasing function in $m'$, and that it evaluates to $\leq m$ for $m' = m$. So, we can initialize $m' = m$ and iteratively increase its value until the equation is satisfied.
A procedure for doing so is shown in \Cref{alg:threshold_sampling_adapt_computation}. The procedure takes advantage of the fact that any $m'$ satisfying \eqref{eq:adaptive_thresh_requirement} must also satisfy:
\begin{align}
\label{eq:mprime_recur}
m' = \frac{m-|\mathcal{C}|}{\sum_{i\notin \mathcal{C}} {\bv{a}_i^2}/{\|\bv{a}\|_2^2}},
\end{align}
where $\mathcal{C}$ is the set of indices for which $m'\cdot \bv{a}_i^2/\|\bv{a}\|_2^2 \geq 1$. Since we do not know $\mathcal{C}$ in advance, we simply set it based on our current guess for $m'$ (Line 3 of \Cref{alg:threshold_sampling_adapt_computation}). Then we set $m'$ as in \eqref{eq:mprime_recur}, and only terminate if \eqref{eq:adaptive_thresh_requirement} is satisfied.
It can be checked that the loop in \Cref{alg:threshold_sampling_adapt_computation} terminates in at most $m$ steps, after which \eqref{eq:adaptive_thresh_requirement} will be exactly satisfied. In particular, $\mathcal{C}$ must increase in size for the while condition to evaluate to true, and the size of $\mathcal{C}$ is at most $m$, so it increases at most $m$ times. Furthermore, we can presort the values in the vector $\bv{a}$ by their squared magnitude, which allows us to update $\mathcal{C}$ in no more than $O(m)$ time total across all iterations of the loop. We conclude that \Cref{alg:threshold_sampling_adapt_computation} runs in $O(N\log N)$ time (nearly linear) for a vector $\bv{a}$ with $N$ non-zero entries. 

\begin{algorithm}[b]
	\caption{Adaptive Threshold Selection}
	\label{alg:threshold_sampling_adapt_computation}
	\begin{algorithmic}[]
		\Require Length $n$ vector $\bv{a}$, target sketch size $m \leq N$, where $N$ is the number of non-zero values in $\bv{a}$.
		\Ensure Value $m' \geq m$ such that $\sum_{i=1}^n \min\left(1, m'\cdot {\bv{a}_i^2}/{\|\bv{a}\|_2^2}\right) = m$.
		\algrule
		\State Initialize set $\mathcal{C} = \emptyset$ and $m' = m$.
		\While{$\sum_{i=1}^n \min\left(1, m'\cdot {\bv{a}_i^2}/{\|\bv{a}\|_2^2}\right) \leq m$}
		\State $\mathcal{C} \leftarrow \{i\in \{1, \ldots, n\}: m'\cdot {\bv{a}_i^2}/{\|\bv{a}\|_2^2} \geq 1\}$.
		\State $m' \leftarrow  \frac{m-|S|}{\sum_{i\notin S} {\bv{a}_i^2}/{\|\bv{a}\|_2^2}}$.
		\EndWhile
		\State \Return $m'$
	\end{algorithmic}
\end{algorithm}
\vspace{1em}

Once $m'$ is selected, the only additional modification we need to the Threshold Sampling method (\cref{alg:threshold_sampling}) is that $\tau_{\bv{a}}$ in Line 6 should be set to $m'/\|\bv{a}\|_2^2$ instead of  $m/\|\bv{a}\|_2^2$. \Cref{alg:inner_product_estimator} can be used unmodified to estimate inner products and the proof of \cref{thm:main} goes through unchanged, except that our variance will scale with $\frac{1}{m'}$ instead of $\frac{1}{m}$ (which is better since $m' \geq m$).

\myparagraph{High Probability Sketch Size Bound} As shown above, we can easily adjust Threshold Sampling to have expected sketch size exactly equal to $m$. Moreover, we note that both for the original method and the method with an adaptively chosen threshold, it is possible to make a stronger claim: the sketch will have a size smaller than $m + O(\sqrt{m})$ with high probability. Formally,
\begin{lemma}\label{thm:high_prob_size}
Let $\mathcal{S}(\bv a)=\{K_{\bv{a}}, V_{\bv{a}}, \tau_{\bv{a}}\}$ be a sketch of $\bv{a}\in  \R^n$ returned by \Cref{alg:threshold_sampling}, possibly with thresholds set to $\tau_i = m'\cdot \bv{a}_i/\|\bv{a}\|_2^2$, where $m'$ is chosen using \Cref{alg:threshold_sampling_adapt_computation}. We have that
$\E\left[|K_\bv{a}|\right] \leq m$ and, for any $\delta \in (0,1)$,  with probability $1-\delta$, $|K_\bv{a}| \leq m + \sqrt{m/\delta}$ 
\end{lemma}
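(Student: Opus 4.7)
The plan is to exploit the fact that, as observed in the proof of \Cref{thm:main}, the random variable $|K_\bv{a}|$ can be written as a sum of independent indicators: $|K_\bv{a}| = \sum_{i : \bv{a}_i \neq 0} \mathbbm{1}[i \in K_\bv{a}]$, where $\mathbbm{1}[i \in K_\bv{a}]$ is Bernoulli with parameter $p_i = \min(1, m \cdot \bv{a}_i^2 / \|\bv{a}\|_2^2)$ in the non-adaptive case (or with $m$ replaced by $m'$ in the adaptive case). Independence across $i$ holds because the hash values $h(i)$ are independent draws from $[0,1]$.

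First I would establish the expectation bound. For the non-adaptive version this is already done in \eqref{eq:expected_size} in the proof of \Cref{thm:main}: linearity of expectation gives $\E[|K_\bv{a}|] = \sum_i p_i \leq \sum_i m \cdot \bv{a}_i^2/\|\bv{a}\|_2^2 = m$. For the adaptive version, \Cref{alg:threshold_sampling_adapt_computation} selects $m'$ so that \eqref{eq:adaptive_thresh_requirement} holds, which says exactly that $\sum_i p_i = m$, yielding $\E[|K_\bv{a}|] = m$ in that case.

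Next, I would bound the variance. Since the indicators are independent Bernoulli variables,
\begin{align*}
\Var[|K_\bv{a}|] = \sum_i p_i(1 - p_i) \leq \sum_i p_i = \E[|K_\bv{a}|] \leq m.
\end{align*}
Then a direct application of Chebyshev's inequality (or the one-sided Cantelli variant) gives, for any $t > 0$,
\begin{align*}
\Pr\left[|K_\bv{a}| \geq \E[|K_\bv{a}|] + t\right] \leq \frac{\Var[|K_\bv{a}|]}{t^2} \leq \frac{m}{t^2}.
\end{align*}
Setting $t = \sqrt{m/\delta}$ and combining with $\E[|K_\bv{a}|] \leq m$ yields $|K_\bv{a}| \leq m + \sqrt{m/\delta}$ with probability at least $1 - \delta$, as claimed.

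I do not anticipate a serious obstacle: the only subtlety is making sure that the variance bound $\sum_i p_i(1-p_i) \leq \sum_i p_i$ is applied consistently across the two cases (original vs. adaptive), and that the expectation bound $\E[|K_\bv{a}|] \leq m$ holds in both. One could obtain a sharper $m + O(\sqrt{m \log(1/\delta)})$ tail via a Chernoff/Bernstein bound (since the summands are bounded independent Bernoullis), but the statement of \Cref{thm:high_prob_size} only asks for the polynomial-in-$1/\delta$ form, so Chebyshev is sufficient and cleanest.
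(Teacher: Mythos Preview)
Your proposal is correct and follows essentially the same argument as the paper: write $|K_\bv{a}|$ as a sum of independent Bernoulli indicators, bound the variance by the expectation (which is at most $m$), and apply Chebyshev's inequality with $t = \sqrt{m/\delta}$. The only cosmetic difference is that the paper phrases the variance step as ``the variance of a $0$--$1$ random variable is upper bounded by its expectation'' rather than writing out $p_i(1-p_i) \leq p_i$, and does not separately treat the adaptive case as you do.
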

\begin{proof}
    Observe that $|K_\bv{a}| = \sum_{i=1}^n \mathbbm{1}\left[i\in K_{\bv{a}}\right]$ and  $\E[|K_\bv{a}|]\leq m$, with equality if $m'$ is selected using \Cref{alg:threshold_sampling_adapt_computation}. 
    Since
$\mathbbm{1}\left[i\in K_{\bv{a}}\right]$ and $\mathbbm{1}\left[j\in K_{\bv{a}}\right]$ are independent for $i\neq j$, we have $\Var\left[\left|K_{\bv{a}}\right|\right] =  \sum_{i=1}^n \Var\left[\mathbbm{1}\left[i\in K_{\bv{a}}\right]\right]$. It follows that, since the variance of a $0$-$1$ random variable is upper bounded by its expectation,
\begin{align*}
    \Var\left[\left|K_{\bv{a}}\right|\right]\leq \sum_{i=1}^n \E\left[\mathbbm{1}\left[i\in K_{\bv{a}}\right]\right] = \E[|K_\bv{a}|]\leq m.
\end{align*}
With the expectation and variance bounded, we can apply Chebyshev's inequality to conclude that for any $\delta \in (0,1)$,
\begin{align}
\label{eq:high_prob_bound}
    \Pr\left[\left|K_{\bv{a}}\right| \leq m + \sqrt{m/\delta}\right] \leq \delta.
\end{align}
For example, setting $\delta = 1/100$, we have that with probability $99/100$, $\left|K_{\bv{a}}\right| \leq m + 10\sqrt{m}$. For large $m$, we expect the $10\sqrt{m}$ term to be lower order, and certainly, the sketch size is less than $O(m)$.
\end{proof}

\subsection{Alternative Sampling Probabilities for Threshold and Priority Sampling}
\label{app:alternative_sampling_prob}
Our instantiations of Threshold Sampling and \prioritysampling for the inner product sketching problem (\cref{alg:threshold_sampling,alg:priority_sampling}) sample entries from a vector $\bv{a}$ with probabilities proportional to their \emph{squared magnitudes}, i.e., proportional to $\bv{a}_i^2/\|\bv{a}\|_2^2$ for the $i^\text{th}$ entry. This choice of probability is key in proving the strong worst-case error bounds of \cref{thm:main,thm:main_priority}. However, it is also possible to implement variants of Threshold and Priority Sampling using other probabilities. For example, entries can be sampled uniformly, or with probability proportion to the non-squared magnitude, $|\bv{a}_i|/\|\bv{a}\|_1$, where $\|\bv{a}\|_1 = \sum_{i=1}^n |\bv{a}_i|$ denotes the $\ell_1$-norm of $\bv{a}$. 
We discuss how to modify our methods to accommodate alternative sampling probabilities below. The described variants of Threshold and Priority sampling are tested in our experiments (\cref{sec:experiments}). Importantly, Threshold Sampling with probabilities proportional to $|\bv{a}_i|/\|\bv{a}\|_1$ is exactly equivalent to the `End-Biased Sampling' method proposed in \cite{EstanNaughton:2006}, and Priority Sampling with uniform probabilities is equivalent to the augmented KVM sketch considered in \cite{BessaDFMMSZ:2023}.

\myparagraph{\thresholdsampling with General Probabilities} 
Suppose we want to implement Threshold sampling with any list of sampling probabilities $p_1(\bv{a}), \ldots, p_n(\bv{a})$ that sum to $1$ and are a function of the input vector $\bv{a}$. For example, we might choose $p_i(\bv{a}) = {|\bv{a}_i|}/{\|\bv{a}\|_1}$ or, if $\bv{a}$ has $N$ non-zero entries, $p_i(\bv{a}) = 1/{N}$ for any $i$ where $\bv{a}_i \neq 0$ and $p_i(\bv{a}) = 0$ otherwise (uniform sampling). 
We can modify \Cref{alg:threshold_sampling} for such probabilities by simply setting $\tau_i = m\cdot p_i(\bv{a})$ on Line 3. After this adjustment, to obtain an unbiased estimate $W$ for the inner product -- i.e., with $\E[W] = \langle \bv{a}, \bv{b} \rangle$ -- our estimation method, \Cref{alg:inner_product_estimator}, needs to return: 
  \begin{align*}
  W = \sum_{i \in \mathcal{T}} \frac{\bv{a}_i\bv{b}_i}{\min(1, m \cdot p_i(\bv{a}), m \cdot p_i(\bv{b}))}.
  \end{align*}
  Doing so requires knowledge of $p_i(\bv{a})$ and $p_i(\bv{b})$, which need to be computed based on the sketch. For example, for uniform sampling, $p_i(\bv{a}) = 1/N_\bv{a}$ and $p_i(\bv{b}) = 1/N_\bv{b}$, where $N_\bv{a}$ and $N_{\bv{b}}$ are the number of non-zero entries in $\bv{a}$ and $\bv{b}$, respectively. $N_\bv{a}$ and $N_{\bv{b}}$ can be included in the sketches $\mathcal{S}(\bv{a})$ and $\mathcal{S}(\bv{b})$ for use at estimation time. For non-squared magnitude sampling (i.e., End-Biased Sampling), we need to compute $p_i(\bv{a}) = {|\bv{a}_i|}/{\|\bv{a}\|_1}$ and $p_i(\bv{b}) = {|\bv{b}_i|}/{\|\bv{b}\|_1}$. This can be done at estimation time for any $\bv{a}_i \in \mathcal{S}(\bv{a})$ or $\bv{b}_i \in \mathcal{S}(\bv{b})$ as long as $\|\bv{a}\|_1$ and $\|\bv{b}\|_1$ are included in our sketches.

\myparagraph{\prioritysampling with General Probabilities} 
We can modify Priority Sampling in a similar way. Again, suppose that want to sample $\bv{a}$ with probabilities $p_1(\bv{a}), \ldots, p_n(\bv{a})$ that sum to $1$. We adjust \Cref{alg:priority_sampling} at Line 2 by setting $R_i = h(i)/p_i(\bv{a})$. Then, to obtain an unbiased estimate $W$, \Cref{alg:inner_product_estimator}, needs to return: 
  \begin{align*}
  W = \sum_{i \in \mathcal{T}} \frac{\bv{a}_i\bv{b}_i}{\min(1, p_i(\bv{a})\cdot \tau_\bv{a}, p_i(\bv{b})\cdot \tau_\bv{b})},
  \end{align*}
  where $\tau_{\bv{a}}$ and $\tau_{\bv{b}}$ are the values set on Line 3 of \cref{alg:priority_sampling}. As for \thresholdsampling, we may need to include auxiliary information (typically a single number) in $\mathcal{S}(\bv{a})$ and $\mathcal{S}(\bv{b})$ to compute $p_i(\bv{a})$ and $p_i(\bv{b})$ at estimation time for all $i\in \mathcal{T}$.

\subsection{Faster \wmh using \dartmh}
\label{subsec:fast_wmhs}
One of the baselines we compare our Threshold and Priority Sampling methods against is the recently introduced \wmh method from \cite{BessaDFMMSZ:2023}. 
The original implementation of this method produces a sketch by computing the smallest hash value amongst all non-zero indices in an expansion of the vector $\bv{a}$ being sketched. As discussed, this procedure is slow, requiring $O(Nm\log n)$ time to produce a sketch of size $m$ when $\bv{a}$ has $N$ non-zero entries. 
As discussed in \cref{sec:experiments}, \wmh can be accelerated using the recent \dartmh method from \cite{Christiani:2020}, which is designed to collect $m$ weighted MinHash samples in just $O(N + m\log m)$ time. 
However, integrating \dartmh into \wmh for inner product sketching is non-trivial, since the hash values computed when sampling from $\bv{a}$ and $\bv{b}$ serve dual purposes. First, they facilitate coordinated sampling. Second, they are used to estimate the \emph{weighted union size} between $\bv{a}$ and $\bv{b}$, which is defined as $U = \sum_{i=1}^n \max(\bv{a}_i^2,\bv{b}_i^2)$. $U$ plays a crucial role since it is used as a normalization factor in the inner product estimator. Since \dartmh collects weighted MinHash samples using an entirely different method (which does not explicitly hash all non-zero indices of $\bv{a}$ and $\bv{b}$) we cannot estimate $U$ using the same method as in \cite{BessaDFMMSZ:2023}, which is based on a classic hashing-based distinct elements estimator \cite{Bar-YossefJayramKumar:2002, BeyerHaasReinwald:2007}. 

Fortunately, $U$ can be estimated from the sketch produced by \dartmh in an alternative way. In particular, when sketching $\bv{a}$ and $\bv{b}$, the method returns of set of $m$ ``ranks'' for a sketch of size $m$. If we let $\mathcal{W}$ denote the smaller of the $i^\text{th}$ rank in the sketch for $\bv{a}$ and the sketch for $\bv{b}$, then we have the relation:
\begin{align*}
	\frac{1}{U} (1-e^{-m \log(m)}) \leq \E\left[\mathcal{W}\right] \leq \frac{1}{U}  
\end{align*}
Since the $e^{-m \log(m)}$ term is negligible  for reasonable choices of sketch size $m$, we can use $1/\mathcal{W}$ as an estimate of $U$.

\subsection{Optimized Methods for Join-Correlation Estimation}
\label{app:correlation}

In \Cref{sec:applications}, we introduce a technique for reducing the problem of join-correlation estimation to inner product estimation. The approach requires constructing inner product sketches for three vectors, $\bv{a}$, $\bv{a^2}$, and $\bv{1_a}$. So, if we have a sketch size budget of $m$, we must divide this budget among all three vectors. The easiest way is to do so evenly, so each vector is compressed to a sketch of size $m/3$. The resulting reduction in effective sketch size hurts the performance of methods like JL and CountSketch for join-correlation estimation. However, the issue can be partially mitigated for sampling-based sketches like Threshold and Priority Sampling. 

In particular, in a sampling-based sketch, if we select index $i$ when sketching \emph{any} of the three vectors $\bv{1_a}$, $\bv{a}$, and $\bv{a^2}$, then we might as well use the index in estimating inner products involving \emph{all} three. To do so, instead of computing independent sketches $\mathcal{S}(\bv{a})$, $\mathcal{S}(\bv{a}^2)$, and $\mathcal{S}(1_\bv{a})$, we construct a single \emph{global sketch}, which we denote by $\mathcal{S}_G(\bv{a})$. This sketch is used to derive sketches for all three vectors. As in our standard sampling-based sketches, $\mathcal{S}_G(\bv{a})$ consists of a set of indices ${K}_\bv{a}$ and values ${V}_\bv{a}$ from $\bv{a}$.
To obtain a sketch $\mathcal{S}(\bv{a})$ for estimating inner products with $\bv{a}$, we simply leave $\mathcal{S}_G(\bv{a})$ unchanged. To obtain a sketch $\mathcal{S}(\bv{1_a})$ for $\bv{1_a}$ we set all of the values in ${V}_\bv{a}$ equal to $1$, and to obtain a sketch $\mathcal{S}(\bv{a}^2)$ for $\bv{a}^2$, we square all the values in ${V}_\bv{a}$. 
$K_{\bv{a}}$ remains unchanged in all three sketches, meaning that we effectively reuse samples. 

Note that we would typically use \emph{different sampling probabilities} when sketching vectors $\bv{a}$, $\bv{a}^2$, and $\bv{1_a}$ since the relative squared magnitude of entries in these vectors differs. Accordingly, to collect samples for $\mathcal{S}_G(\bv{a})$, we sample index $i$ according to the \emph{maximum} probability it would have received when constructing any of the three sketches independently. Below, we provide details on how to implement this approach for both Threshold and Priority Sampling.

\begin{algorithm}[t]
    \caption{\thresholdsampling for Join-Correlation}
    \label{alg:corr_sketch}
    \begin{algorithmic}[1]
        \Require Length $n$ vector $\bv{a}$, random seed $s$, sketch size param. $m'$.
        \Ensure Sketch $\mathcal{S}_G(\bv{a}) = \{K_{\bv{a}}, V_{\bv{a}}, \tau_\bv{1_a}, \tau_{\bv{a}}, \tau_\bv{a^2}\}$, where $K_{\bv{a}}$ is a subset of indices from $\{1, \ldots, n\}$ and $V_{\bv{a}}$ contains $\bv{a}_i$ for all $i\in K_{\bv{a}}$.
        \algrule
        \State Use random seed $s$ to select a uniformly random hash function $h: \{1,..., n\}\rightarrow [0,1]$. Initialize $K_{\bv{a}}$ and $V_{\bv{a}}$ to be empty lists.
        \For{$i$ such that $\bv{a}[i]\neq 0$}
            \State Set $\tau_i(\bv{a}) = m'\cdot \frac{\bv{a}_i^2}{\|\bv{a}\|_2^2}$, $\tau_i(\bv{1_a}) = m'\cdot \frac{1}{\|\bv{1_a}\|_2^2}$, $\tau_i(\bv{a^2}) = m'\cdot \frac{\bv{a}_i^4}{\|\bv{a^2}\|_2^2}$.
            \State Set threshold $T_i(\bv{a}) = \max\left(\tau_i(\bv{1_a}), \tau_i(\bv{a}), \tau_i(\bv{a}^2)\right)$.
                \If{$h(i) \leq T_i(\bv{a})$}
              \State Append $i$ to $K_{\bv{a}}$, append $\bv{a}_i$ to $V_{\bv{a}}$.
              \EndIf
        \EndFor
        \State \Return $\mathcal{S}(\bv{a}) = \{K_{\bv{a}}, V_{\bv{a}}, \tau_\bv{1_a}, \tau_{\bv{a}}, \tau_\bv{a^2}\}$, where $\tau_{\bv{a}} = m'/\|\bv{a}\|_2^2$, $\tau_{\bv{1_a}} = m'/\|\bv{1_a}\|_2^2$ and $\tau_{\bv{a^2}} = m'/\|\bv{a^2}\|_2^2$.
    \end{algorithmic}
\end{algorithm}

\myparagraph{Join-Correlation Estimation with \thresholdsampling}
Our optimized Threshold Sampling sketch for join-correlation is shown in \cref{alg:corr_sketch}. To sample the $i^\text{th}$ index according to its maximum importance in each of $\bv{a}$, $\bv{a}^2$, and $\bv{1_a}$, on Line 3 we select a ``global threshold'' for the index, $T_i(\bv{a})$, equal to:
\begin{align*}
    T_i(\bv{a}) &= \max\left(\tau_i(\bv{a}), \tau_i(\bv{a}^2), \tau_i(\bv{1_a})\right), 
\end{align*}
where $\tau_i(\bv{v})$ denote $\tau_i(\bv{v}) = m'\cdot \bv{v}_i^2/\|\bv{v}\|_2^2$ for a vector $\bv{v}$. The index is then sampled exactly as in our standard Threshold Sampling method for inner products -- we check if a uniformly random hash value $h(i)$ falls below $T_i(\bv{a})$, and add $(i, \bv{a}_i)$ to the sketch if it does. 

With the above strategy in place, we can estimate any of the six inner products needed to compute the join-correlation:
$\langle \bv{1_a}, \bv{1_b} \rangle$, $\langle \bv{a}, \bv{1_b} \rangle$, $\langle \bv{1_a},\bv{b} \rangle$, $\langle \bv{a}, \bv{b} \rangle$, $\langle \bv{a^2}, \bv{1_b} \rangle$ and, $\langle \bv{1_a}, \bv{b^2} \rangle$ (see \cref{sec:applications}).
To ensure that our estimates are unbiased, 
we must normalize by the inverse of the probability that entry $i$ is included both in $\mathcal{S}_G(\bv{a})$ and $\mathcal{S}_G(\bv{b})$.
This probability is equal to $\min(1, T_i(\bv{a}), T_i(\bv{b}))$
which can be computed at estimation time based on the content of $\mathcal{S}_G(\bv{a})$ and $\mathcal{S}_G(\bv{b})$, as long as we include in the sketches the three additional numbers $m'/\|\bv{a}\|_2^2$, $m'/\|\bv{a}^2\|_2^2$, and $m'/\|\bv{1_a}\|_2^2$. For example, to approximate $\langle \bv{a}, \bv{b}\rangle$, we compute $\mathcal{T} = K_{\bv{a}} \cap K_{\bv{b}}$ and return the estimate:
\begin{align*}
      W_{\langle \bv{a}, \bv{b}\rangle} = \sum_{i \in \mathcal{T}} \frac{\bv{a}_i\cdot \bv{b}_i}{\min\left(1, T_i(\bv{a}), T_i(\bv{b}\right))}.
\end{align*}
Or to approximate $\langle \bv{a}, \bv{1_b}\rangle$, we return:
\begin{align*}
      W_{\langle \bv{a}, \bv{1_b}\rangle} = \sum_{i \in \mathcal{T}} \frac{\bv{a}_i\cdot 1}{\min\left(1, T_i(\bv{a}), T_i(\bv{b}\right))}.
\end{align*}
Most generally, for functions $f(\bv{a})$ and $g(\bv{b})$ we estimate: 
\begin{align*}
      W_{\langle f(\bv{a}), g(\bv{b})\rangle} = \sum_{i \in \mathcal{T}} \frac{f(\bv{a}_i)\cdot g(\bv{b}_i)}{\min\left(1, T_i(\bv{a}), T_i(\bv{b}\right))}.
\end{align*}

The size of the \thresholdsampling sketch is random, and not deterministically bounded. However, as discussed in \Cref{app:adaptive_and_high_prob}, we ideally want to construct a sketch with expected size \emph{exactly} equal to some specified constant $m$. This can be accomplished using an approach essentially identical to \Cref{alg:threshold_sampling_adapt_computation} in \Cref{app:adaptive_and_high_prob}. In particular, for \Cref{alg:corr_sketch}, we have that:
\begin{align*}
    \E\left[\left|K_{\bv{a}}\right|\right] = \sum_{i=1}^n \E\left[\mathbbm{1}\left[i\in K_{\bv{a}}\right] \right] &= \sum_{i=1}^{n} \min(1,T_i(\bv{a})),
\end{align*}
where $T_i(\bv{a}) = \max\left(m'\bv{a}_i^2/\|\bv{a}\|_2^2, m'/\|\bv{1_a}\|_2^2,  m'\bv{a}_i^4/\|\bv{a}^2\|_2^2\right)$. $\E\left[\left|K_{\bv{a}}\right|\right]$ is an increasing function in $m'$, and we can check that $\E\left[\left|K_{\bv{a}}\right|\right] \leq m$ when $m' = m/3$. So, before sampling, we start at $m' = m/3$ and iteratively increase $m'$ as in \Cref{alg:threshold_sampling_adapt_computation} until $\E\left[\left|K_{\bv{a}}\right|\right] = m$. Then  we run \Cref{alg:corr_sketch} with parameter $m'$.

\begin{algorithm}[t]
    \caption{\prioritysampling for Join-Correlation}
    \label{alg:corr_sketch_priority sampling}
    \begin{algorithmic}[1]
        \Require Length $n$ vector $\bv{a}$, random seed $s$, sketch size param. $m'$.
        \Ensure Sketch $\mathcal{S}_G(\bv{a}) = \{K_{\bv{a}}, V_{\bv{a}}, \tau_\bv{1_a}, \tau_{\bv{a}}, \tau_\bv{a^2}\}$, where $K_{\bv{a}}$ is a subset of indices from $\{1, \ldots, n\}$ and $V_{\bv{a}}$ contains $\bv{a}_i$ for all $i\in K_{\bv{a}}$.
        \algrule
        \State Use random seed $s$ to select a uniformly random hash function $h: \{1,..., n\}\rightarrow [0,1]$. Initialize $K_{\bv{a}}$ and $V_{\bv{a}}$ to be empty lists.
        \State Compute rank $R_\bv{a}(i) = \frac{h(i)}{\bv{a}_i^2}$, $R_\bv{1_a}(i) = h(i)$, and $R_\bv{a^2}(i) = \frac{h(i)}{\bv{a}^4_i}$ for all $i$ such that $\bv{a}_i\neq 0$.
        \State Assign the $(m'+1)^{\text{st}}$ smallest values from $R_\bv{a}(i)$, $R_\bv{1_a}(i)$, and $R_\bv{a^2}(i)$ to $\tau_{\bv{a}}$, $\tau_{\bv{1_a}}$, and $\tau_{\bv{a^2}}$ respectively. If $\bv{a}$ contains fewer than $m+1$ non-zero values, set of all these variables to $\infty$.
        \For{$i$ such that $\bv{a}_i \neq 0$}
                \If{$h(i) < \max(\tau_{\bv{a}} \bv{a}^2_i, \tau_{\bv{1_a}}, \tau_{\bv{a^2}} \bv{a}^4_i)$}
              \State Append $i$ to $K_{\bv{a}}$, append $\bv{a}_i$ to $V_{\bv{a}}$.
              \EndIf
        \EndFor
        \State \Return $\mathcal{S}(\bv{a}) = \{K_{\bv{a}}, V_{\bv{a}}, \tau_\bv{1_a}, \tau_{\bv{a}}, \tau_\bv{a^2}\}$.
    \end{algorithmic}
\end{algorithm}

\myparagraph{Join-Correlation Estimation with \prioritysampling}
We can modify Priority Sampling in a manner similar to \thresholdsampling. To compute a single global sketch $\mathcal{S}_G(\bv{a})$ with entries sampled by their maximum important in $\bv{a}$, $\bv{a}^2$, and $\bv{1_a}$, we compute three different ranks, $R_\bv{a}(i)$, $R_\bv{a^2}(i)$,  and  $R_\bv{1_a}(i)$ for each index $i$. As shown in \Cref{alg:corr_sketch_priority sampling}, we then sample all indices $i$ that rank within the $m'$ smallest of $R_\bv{a}(i)$, $R_\bv{a^2}(i)$,  or $R_\bv{1_a}(i)$. As before, to obtain a unbiased estimate from the sketches, we need to know the probability that entry $i$ gets included both in $\mathcal{S}_G(\bv{a})$ and $\mathcal{S}_G(\bv{b})$. For $i\in K_{\bv{a}}\cap K_{\bv{b}}$ , this probability is equal to 
\begin{align*}
    p_i = \min\left(1, \max\left(\tau_{\bv{1_a}}, \bv{a}_i^2 \tau_{\bv{a}}, \bv{a}_i^4 \tau_{\bv{a^2}}\right), \max\left(\tau_{\bv{1_b}}, \bv{b}_i^2 \tau_{\bv{b}}, \bv{b}_i^4 \tau_{\bv{b^2}}\right)\right), 
\end{align*}
where $\tau_{\bv{1_a}}, \tau_{\bv{a}}, \tau_{\bv{a^2}}$ and $\tau_{\bv{1_b}}, \tau_{\bv{b}}, \tau_{\bv{b^2}}$ are the values computed on Line 3 of \Cref{alg:corr_sketch_priority sampling} and included in $\mathcal{S}_G(\bv{a})$ and $\mathcal{S}_G(\bv{b})$, respectively.

Accordingly, to obtain an unbiased estimate for the inner product $\langle f(\bv{a}), g(\bv{b})\rangle$ for any scalar-valued functions applied entrywise to $\bv{a}$ and $\bv{b}$, we compute $\mathcal{T} = K_{\bv{a}} \cap K_{\bv{b}}$ and return:
\begin{align*}
      W_{\langle f(\bv{a}), g(\bv{b})\rangle} = \sum_{i \in \mathcal{T}} \frac{f(\bv{a}_i)\cdot g(\bv{b}_i)}{p_i}.
\end{align*}
Note that, since it returns the $m'$ smallest indices according to three different rank functions, the Priority Sampling procedure in \Cref{alg:corr_sketch_priority sampling} could return up to $3\cdot m'$ indices, 
in which case we should set $m' = m/3$ to obtain a global sketch  $\mathcal{S}_G(\bv{a})$ of size $m$. However, often there will be overlap between the indices that minimize these rank functions: for example, if $\bv{a}_i$ has a large magnitude, then we expect both $R_{\bv{a}}(i)$ and $R_{\bv{a}^2}(i)$ to be small. As a result, if we set  $m' = m/3$ the method often returns a sketch with fewer than $m$ index/value pairs, which is wasteful. To obtain a sketch whose size exactly matches our budget $m$, again we need to initialize $m'$ to $m/3$ and then increase the value of $m'$ until \Cref{alg:corr_sketch_priority sampling} selects exactly $m$ indices. This can be done via a standard binary search: we need only consider integer values of $m'$ between $m/3$ and $m$ (since setting $m'=m$ results in a sketch of size \emph{at least} $m$).

\balance
\end{document}